\let\epsilon\varepsilon
\let\phi\varphi
\let\emptyset\varnothing
\let\rho\varrho
\newtheorem{theorem}{Theorem}
\newtheorem{lemma}{Lemma}
\newtheorem{proposition}[lemma]{Proposition}
\newtheorem{corollary}[lemma]{Corollary}
\theoremstyle{definition}
\newtheorem{definition}{Definition}
\theoremstyle{remark}
\newtheorem{remark}{Remark}
\tikzset{->,>=stealth',
shorten >=1pt,shorten <=1pt,
auto,node distance=1.5cm,
every loop/.style={looseness=6},
initial text={},
every state/.style={inner sep=0.2mm, minimum size=0.5cm},
el/.style={font=\scriptsize},
every fit/.style={draw,densely dotted,rectangle,inner sep=2mm},
loopright/.style={loop,looseness=6,out=30, in=-30},
loopleft/.style={loop,looseness=6,out=210, in=150},
loopabove/.style={loop,looseness=6,out=120, in=60},
loopbelow/.style={loop,looseness=6,out=300, in=240},
}
\renewclass{\EXP}{EXPtime}
\renewclass{\NEXP}{NEXPtime}
\renewclass{\coNEXP}{coNEXPtime}
\renewclass{\EXPSPACE}{EXPspace}
\renewclass{\AP}{APtime}
\renewclass{\PSPACE}{Pspace}
\newcommand{\Obs}{\mathsf{Obs}}
\newcommand{\calB}{\mathcal{B}}
\newcommand{\calF}{\mathcal{F}}
\newcommand{\calI}{\mathcal{I}}
\newcommand{\tadam}{\mathcal{T}_\forall}
\newcommand{\teve}{\mathcal{T}_\exists}
\newcommand{\supp}{\mathsf{supp}}
\newcommand{\post}{\mathsf{post}}
\newcommand{\prop}{\mathsf{prop}}
\newcommand{\pprop}{\mathsf{pw\text{-}prop_\calI}}
\newcommand{\suc}[1]{\textsf{succ}_\mathcal{I}(#1)}
\newcommand{\Plays}{\mathsf{Plays}}
\newcommand{\Prefs}{\mathsf{Prefs}}
\newcommand{\uMP}{\underline{\mathsf{MP}}}
\newcommand{\oMP}{\overline{\mathsf{MP}}}
\newcommand{\st}{\mathrel{\mid}}
\newcommand{\ie}{\textit{i.e.}\xspace}
\newcommand{\eg}{\textit{e.g.}\xspace}
\renewcommand{\restriction}[2]{{#1}\mathord{\downharpoonright}_{#2}}
\newcommand{\reset}{\mathsf{reset}}
\newcommand{\out}{\mathsf{Out}}
\begin{document}

\title{Mean-payoff Games with Partial Observation\thanks{Work
partially supported by ERC Starting grant inVEST (FP7-279499).}}
\author{Paul Hunter, Arno Pauly, Guillermo A. P\'erez\thanks{F.R.S.-FNRS
	Aspirant and FWA post-doc fellow},
	and Jean-Fran\c{c}ois Raskin\\
D\'{e}partament d'Informatique, Universit\'{e} libre de Bruxelles\\
\texttt{\{{paul.hunter},
	{apauly},
	{guillermo.perez},
	{jraskin}\}
	{@ulb.ac.be}}}

\maketitle

\begin{abstract}
Mean-payoff games are important quantitative models for open reactive systems.
They have been widely studied as games of full observation. In this paper we
investigate the algorithmic properties of several sub-classes of mean-payoff
games where the players have asymmetric information about the state of the game.
These games are in general undecidable and not determined according to the
classical definition. We show that such games are determined under a more
general notion of winning strategy. We also consider mean-payoff games where the
winner can be determined by the winner of a finite cycle-forming game. This
yields several decidable classes of mean-payoff games of asymmetric information
that require only finite-memory strategies, including a generalization of
full-observation games where positional strategies are sufficient. We give an
exponential time algorithm for determining the winner of the latter.
\end{abstract}

\section{Introduction}
Mean-payoff games (MPGs) are two-player, infinite duration, turn-based games
played on finite edge-weighted graphs. The two players alternately move a token
around the graph; and one of the players (Eve) tries to maximize the (limit)
average weight of the edges traversed, whilst the other player (Adam) attempts
to minimize the average weight. Such games are particularly useful in the field
of verification of models of reactive systems, and the full-observation
versions of these games have been extensively
studied~\cite{em79,bsv04,cdh08,cdhr10}. One of the major open questions in the
field of verification is whether the following decision problem, known to be in
the intersection of the classes \NP~and \coNP~\cite{em79}\footnote{From results
in~\cite{zp96} and~\cite{jurdzinski98} it follows that the problem is
also in $\UP \cap \coUP$.}, can be solved in polynomial time: Given a threshold
$\nu$, does Eve have a strategy to ensure a mean-payoff value of at least $\nu$?

In game theory the concepts of partial and limited observation
indicate situations where players are uncertain about the state of
the game. In the context of verification games this partial knowledge is
reflected in one or both players being unable to determine the precise location
of the token amongst several equivalent states, and such games have also been
extensively studied~\cite{reif84,kv00,bd08,bcdhr08,ddgrt10}.
Adding partial observation to verification games results in an enormous increase
in complexity, both algorithmically and in terms of strategy synthesis.  For
example, it was shown in~\cite{ddgrt10} that for MPGs with partial observation,
when the mean payoff value is defined using $\limsup$, the analogue of the above
decision problem (\ie the threshold problem) is undecidable; and whilst
positional strategies suffice for MPGs with full observation, infinite memory
may be required. The first result of this paper is to show that this is also the
case when the mean payoff value is defined using the $\liminf$
operator, closing two open questions posed in~\cite{ddgrt10}.

These unfavourable results motivate the main investigation of this paper:
identifying classes of MPGs with partial observation where determining the
winner is decidable and where strategies with finite memory, possibly
positional, are sufficient.

To simplify our definitions and algorithmic results we initially consider a
restriction on the set of observations which we term \emph{limited observation}.
In games of limited observation the current observation contains only those
states consistent with the observable history, that is the observations are the
\emph{belief set of Eve} (see, \eg~\cite{cd12}). This is not too restrictive as
any MPG with partial observation can be realized as a game of limited
observation via a subset construction.  In Section~\ref{sec:pure-imp} we
consider the extension of our definitions to MPGs with partial observation via
this construction.

Our focus for the paper will be on games at the observation level, in particular
we are interested in \emph{observation-based strategies} for both players.
Whilst observation-based strategies for Eve are usual in the literature,
observation-based strategies for Adam have not, to the best of our knowledge,
been considered.  Such strategies are more advantageous for Adam as they
encompass several simultaneous concrete strategies.  Further, in games of
limited observation there is guaranteed to be at least one concrete strategy
consistent with an observation-based strategy.  Our second result is to show
that, although MPGs with partial observation are not determined under the usual
definition of (concrete) strategy, they are determined when Adam can use an
observation-based strategy.

In full-observation games, one aspect of MPGs that leads to simple (but not
quite efficient) decision procedures is their equivalence to finite
cycle-forming games. Such games are played as their infinite counterparts, however
when the token revisits a state the game is stopped. The winner is determined
by a finite analogue of the mean-payoff condition on the cycle now formed; that
is, Eve wins if the average weight of the edges traversed in the cycle exceeds a
given threshold. Ehrenfeucht and Mycielski~\cite{em79} and Bj\"orklund et
al.~\cite{bsv04}\footnote{A recent result of Aminof and Rubin~\cite{ar14}
corrects some errors in~\cite{bsv04}.} used this equivalence to show that
positional strategies are sufficient to win MPGs with full observation and
this leads to an $\NP \cap \coNP$ procedure for determining the winner.
Critically, a winning strategy in the finite game translates directly to a
winning strategy in the MPG, so such games are especially useful for strategy
synthesis.

We extend this idea to games of partial observation by introducing a finite,
full-observation, cycle-forming game played at the observation level.  That
is, the game finishes when an observation is revisited (though not necessarily
the first time).  In this reachability game winning strategies can be translated
to finite-memory winning strategies in the MPG. This leads to a large, natural
subclass of MPGs with partial observation, \emph{forcibly terminating} games,
where determining the winner is decidable and finite-memory observation-based
strategies suffice.

Unfortunately, recognizing if an MPG is a member of this class is undecidable,
and although determining the winner is decidable, we show that this problem is
complete (under polynomial-time reductions) for the class of all decidable
problems. Motivated by these negative algorithmic results, we investigate two
natural refinements of this class for which winner determination and class
membership are decidable. The first, \emph{forcibly first abstract cycle} games
(forcibly FAC games, for short), is the natural class of games obtained when our
cycle-forming game is restricted to simple cycles. Unlike the full-observation
case, we show that winning strategies in this finite simple
cycle-forming game may still require memory, though this memory is at most
exponential in the size of the game. The second refinement, \emph{first abstract
cycle} (FAC) games, is a further structural refinement that guarantees a winner
in the simple cycle-forming game. We show that in this class of games
positional observation-based strategies suffice.

\begin{table}[t]
\begin{center}
	\begin{tabular}{ |l|l|l|l|l| }
		\hline
		\multicolumn{2}{|l|}{} & Sufficient & Class & Winner \\
		\multicolumn{2}{|l|}{} & memory & membership & determination \\
		\hline
		\hline
		\multicolumn{2}{|l|}{Forcibly terminating} & Finite
		& Undecidable & \R-c \\
		\multicolumn{2}{|l|}{} & (Thm.~\ref{thm:fin-mem-det})
		& (Thm.~\ref{thm:undec-ft}) & (Thm.~\ref{thm:r-complete}) \\
		\hline
		Forc. & limited obs. & Exponential & \PSPACE-c & \PSPACE-c \\
		FAC & & (Thm.~\ref{thm:expMemory}) & (Thm.~\ref{thm:isadeqpure})
		& (Thm.~\ref{thm:apwd}) \\
		\cline{2-5}
		&  partial obs. & Doubly & \NEXP-h,
		& \EXP-c \\
		& & exponential & in \EXPSPACE & (Thm.~\ref{thm:pureWin}) \\
		& & (Thms.~\ref{thm:expMemory},~\ref{thm:exp-mem-po}) & (Thm.~\ref{thm:isimppure}) & \\
		\hline
		FAC & limited obs. & Positional & \coNP-c & $\NP \cap \coNP$ \\
		& & (Thm.~\ref{thm:posDet}) & (Thm.~\ref{thm:pureCM})
		& (Thm.~\ref{thm:windet-fac}) \\
		\cline{2-5}
		& partial obs. & Exponential & \coNEXP-c & \EXP-c \\
		&  & (Thms.~\ref{thm:posDet},~\ref{thm:exp-mem-po}) &
		(Thm.~\ref{thm:isimppure}) & (Thm.~\ref{thm:pureWin})\\
		\hline
	\end{tabular}
\caption{Summary of results for the classes of games studied.}
\end{center}
\label{table:results}
\end{table}
The sub-classes of MPGs with limited observation we study
then give rise to sub-classes of MPGs with partial observation. For the class
membership problem we show there is, as expected, an exponential blow-up in the
complexity, however for the problem of determining the winner the algorithmic
cost is significantly better.

Table~\ref{table:results} summarizes the results of this paper. An extended abstract
of this work appeared in~\cite{hpr14}.

\section{Preliminaries}
\paragraph*{Mean-payoff games}
A \emph{mean-payoff game (MPG) with partial observation} is a tuple $G = (
Q, q_I, \Sigma, \Delta, w, \Obs )$, where $Q$ is a finite set of states,
$q_I \in Q$ is the initial state, $\Sigma$ is a finite set of action symbols, $\Delta
\subseteq Q \times \Sigma \times Q$ is the transition relation, $w:\Delta \to
\mathbb{Z}$ is the weight function, and $\Obs \subset 2^Q$ is a partition of $Q$
into observations. We assume $\Delta$ is total, that is,
for every $(q,\sigma) \in Q \times \Sigma$ there exists $q' \in Q$ such that
$(q,\sigma,q') \in \Delta$.  We say that $G$ is a \emph{mean-payoff game with
limited observation} if additionally, $\Obs$ satisfies the following:
\begin{enumerate}[nolistsep,label=(\arabic*)]
	\item $\{q_I\} \in \Obs$, and
	\item For each $(o,\sigma) \in \Obs \times \Sigma$ the set $\{q' \in Q \st
		\exists q \in o \textrm{ and }(q,\sigma,q') \in
		\Delta\}$ is a union of elements of $\Obs$.
\end{enumerate}
Note that condition (2) is equivalent to saying that if $q \in o$, $q' \in o'$
and $(q,\sigma,q') \in \Delta$ then for every $r' \in o'$ there exists $r \in o$
such that $(r,\sigma,r') \in \Delta$. If every element of $\Obs$ is a singleton,
then we say $G$ is a \emph{mean-payoff game with full observation}. For
simplicity, we denote by $\post_\sigma(s) = \{q' \in Q \st \exists q \in s :
(q, \sigma, q') \in \Delta \}$ the set of $\sigma$-successors of a set of states
$s \subseteq Q$.

Figure~\ref{fig:purendet} gives an example of an MPG with limited observation,
with $\Sigma = \{a,b\}$ and $\Obs = \{\{q_0\},\{q_1,q_2\},\{q_3\}\}$. In this
work, unless explicitly stated otherwise, we depict states from an MPG with
partial observation as circles and transitions as arrows labelled by an
action-weight pair: $\sigma,w$. Observations are represented by dashed boxes.

\paragraph*{Abstract \& concrete paths} A \emph{concrete path} in an MPG with
partial observation is a sequence $q_0 \sigma_0 q_1 \sigma_1 \dots$ where for
all $i \ge 0$ we have $q_i \in Q$, $\sigma_i \in \Sigma$ and
$(q_i,\sigma_i,q_{i+1}) \in \Delta$. An \emph{abstract path} is a sequence $o_0
\sigma_0 o_1 \sigma_1 \dots$ where $o_i \in \Obs$, $\sigma_i \in \Sigma$ and for
all $i \ge 0$ there exists $q_i \in o_i$ and $q_{i+1} \in o_{i+1}$ with $(q_i,
\sigma_i, q_{i+1}) \in \Delta$. Given an abstract path $\psi$, let
$\gamma(\psi)$ be the (possibly empty) set of concrete paths that agree with the
observation and action sequence. In other words $\gamma(\psi) = \{ q_0 \sigma_0
q_1 \sigma_1 \dots \st \forall i \ge 0 : q_i \in o_i \text{ and }
(q_i, \sigma, q_{i+1}) \in \Delta\}$. Note that in games of limited observation
this set is never empty. Also, given an abstract (respectively
concrete) path $\psi$, let $\psi[..n]$ represent the prefix of $\psi$ up
to the $(n+1)$-th observation (state), which we express as
$\psi[n]$; similarly, we denote by $\psi[\ell..]$ the suffix of
$\psi$ starting from the $(\ell+1)$-th observation (state) and by $\psi[\ell..n]$
the finite sub-sequence starting and ending in the aforementioned
locations.

\paragraph*{Cycles}
An \emph{abstract (respectively concrete) cycle} is an abstract
(concrete) path $\chi = o_0 \sigma_0 \dots o_n$ where $o_0 = o_n$.
We say $\chi$ is \emph{simple} if $o_j \neq o_i$ for $0 \leq i < j <n$.
Given $k \in \mathbb{N}$ define $\chi^k$ to be the abstract (concrete) cycle
obtained by traversing $k$ times $\chi$. That is, $\chi^k = o_0' \sigma_0'
\dots o_{nk}'$ where for all $0 \le j \le nk$ we have that
$o_j' = o_{j \pmod{n}}$ and $\sigma_j' = \sigma_{j \pmod{n}}$.  A \emph{cyclic
permutation} of $\chi$ is an abstract (concrete) cycle $o_0' \sigma_0' \dots
o_n'$ such that $o_j' = o_{j+k \pmod{n}}$ and $\sigma_j' = \sigma_{j+k
\pmod{n}}$ for some $k \in \mathbb{N}$.  If $\chi' = o_0' \sigma_0' \dots o_m'$ is a
cycle with $o_0' = o_i$ for some $0 \le i < n$, the \emph{interleaving} of $\chi$
and $\chi'$ is the cycle $o_0 \sigma_0 \dots o_i \sigma_0' \dots o_m'
\sigma_i \dots o_n$.

\paragraph*{The mean payoff}
Given an infinite concrete path $\pi = q_0\sigma_0 q_1 \sigma_1 \dots$, the
\emph{payoff} up to the $(n+1)$-th element is given by
\[
	w(\pi[..n]) = \sum\limits_{i=0}^{n-1} w(q_i, \sigma_i, q_{i+1}).
\]
If $\pi$ is infinite, we define two \emph{mean payoff} values $\uMP$
and $\oMP$ as:

\[
	\uMP(\pi) = \liminf\limits_{n \rightarrow \infty} \frac{1}{n}
	w(\pi[..n])
	\qquad
	\oMP(\pi) = \limsup\limits_{n \rightarrow \infty} \frac{1}{n}
	w(\pi[..n])
\]

\paragraph*{Plays \& strategies} A play in an MPG with partial observation
$G$ is an infinite abstract path starting at $o_I \in \Obs$ where $q_I \in o_I$.
Denote by $\Plays(G)$ the set of all plays and by $\Prefs(G)$ the set of all
finite prefixes of such plays ending in an observation. Let $\gamma(\Plays(G))$
be the set of concrete paths of all plays in the game, and $\gamma(\Prefs(G))$ be
the set of all finite prefixes of all concrete paths.

An \emph{observation-based strategy for Eve} is a function from finite prefixes of
plays to actions, \ie $\lambda_\exists : \Prefs(G) \to \Sigma$. A play $\psi =
o_0 \sigma_0 o_1 \sigma_1 \dots$ is \emph{consistent} with $\lambda_\exists$ if
$\sigma_i = \lambda_\exists(\psi[..i])$ for all $i \ge 0$.  An
\emph{observation-based strategy for Adam} is a function $\lambda_\forall :
\Prefs(G) \times \Sigma \to \Obs$ such that for any prefix $\rho = o_0 \sigma_0
\dots o_n \in \Prefs(G)$ and action $\sigma$, $\lambda_\forall(\rho, \sigma) \cap
\post_\sigma(\rho[n]) \neq \emptyset$. A play $\psi = o_0 \sigma_0 o_1 \sigma_1
\dots$ is consistent with $\lambda_\forall$ if $o_{i+1} = \lambda_\forall(\psi[..i],
\sigma_i)$ for all $i\ge 0$. A \emph{concrete strategy for Adam} is a
function $\mu_\forall : \gamma(\Prefs(G)) \times \Sigma \to Q$ such that for any
concrete prefix $\pi = q_0 \sigma_0 \dots q_n \in \gamma(\Prefs(G))$ and action
$\sigma$, $\mu_\forall(\pi, \sigma) \in \post_\sigma(\{\pi[n]\})$. A play $\psi
= o_0 \sigma_0 o_1 \sigma_1 \dots$ is consistent with $\mu_\forall$ if there
exists a concrete path $\pi \in \gamma(\psi)$ such that $\mu_\forall(\pi[..i],
\sigma_i) = \pi[i+1]$ for all $i \ge 0$.

An observation-based strategy for Eve $\lambda_{\exists}$ can be encoded into a
\emph{finite Mealy machine} if there is a finite set $M$, an element $m_0 \in
M$, and functions $\alpha_u:M \times \Obs \to M$ and $\alpha_o:M \times \Obs \to
\Sigma$ such that for any play prefix $\psi = o_0 \sigma_0 \dots o_n$ we have
$\sigma_i = \lambda_{\exists}(\psi) = \alpha_o(m_n,o_n)$, where $m_n$ is defined
inductively by $m_{i+1} = \alpha_u(m_i,o_i)$ for $i\geq 0$. Similarly, an
observation-based strategy for Adam $\lambda_{\forall}$ can be encoded into a
finite Mealy machine if there is a finite set $M$, an element $m_0 \in M$,
and functions $\alpha_u:M \times \Obs \times \Sigma \to M$ and $\alpha_o:M
\times \Obs \times \Sigma \to \Obs$ such that for any play prefix ending in an
action $\psi=o_0 \sigma_0 \dots o_n \sigma_n$, we have $o_{i+1} =
\lambda_{\forall}(\psi) = \alpha_o(m_n,o_n,\sigma_n)$, where $m_n$ is defined
inductively by $m_{i+1} = \alpha_u(m_i,o_i,\sigma_i)$ for $i \ge 0$. In both
cases we say the observation-based strategy has \emph{memory} $|M|$.  An
observation-based strategy (for either player) with memory $1$ is
\emph{positional}.

\begin{remark}\label{rem:lim-obs-strats}
Note that for any concrete strategy $\mu$ for Adam there is a unique
observation-based strategy $\lambda_\mu$ for him such that all plays consistent
with $\mu$ are consistent with $\lambda_\mu$. Conversely there may be several,
but possibly no, concrete strategies that correspond to a single
observation-based strategy.  In games of limited observation there is guaranteed
to be at least one concrete strategy for every observation-based strategy.
\end{remark}

\paragraph*{Winning an MPG}
Given a threshold $\nu \in \mathbb{R}$, we say a play $\psi$ is \emph{winning
for Eve} if $\uMP(\pi) \geq \nu$ for all concrete paths $\pi \in
\gamma(\psi)$, otherwise it is \emph{winning for Adam}. Given $\nu$, one can
construct an equivalent game in which Eve wins if and only if
$\uMP(\pi) \geq 0$ if and only if she wins the original game, so
without loss of generality we will assume $\nu = 0$.  A strategy $\lambda$ is
\emph{winning} for a player if all plays consistent with $\lambda$ are winning
for that player. We say that a player \emph{wins} $G$ if (s)he has a winning
strategy.

\begin{remark}
	It was shown in~\cite{ddgrt10} that in MPGs with partial observation
	where finite-memory strategies suffice Eve wins the $\uMP$
	version of the game if and only if she wins the $\oMP$ version.
	As the majority of games considered in this paper only require finite
	memory, we can take either definition. For simplicity and consistency
	with Section~\ref{sec:liminf} we will use $\uMP$.
\end{remark}

\paragraph*{Non-zero-sum reachability games}
A \emph{reachability game} $G = ( Q, q_I, \Sigma, \Delta, \teve, \tadam
)$ is a tuple where $Q$ is a (not necessarily finite) set of states;
$\Sigma$ is a finite set of actions; $\Delta \subseteq Q \times \Sigma
\times Q$ is a finitary transition function (that is, for any $q \in Q$ and
$\sigma \in \Sigma$ there are finitely many $q' \in Q$ such that $(q,\sigma,q') \in
\Delta$); $q_I \in Q$ is the initial state; and $\teve,\tadam \subseteq Q$
are the terminating states. The game is played as follows. We place a token on
$q_I \in Q$ and start the game. Eve chooses an action $\sigma \in \Sigma$ and
Adam chooses a $\sigma$-successor of the current state as determined by
$\Delta$. The process is repeated until the game reaches a state in $\teve$ or
$\tadam$. In the first case we declare Eve as the winner whereas the latter
corresponds to Adam winning the game. Notice that the game, in general, might
not terminate, in which case neither player wins. Notions of plays and
strategies in the reachability game follow the definitions for mean-payoff
games, however we extend plays to include finite paths that end in $\teve \cup
\tadam$.

\section{Undecidability of Liminf Games}\label{sec:liminf}
Mean-payoff games with partial observation were extensively studied
in~\cite{ddgrt10}. In that paper the authors show that, with the mean payoff
condition defined using $\uMP$ and $>$, determining whether Eve has a winning
observation-based strategy is undecidable and when defined using $\oMP$ and
$\ge$, strategies with infinite memory may be necessary. The analogous, and more
general, questions using $\uMP$ and $\ge$ were left open. In this section we
answer these questions, showing that both results still hold.

\begin{proposition}
	There exist MPGs with partial observation for which Eve requires
	infinite-memory observation-based strategies to ensure $\uMP
	\geq 0$.
\end{proposition}
\begin{proof}
	Consider the game $G$ in Figure~\ref{fig:inf-mem}.  We will show that
	Eve has an infinite-memory observation-based strategy to win this game,
	but no finite-memory observation-based strategy.

	Consider the observation-based
	strategy that plays (regardless of the witnessed
	observations) $ab a^2 b a^3 b a^4 b \dots$ As $b$ is played infinitely
	often by this strategy, the only concrete paths consistent with it
	are $ \pi = q_0 q_1^\omega \text{ and } \pi = q_0 \cdot q_1^k
	\cdot q_2^l \cdot q_3^\omega$ for non-negative integers $k,l$.  In the
	first case we see that $\frac{1}{n} w(\pi[..n]) \to 0$ as $n \to
	\infty$, and for all paths matching the second case we have $\frac{1}{n}
	w(\pi[..n]) \to 1$ as $n \to \infty$. Thus $\uMP \geq 0$ and so the
	strategy is winning.

	Now suppose Eve has a finite-memory observation-based winning strategy
	for $G$.
	We will define a concrete strategy for Adam such that a concrete path
	with negative mean payoff and consistent with both strategies exists.
	The strategy for Adam is such that the game remains in $\{q_1,q_2\}$.
	The resulting play can now be seen as choosing a word $w \in
	\{a,b\}^\omega$, but as Eve's strategy has finite memory, this word must
	be ultimately periodic, that is $w = w_0 \cdot v^\omega$ for words $w_0,
	v \in \{a,b\}^*$. We now describe the concrete strategy for Adam.
	If $w$ contains finitely many $b$'s then Adam moves to $q_2$
	on the final $b$ and $\frac{1}{n} w(\pi[..n]) \to -1$ as $n \to \infty$.
	Otherwise Adam remains in $q_1$ and $\frac{1}{n} w(\pi[..n]) \to
	-\frac{m}{|v|}$ as $n\to \infty$ where $m$ is the number of $b$'s in
	$v$.
\end{proof}

\begin{theorem}
\label{thm:liminf}
	Let $G$ be an MPG with partial observation. Determining whether Eve has
	an observation-based strategy to ensure $\uMP \ge 0$ is
	undecidable.
\end{theorem}

The proof of this result is based on a similar construction to the one used in
the proof of Proposition~\ref{pro:superThm}, so we defer it to
Section~\ref{sec:proof-liminf}.

\section{Observable Determinacy}

\begin{figure}
\begin{minipage}[t]{.49\linewidth}
\begin{center}
\begin{tikzpicture}[node distance=1.2cm]
	\node[state,initial,fill=blue!20](A){$q_0$};
	\node[state,fill=yellow!30](B)[right=of A, yshift=1cm]{$q_1$};
	\node[state,fill=yellow!30](C)[right=of A, yshift=-1cm]{$q_2$};
	\node[state,fill=green!20](D)[right=of B, yshift=-1cm]{$q_3$};

	\node[fit=(A)]{};
	\node[fit=(D)]{};
	\node[fit=(B) (C)]{};

	\path
	(B) edge[bend right] node[el,swap] {$a$,-1} (A)
	(C) edge[bend left] node[el] {$b$,-1} (A)
	(A) edge[bend right] node[el,pos=0.6]{$\Sigma$,-1} (B)
	(A) edge[bend left] node[el,swap,pos=0.6] {$\Sigma$,-1} (C)
	(B) edge node[el] {$b$,-1} (D)
	(C) edge node[el, swap] {$a$,-1} (D)
	(D) edge[loop, out=30, in=-30,looseness=6] node[el] {$\Sigma$,+1} (D);
\end{tikzpicture}
\caption{A non-determined MPG with limited observation ($\Sigma = \{a,b\}$)}
\label{fig:purendet}
\end{center}
\end{minipage}
\hfill
\begin{minipage}[t]{.49\linewidth}
\begin{center}
\begin{tikzpicture}[node distance=1.2cm]
	\node[state,initial,fill=blue!20](A){ $q_0$};
	\node[state,fill=yellow!30](B)[right=of A, yshift=1cm]{$q_1$};
	\node[state,fill=yellow!30](C)[right=of A, yshift=-1cm]{$q_2$};
	\node[state,fill=green!20](D)[right=of B, yshift=-1cm]{$q_3$};

	\node[fit=(A)]{};
	\node[fit=(D)]{};
	\node[fit=(B) (C)]{};

	\path
	(B) edge[loop, out=30, in=-30,looseness=6] node[el,pos=0.45] {$a$,0}
	node[el,pos=0.50]{$b$,-1} (B)
	(C) edge[loop, out=30, in=-30,looseness=6] node[el] {$a$,-1} (C)
	(A) edge node[el,pos=0.6]{$\Sigma$,0} (B)
	(A) edge node[el,swap,pos=0.6] {$\Sigma$,0} (C)
	(B) edge node[el,swap,inner sep=0.7] {$b$,-1} (C)
	(C) edge[bend left] node[el,pos=0.75] {$b$,0} (D)
	(D) edge[loop, out=30, in=-30,looseness=6] node[el] {$\Sigma$,+1} (D);
\end{tikzpicture}
\caption{A limited-observation MPG in which Eve requires infinite memory
	to win}
\label{fig:inf-mem}
\end{center}
\end{minipage}
\end{figure}

One of the key features of MPGs with full observation is that they are
determined, that is, it is always the case that one player has a winning
strategy. This is not true in games of partial or limited observation as can be
seen in Figure~\ref{fig:purendet}. Any concrete strategy of Adam reveals to Eve
the successor of $q_0$ and she can use this information to play to $q_3$.
Conversely Adam can defeat any strategy of Eve by playing to whichever of $q_1$
or $q_2$ means the play returns to $q_0$ on Eve's next choice (recall Eve cannot
distinguish $q_1$ and $q_2$ and must therefore choose an action to apply to the
observation $\{q_1,q_2\}$). This strategy of Adam can be encoded as an
observation-based strategy: ``from $\{q_1,q_2\}$ with action $a$ or $b$, play to
$\{q_0\}$''. It transpires that any
such counter-play by Adam is always encodable as an observable strategy. We
formalize these claims in the sequel.

Let us recall the definition of the Borel hierarchy of sets. For a detailed
description of both the hierarchy and its properties we refer the reader
to~\cite{kechris95}.

\begin{definition}[Borel hierarchy \& (co-)Suslin sets]
	For a (possibly infinite) alphabet $A$, let $A^\omega$ and $A^*$ denote
	the set of infinite and finite words on $A$, respectively. The
	\emph{Borel hierarchy} is inductively defined as follows.
	\begin{itemize}[nolistsep]
		\item $\Sigma_1^0 = \{ W \cdot A^\omega \st W \subseteq A^*\}$
			is the set of open sets.
		\item For all $n \ge 1$, $\Pi_n^0 = \{A^\omega \setminus L \st L
			\in \Sigma_n^0\}$ consists of the complement of sets in
			$\Sigma_n^0$.
		\item For all $n \ge 1$, $\Sigma_{n+1}^0 = \{ \bigcup_{i \in
				\mathbb{N}} L_i \st \forall i \in \mathbb{N} :
			L_i \in \Pi_n^0\}$ is the set obtained by countable
			unions of sets in $\Pi_n^0$.
		\item Finally, we write $\Delta_n^0 = \Sigma_n^0 \cap \Pi_n^0$,
			for all $n \ge 0$.
	\end{itemize}
	The first level of the \emph{Projective hierarchy} consists of
	$\Sigma_1^1$ (Suslin) sets, which are those whose preimage is a Borel
	set, \ie all sets that can be defined as a projection of a Borel set,
	and $\Pi_1^ 1$ (co-Suslin) sets: those sets whose complement is the
	image of a Borel set.
\end{definition}

\subsection{(Full-observation) Determinacy}
Let us first consider MPGs with full observation and recall the well-known
determinacy result that applies to them. Note that in games with full
observation a play in fact corresponds to a unique infinite concrete path.
Furthermore, the distinction between observation-based and concrete strategies
is unnecessary. For clarity, in the remaining of this section we speak of
\emph{concrete plays} in full-observation games and \emph{abstract plays} in
partial-observation games. Given a strategy $\lambda_\exists$ for Eve and a
strategy $\lambda_\forall$ for Adam in an MPG, we denote by
$\out(\lambda_\exists,\lambda_\forall)$ the unique play consistent with both
strategies.

\begin{proposition}
	In every MPG with full observation exactly one of the following
	assertions holds.
	\begin{enumerate}[nolistsep]
		\item There exists a strategy $\lambda_\exists$ for Eve such
			that, for all strategies $\lambda_\forall$ for Adam, the
			concrete
			play $\out(\lambda_\exists,\lambda_\forall)$ is winning
			for Eve.
		\item There exists a strategy $\lambda_\forall$ for Adam such
			that, for all strategies $\lambda_\exists$ for Eve, the
			concrete
			play $\out(\lambda_\exists,\lambda_\forall)$ is winning
			for Adam.
	\end{enumerate}
\end{proposition}
The proof of the above determinacy result follows from the fact that the set of
winning plays in any MPG is a Borel set. More precisely,
the statement that the limit inferior of a given sequence $(a_n)_{n \in
\mathbb{N}}$ is non-negative is a $\Pi^0_3$-statement (for every $k$
there exists a $t$ such that for all $n \geq t$ \ $a_n \geq - 2^{-k}$).
Similarly, for the limit superior we get a $\Sigma^0_2$-statement. Hence, by
Borel determinacy~\cite{martin75}, all mean-payoff games with full observation
are determined.

\subsection{Determinacy \& partial-observation games}
In MPGs with partial observation, authors usually focus on observation-based
strategies for Eve and concrete strategies for Adam. Using this asymmetric point
of view, we will now state the well-known non-determinacy of games with partial
observation. Given an observation-based strategy $\lambda_\exists$ for Eve and a
concrete strategy $\mu_\forall$ for Adam in an MPG with partial observation, we
denote by $\out(\lambda_\exists, \mu_\forall)$ the unique abstract play consistent
with both strategies. Remark that here we can no longer assume a play is one
unique concrete path. Furthermore, recall that an abstract play is winning (for
a player) if all its concretizations are winning (for the player)
\begin{proposition}
	There are MPGs with limited observation for which none of the following
	assertions hold.
	\begin{enumerate}[nolistsep]
		\item There exists an observation-based strategy
			$\lambda_\exists$ for Eve such that, for all concrete
			strategies $\mu_\forall$ for Adam, the abstract play
			$\out(\lambda_\exists,\mu_\forall)$ is winning for Eve.
		\item There exists a concrete strategy $\mu_\forall$ for Adam
			such that, for all observation-based strategies
			$\lambda_\exists$ for Eve, the abstract play
			$\out(\lambda_\exists,\mu_\forall)$ is winning for Adam.
	\end{enumerate}
\end{proposition}
One such game is shown in Figure~\ref{fig:purendet}. Intuitively,
the second statement is too strong for it to be implied by the negation of
the first. Indeed, the game remains asymmetric---in favour of Adam---just
as long as Eve does not know the concrete path corresponding to the current play
prefix.  This is not the case in the second statement because of the order of
quantifications over the strategies and the fact Adam is using a concrete
strategy. That is, since she knows his concrete strategy and the current play
prefix, she knows the current concrete state of the game as well.

We presently show that if one considers the ``more symmetrical'' statements in
which both players use observation-based strategies, then we recover
determinacy.
\begin{theorem}[Observable determinacy]
\label{thm:det}
	In every MPG with limited observation (defined with $\uMP$ or $\oMP$)
	exactly one of the following assertions holds.
	\begin{enumerate}[nolistsep]
		\item There exists an observation-based strategy
			$\lambda_\exists$ for Eve such that, for all
			observation-based strategies $\lambda_\forall$ for Adam,
			the abstract play
			$\out(\lambda_\exists,\lambda_\forall)$ is winning for
			Eve.
		\item There exists an observation-based strategy
			$\lambda_\forall$ for Adam such that, for all
			observation-based strategies $\lambda_\exists$ for Eve,
			the abstract play
			$\out(\lambda_\exists,\lambda_\forall)$ is winning for
			Adam.
	\end{enumerate}
\end{theorem}
In what follows we will first show how to construct a non-deterministic
mean-payoff automaton that recognizes as its language the set of all concrete
plays that are
winning for Adam in a given MPG with limited observation. We then show
that the language of the automaton is a Borel set.  The result will thus follow
from Lemma \ref{lemma:gamestoautomata}, Corollaries \ref{corr:automatasigma4}
and~\ref{cor:limsup-sigma4}, the fact that Borel sets are closed under
complement, and Borel determinacy~\cite{martin75}.

Determinacy usually enables to simplify proofs.
Without any sort of determinacy, game reductions
can become tedious and confusing. However, with determinacy, we can simply
transfer winning strategies for both players between games and that directly
implies both players win in one game if and only if they win in the game we reduce
to (from). We remark that although
our results in Section~\ref{sec:stratTransfer} already imply
that the games we consider from then onwards are observably determined, the
above result is more general. That is to say, all partial-observation MPGs which
do not fit into the classes we consider later in this work, are observably
determined.
It is also worth noting that observation-based strategies for Adam only
really make sense for games with limited observation, since in a general
partial-observation game he may be able to choose an observation which yields an
abstract path with an empty set of concretizations. (The latter is not possible
in a limited-observation game.) Since, for a given partial-observation game, the
equivalent limited-observation game may be of size exponential w.r.t. to the
original game, these kind of determinacy results may be useful in instances
where one is interested in decidability of game-related problems but not
necessarily when interested in establishing complexity bounds.

\subsection{Borelness of losing plays}
In~\cite{cd10} the authors consider parity objectives and
show that, given a game with partial observation, one
can construct a non-deterministic automaton (with the negation of the game's
objective as acceptance condition) that recognizes the set of plays that are
winning for Adam. We adapt their construction for MPGs. Let $G = ( Q,
q_I,\Sigma,\Delta,w,\Obs )$ be a limited-observation liminf
(respectively, limsup) MPG. We construct a \emph{mean-payoff automaton}
$\mathcal{A} = ( Q, q_I,A,T,c )$ where:
\begin{itemize}[nolistsep]
	\item $A = \Sigma \times \Obs$ is the alphabet of the automaton,
	\item $T = \{ (p,(\sigma,o),q) \st (p,\sigma,q) \in \Delta \land q \in
		o\}$ is the transition relation, and
	\item $c$ is a weight function such that, $(p,(\sigma,o),q) \mapsto
		-w(p,\sigma,q)$.
\end{itemize}
A \emph{run} of $\mathcal{A}$ over an infinite word $\alpha = a_0 a_1 \dots \in
A^\omega$ is an infinite sequence $\rho = q_0 a_0 q_1 a_1 \dots$ such that $q_0
= q_I$ and $(q_i,a_i,q_{i+1}) \in T$ for all $i \ge 0$. We say $\rho$ is
\emph{accepting} if the limit superior (resp. limit inferior) of the sequence
$(c(q_i,a_i,q_{i+1}))_{i \in \mathbb{N}}$ is strictly positive.  Depending on
its acceptance condition, we say the constructed machine is a limsup
(resp. liminf) mean-payoff automaton. Finally, the \emph{language} of a
mean-payoff automaton \emph{recognizes} is the set $\{ \sigma_0 o_0 \sigma_1 o_1
\dots \st \text{ there is an accepting run of } \mathcal{A} \text{ over }
(\sigma_0,o_0) (\sigma_1,o_1) \dots \}$.

Clearly, if we write $\mathcal{L}_\mathcal{A}$ for
the language of an automaton $\mathcal{A}$ constructed for an MPG
with limited observation $G = ( Q,q_I,\Sigma,\Delta,w,\Obs )$,
then the set $\{q_I\} \cdot \mathcal{L}_A \subseteq \Plays(G)$
is the set of all plays in $G$ which are \textbf{not winning for
Eve}. Intuitively, $\mathcal{A}$ receives the choice of action $\sigma$ for Eve
and observation $o$ for Adam and then ``guesses'' the actual state chosen by a
concrete strategy for Adam, thus constructing all concretizations of a play in
parallel and accepting if one of them is losing for Eve (\ie, winning for
Adam).
\begin{lemma}
\label{lemma:gamestoautomata}
The set of winning plays in a limited-observation MPG is recognizable by a
non-deterministic mean-payoff automaton.
\end{lemma}

We will now show that the language of any non-deterministic mean-payoff
automaton is a Borel set.

\paragraph*{Liminf mean-payoff automata}
Recall that the statement that the limit inferior of a given sequence is a
$\Pi^0_3$-statement. Thus, any set recognized by a deterministic liminf
mean-payoff automaton is $\Sigma^0_4$. The jump from $\Pi^0_3$ is due to the
strictness of the inequality (if there exists $b > 0$ such that
$\uMP \ge b$, then $\uMP > 0$). Moving to non-deterministic liminf
mean-payoff automata, however, adds an existential quantification over all
runs---and hence looks like it could go as high as $\Sigma^1_1$. (Whether or not all
games with $\Sigma^1_1$ winning-play sets are determined is independent
of \textrm{ZFC}. A positive answer follows, \eg, from the existence of a
measurable cardinal~\cite{ms88}.) However, in the following we shall see that
non-deterministic liminf mean-payoff automata still only recognize
$\Sigma^0_4$-sets.

\begin{proposition}
\label{theo:automatapi3}
The following are equivalent for a non-deterministic liminf mean-payoff automaton:
\begin{enumerate}[nolistsep]
\item There exists a run over $\alpha$ with non-negative liminf mean-payoff.
\item For any $k$ there exists a run $p_k$ over $\alpha$ and a position $t_k \in
	\mathbb{N}$ such
	that the mean payoff along $p_k$ never falls below $-2^{-k}$ after
	position
	$t_k \in \mathbb{N}$.
\end{enumerate}
\end{proposition}

Essentially, the difference between (1.) and (2.) is that the existential
quantifier over the runs is moved inwards. In particular, it is obvious that
(1.) implies (2.), but the converse direction is non-trivial. The basic idea of
the proof is that we construct a new run $p$ from the runs $p_k$ by always
following some run for some time, and then switching to a run for higher $k$,
and so on. We are faced with two problems: We can only switch from a run to
another if they are at the same state of the automaton at the same time, so we
might get \emph{stuck} in a run which never meets another run for higher $k$.
Moreover, a run $p_k$ could at some position $t$ have much higher current mean
payoff than a run $p_{k'}$ with $k > k'$, and proceed to lose a lot of
payoff---which $p_{k'}$ could not afford.

Thus, in order to construct our run $p$, we need to make sure that we always
have the option available to switch to a suitable run for higher $k$ at some
position where the two runs have very similar current mean payoff. The existence of
suitable collections will be proven by iterative applications of Ramsey's
theorem:

\begin{theorem}[Infinite Ramsey's theorem]
Let $\mathcal{P}(\mathbb{N})_r$ denote the set of $r$-element subsets of
$\mathbb{N}$. Then for any \emph{colouring function}
$c : \mathcal{P}(\mathbb{N})_r \to \{0,1,\dots,\kappa\}$, where $\kappa \in \mathbb{N}$,
there exists an infinite subset $H \subseteq \mathbb{N}$ such that
for any two $A, B \in \mathcal{P}(\mathbb{N})_r$ with $A \subseteq H$, $B
\subseteq H$ we find that $c(A) = c(B)$. Such an $H$ is called $c$-homogenous.
\end{theorem}

\begin{proof}[Proof of Proposition~\ref{theo:automatapi3}]
Assume that for any $k$ there exists a run $p_k$ and a position $t_k$ such that the
mean payoff along $p_k$ never falls below $-2^{-k}$ after position $t_k$. Let
states in the automaton be labelled $0$ to $n$. W.l.o.g., assume that the payoff
values are from $[-1,1]$. For $x \in [-1,1]$ and $\ell \in \mathbb{N}$, let
$b_{x,\ell} := \lceil 2^\ell(x+1)\rceil$. Note that $b_{x,\ell} \in
\{0,\ldots,2^{\ell+1}\}$. Let $\langle \ , \ \rangle : \mathbb{N} \times \mathbb{N}
\to \mathbb{N}$ denote a standard pairing function, \ie an encoding of a pair of
natural numbers into one single natural number.

To use Ramsey's theorem, one must colour subsets of $\mathbb{N}$, not tuples.
Since we will use it on sets $\{k,t\}$ with $k$ the index of a run and $t$ a
position, we must somehow decide which number in a given set of size $2$ is the run
index and which one is the position. Since for any position $t \ge t_k$
the mean payoff of $p_k$ never falls below $-2^{-k}$ after position $t$, the larger
number in a pair will always be understood as the position while the smaller will be
the run index.

We iteratively define colourings $c_i$ of $2$-element subsets of $\mathbb{N}$,
to which we apply Ramsey's theorem in order to obtain $c_i$-homogenous sets
$H_i$ and derived infinite sets $S_i$. For all $k < t$, let $c_0(\{k,t\}) :=
\langle v, b_{x, 1}\rangle$ where $v$ is the state the run $p_k$ is in at position
$t$, and $x$ is the current mean payoff of $p_k$ at position $t$. Let $H_0$ be an
infinite $c_0$-homogenous set. Let $S_0 := H_0$. Once we have obtained $S_i$,
let $(m^i_{j})_{j \in \mathbb{N}}$ be a monotone (increasing) sequence
enumerating $S_i$.  Then let $c_{i+1}(\{k,t\})$, again for $k < t$, be $\langle
v, b_{x,i+2}\rangle$ where $v$ is the state the run $p_{m^i_k}$ is in at position
$t$ and $x$ is the current mean payoff of $p_{m_k^i}$ at position $t$. Let $H_{i+1}$
be an infinite $c_{i+1}$-homogenous set. Let $S_{i+1} = \{m^i_k \st k \in
H_{i+1}\}$.

This construction ensures that for all $i \in \mathbb{N}$, for all $k_1, k_2 \in
S_i$, and for all sufficiently large $t \in H_i$, the runs $p_{k_1}$ and
$p_{k_2}$ will be at the same state at position $t$ and their current mean payoff at
position $t$ will differ by at most $2^{-i-1}$. If $i > 0$, by sufficiently large we
mean any $t$ larger than the indices $j,\ell$ we assign to $k_1$ and $k_2$ in
the construction of $H_i$, \ie $t > m^i_j,m^i_\ell$ where $k_1 = m^i_j$ and $k_2
= m^i_\ell$.  Since the sequence $(m^i_j)_{j \in \mathbb{N}}$ is monotone, it
suffices to take $t > \max\{k_1,k_2\}$. (The latter also allows us to claim the
property holds for $i = 0$.) Finally, also note that $S_{i+1} \subseteq S_i$.

The run $p$ we need to construct will first follow some $p_{k_0}$ with $k_0 \in
S_0$ until sufficient time $s_0 \in H_0$ has passed, then switch to some
$p_{k_1}$ with $k_1 > k_0$ and $k_1 \in S_1$, again until sufficient (total)
time $s_1 \in H_1$ has passed, then switch to $p_{k_2}$ with $k_2 > k_1$ and
$k_2 \in S_2$, and so forth.

It remains to specify what \emph{sufficient time} means for the position $s_i$, and to
show that this condition ensures that the mean payoff of $p$ is non-negative. For
the latter, we will ensure that after position $s_i$ the current mean payoff of $p$
never again drops below $-2^{-i+1}$. The sufficient condition for $s_i$ will
include the sufficiency condition for any runs with indices from $S_i$ having
the same current vertex and current mean-payoff difference at most $2^{-i-1}$.
Moreover, we need that $s_i \geq t_{k_{i+1}}$.

Let us now consider the current mean payoff of $p$ at some position $t$ with $s_i
\leq t < s_{i+1}$. By summing up the loss of mean payoff through the changes, we
see that the current mean payoff of $p$ differs by at most $2^{-2}\frac{s_0}{t}
+ \ldots +  2^{-i -2}\frac{s_i}{t}$ from that of $p_{k_i}$, which in turn is at
least $-2^{-k_i}$. Note that, since we have chosen our indices so that $k_i \le
k_{i+1}$ for all $i \in \mathbb{N}$, we have that $k_i \ge i$ and therefore
$-2^{-k_i} \geq -2^{-i}$.  Thus, once $s_0, \ldots, s_{i-1}$ have been chosen,
we just need make sure that $s_i$ is large enough so that:
\begin{itemize}[nolistsep]
	\item $s_i \ge t_{k_{i+1}}$,
	\item $s_i \ge \max\{k_i,k_{i+1}\}$, and finally
	\item $\sum_{j=0}^{i} 2^{-j-1}\frac{s_j}{s_i} \leq 2^{-i-1}$ so that the
		mean payoff of $p$ never again drops below $-2^{-i+1}$.
\end{itemize}
The first two items are trivial. For the third one, note that, as the left hand
side goes to $0$ for $s_i \to \infty$, this can always be ensured by staying
increasingly longer with each run we switch to.
\end{proof}

\begin{corollary}
\label{corr:automatasigma4}
Any set recognized by a liminf mean-payoff automaton is $\Sigma^0_4$.
\begin{proof}
	Using Proposition \ref{theo:automatapi3}, it suffices to argue that the
	second equivalent condition is $\Pi^0_3$. This in turn follows from the
	observation that for fixed $k, t \in \mathbb{N}$ the condition
	\emph{there exists a run whose mean payoff never falls below $-2^k$
	after position $t$} is by Weak K\"onig's Lemma a $\Pi^0_1$ condition.
\end{proof}
\end{corollary}

\paragraph*{Limsup mean-payoff automata}
We now show an analogue of Corollary~\ref{corr:automatasigma4} holds for limsup
mean-payoff automata. Once more, to simplify the argument, we focus on the
non-strict acceptance condition. We show languages recognized by such automata are
$\Pi^0_3$ and, thus, those recognized by limsup mean-payoff automata are
$\Sigma_0^4$.

\begin{proposition}
\label{theo:automatalimsup}
The following are equivalent for a non-deterministic limsup mean-payoff automaton:
\begin{enumerate}[nolistsep]
	\item There exists a run over $\alpha$ with non-negative limsup mean payoff.
	\item For any $k$ there exists a run $p_k$ over $\alpha$ such that for
		all positions $t$ there exists a position $t' > t$ such that the mean
		payoff of $p_k$ at position $t'$ is at least $-2^{-k}$.
\end{enumerate}
\begin{proof}
That (1.) implies (2.) follows immediately from the definition of limsup mean
payoff and the fact that the
witnessing run from (1.) is also a witness for all $k$ in (2.). That
(2.) implies (1.) will be shown using Ramsey's theorem, similar to the argument
in the proof of Proposition~\ref{theo:automatapi3}. We do not need iterative
applications here, though.

We define a colouring $c$ by setting $c(\{t,k\}) = v$ where $k < t$ and the run
$p_k$ is in the state $v$ at position $t$, and obtain an infinite $c$-homogenous set
$S$. If $k, k' \in S$, then for any position $t \in S$ with $t > \max \{k,k'\}$ the
runs $p_k$ and $p_{k'}$ are at the same state at position $t$, and hence we are
allowed to switch from one run to the other. Let $(k_m)_{m \in \mathbb{N}}$ be a
monotone sequence enumerating $S$. We first follow $k_0$ for a while, then
switch to $k_1$, and so on. This constructs the witnessing run $p$.

By assumption, the run $p_{k_0}$ will eventually reach a current mean payoff of
at least $- 2^{-k_0}$ at some position $t_0$. We pick some $t'_0 \in S$ with $t'_0 >
\max \{t_0, k_0, k_1\}$, and follow $p_{k_0}$ until position $t'_0$, and then switch
to the run $p_{k_1}$. At the time of the switch, there is some $c_0$ such that
the current mean payoff of $p_{k_0}$ is not more than $c_0$ below the current
mean payoff of $p_{k_1}$. (Our intention is to make the difference $c_0$, along
with all future ones, disappear by staying longer and longer with runs we switch
to.)
This implies that at positions $t > t'_0$ (but prior to
the next switch) the mean payoff of $p$ is at least the mean payoff of $p_{k_1}$
minus $\frac{t'_0}{t}c$, for some $c$. 
We pick $t_1 > t'_0$ such that
$\frac{t'_0}{t_1}c \leq 2^{-k_1}$,
and then some $t_1' \geq t_1$ such that
$p_{k_1}$ at position $t_1'$ has a mean payoff of at least $-2^{-k_1}$. Then $p$ has
a current mean payoff of at least $-2^{-k_1 + 1}$ at position $t_1'$. Then let
$t_1'' \in S$ such that $t_1'' > \max \{t_1', k_1, k_2\}$. The run $p$ follows
$p_{k_1}$ until position $t_1''$, and then switches to $p_{k_2}$. Again, we will
follow $p_{k_2}$ long enough so that the accumulated difference of the mean
payoff caused by the switches is small enough, and then until $p_{k_2}$ realizes
its bound next, and then switch to $p_{k_3}$ at the next possible chance, and so
on. As we keep reaching mean payoff values closer and closer to $0$, the limsup
mean payoff of $p$ is non-negative, as intended.
\end{proof}
\end{proposition}

To obtain an analogue of Corollary~\ref{cor:limsup-sigma4} here, we must still
argue that the statement $\phi$: \emph{there exists a run over $\alpha$ whose mean
payoff is at least $-2^k$ infinitely often}, is Borel. In the sequel we show how
to encode all run prefixes of the mean-payoff automaton over a given word
$\alpha$ into a DAG. The DAG is constructed to have special edges witnessing the
existence of a run prefix with mean payoff of at least $-2^k$. We show the DAG
has an infinite path that traverses such edges infinitely often if and only if
$\phi$ holds. To conclude, we then argue the set of all such DAGs is
$\omega$-regular and thus Borel.

\begin{definition}
Given an automaton $M$ with $n$ states, an input $\alpha$ and some precision
parameter $k \in \mathbb{N}$, we define a directed acyclic graph (DAG, for short)
$D_k(\alpha)$ over
$\{1,\ldots,n\} \times \mathbb{N}$ with the structural constraint that there are only
edges from $(i,\ell)$ to $(j,\ell+1)$ or from $(i,\ell)$ to $(j,\ell+2)$. We fix some
sequence $(t_\ell)_{\ell \in \mathbb{N}}$ such that $\frac{t_\ell}{t_{\ell+1}} \leq
2^{-k-1}$. There is an edge from $(i,\ell)$ to $(j,\ell+1)$ if there is an run
of $M$ starting from state $i$, reading in $\alpha$ from position $t_\ell$ to
position $t_{\ell+1}$ and ending in $j$. There is an edge from $(i,\ell)$ to
$(j,\ell+2)$
if there is an run of $M$ starting from state $i$, reading in $\alpha$ from
position $t_\ell$ to position $t_{\ell+2}$ and ending in $j$, such that at some point
$t$ with $t_\ell < t < t_{\ell+2}$ the current mean payoff $x$ satisfies that
$\frac{t - t_\ell}{t}x - \frac{t_\ell}{t} \geq -2^{-k}$. We call the latter
\emph{long edges}.
\end{definition}

The following remark will be useful in the sequel to establish that languages
recognized by limsup mean-payoff automata are Borel. Intuitively, we will argue
that the set of DAGs containing infinite paths with infinitely many long edges
form a Borel set. 

\begin{remark}
\label{obs:dag}
	We can code the DAGs $D_k(\alpha)$ for a fixed $M$ into an infinite
	sequence over a finite alphabet, by letting the $l$-th symbol code which
	of the finitely many potential edges from some $(i,\ell)$ to
	$(j,\ell+1)$ and from $(i,\ell)$ to $(i,\ell+2)$ are available. 
\end{remark}

We will now sketch how to recognize such DAGs using a \emph{B\"uchi automaton}.
That is, an infinite-word non-deterministic automaton $(S,s_I,A,T,B)$ with
finite set of states $S$, initial state $s_I$, alphabet $A$, transition relation
$T \subseteq S \times A \times S$ and \emph{accepting transition set} $B
\subseteq T$. The notion of run is as for mean-payoff automata. We say a run of
a B\"uchi automaton is accepting if it contains infinitely many accepting
transitions, and define its language as for mean-payoff automata.

\begin{remark}\label{rem:dag-auto}
	There exists a finite non-deterministic B\"uchi automaton that reads in
	DAGs (represented as described in Remark~\ref{obs:dag}), and accepts
	exactly those DAGs that admit a path containing infinitely many long
	edges starting from $(i_0,0)$, where $i_0$ is the initial state of $M$.
	For instance, the B\"uchi automaton can be defined using the same
	state-space as the original mean-payoff automaton, with a transition
	from $i$ to $j$ on input edge $\left( (i,\ell) , (i,\ell') \right)$ if
	$j$ is reachable from $i$ in the original automaton. The transition is
	then marked as accepting if the edge being read is long.
\end{remark}

\begin{lemma}
\label{lem:limsup1}
If $D_k(\alpha)$ admits a path containing infinitely many long edges
starting from $(i_0,0)$, where $i_0$ is the initial state of
$M$, then $M$ has a run over $\alpha$ which has a mean payoff of at least
$-2^{-k}$ infinitely many times.
\begin{proof}
Any edge in $D_k(\alpha)$ is witnessed by a partial run of $M$, in such a
way that an infinite path through $D_k(\alpha)$ starting from $(i_0,0)$ gives rise
to a full run of $M$ on input $\alpha$. Given the condition for adding a long
edge, \ie of the form $(i,\ell) \Rightarrow (j,\ell+2)$, we note that at the
witnessing position $t$, the current mean payoff of $M$ is of the form
$\frac{t_\ell}{t}y + \frac{t - t_\ell}{t}x$, where $x$ is the mean payoff from position
$t$ to position $t_\ell$, and $y$ the mean payoff from the start to position $t$. As we
assume all payoffs to be from $[-1,1]$, this is bounded from below by $\frac{t -
t_\ell}{t}x - \frac{t_\ell}{t} \geq -2^{-k}$, hence the claim follows.
\end{proof}
\end{lemma}

\begin{lemma}
\label{lem:limsup2}
If $M$ has a run over $\alpha$ reaching a mean payoff of at least
$-2^{-k-1}$ infinitely often, then $D_k(\alpha)$ admits a path containing
infinitely many long edges starting from
$(i_0,0)$.
\begin{proof}
Following a run $p$ through $M$, we can construct an infinite path through
$D_k(\alpha)$ as follows: Start from $(i_0,0)$. If we are currently at $(i,\ell)$,
and there is an edge available to $(j,\ell+2)$ where $p$ is in state $j$ at position
$t_{\ell+2}$, take that edge. Else, take the edge to $(j',\ell+1)$, where $j'$ is the
state $p$ is in a position $t_{\ell+1}$ (by construction of $D_k(\alpha)$, the latter
always exists. We claim that if the mean payoff of $p$ is at least
$-2^{-k-1}$ infinitely often, then the former case occurs infinitely many times.

Assume that $p$ has mean payoff at least $-2^{-k-1}$ at position $t$ with
$t_{\ell+1} \leq t < t_{\ell+2}$. Let $p$ be in state $i$ at position $t_\ell$ and
at state $j$ at position $t_{\ell+2}$. We claim that $D_k(\alpha)$ has an edge from
$(i,\ell)$ to $(j,\ell+2)$. To see that, note that the mean payoff of $p$ at
position $t$ is of the form $x\frac{t-t_\ell}{t} + y\frac{t_\ell}{t}$, where $x$ is
the mean payoff from position $t_\ell$ to position $t$, and $y$ the mean payoff from the
start to position $t_\ell$. As $y \geq -1$, we can conclude that $x\frac{t-t_1}{t}
\geq -2^{-k-1} - \frac{t_\ell}{t}$. As $t \geq t_{\ell+1}$ and the constraint on
the choice of $(t_\ell)_{\ell \in \mathbb{N}}$ that $\frac{t_\ell}{t_{\ell+1}}
\leq 2^{-k-1}$, we in turn find that $x\frac{t-t_1}{t} \geq -2^{-k}$ -- hence
the claimed edge exists.

The only reason why we might be unable to choose such an edge from $(i,\ell)$ to
$(j,\ell+2)$ is if we choose some edge from $(i',\ell-1)$ to $(j',\ell+1)$
earlier. But that means that the availability of infinitely many such edges
implies that our construction will choose them, hence showing the claim.
\end{proof}
\end{lemma}

\begin{corollary}\label{cor:limsup-sigma4}
	Any set recognized by a limsup mean-payoff automaton is
	$\Sigma^0_4$.
\begin{proof}
	By Lemmas \ref{lem:limsup1} and~\ref{lem:limsup2} we find that for some
	input $\alpha$ and limsup mean payoff automaton $M$ the following are
	equivalent:
	\begin{enumerate}[nolistsep]
	\item For all $k$ there exists a run $p_k$ over $\alpha$ such that for all
		positions $t$ there exists a position $t' > t$ such that the mean payoff of
		$p_k$ at position $t'$ is at least $-2^{-k}$.
	\item For all $k$ the DAG $D_k(\alpha)$ has a path containing infinitely many
		long edges starting from $(i_0,0)$.
	\end{enumerate}
	By Proposition \ref{theo:automatalimsup}, the former is equivalent to
	$M$ accepting $\alpha$, and by Remarks~\ref{obs:dag}
	and~\ref{rem:dag-auto}, and~\cite{perrin04},
	the latter is a universal quantification over a $\Delta^0_3$-set, hence
	a $\Pi^0_3$-set. Thus, with the strict acceptance condition, a
	$\Sigma_0^4$ set.
\end{proof}
\end{corollary}

\section{Strategy Transfer}\label{sec:stratTransfer}
In this section we will construct a reachability game from an MPG with
limited observation in which winning strategies for either player are sufficient
(but not necessary) for observation-based winning strategies in the original MPG.

Let us fix a limited-observation MPG $G = ( Q, q_I, \Sigma, \Delta, w,
\Obs )$. We will define a reachability game on the weighted unfolding of
$G$.

\paragraph*{Belief functions}
Let $\calB$ be the set of (\emph{belief}) functions $f : Q \to \mathbb{Z} \cup
\{ +\infty, \bot \}$. Our intention is to use functions in $\calB$ to keep track
of the minimum payoff of all concrete paths ending in the given state. A
function value of $\bot$ indicates that the given state is not in the current
observation, and a function value of $+\infty$ is used to indicate to Eve that
the token is not located at such a state. Intuitively, $+\infty$ will allow our
reachability winning condition to include games where Adam wins by ignoring
paths going through the given state.

The \emph{support} of $f \in \calB$ is
$\supp(f) := \{q \in Q \st f(q) \neq \bot \}$.
We define a family of partial orders $\preceq_k \subseteq \calB \times \calB$
for all $k\in \mathbb{N}$. Formally, $f \preceq_k f'$ if:
\begin{itemize}[nolistsep]
	\item $\supp(f) = \supp(f')$ and
	\item $f(q)+k \leq f'(q)$ for all $q \in \supp(f)$
\end{itemize}
where $+ \infty + k = +\infty$.

\paragraph*{(Proper) successor functions}
Given two functions $f,f' \in \calB$, we say $f'$ a $\sigma$-successor of $f$ if:
\begin{itemize}[nolistsep]
	\item $\supp(f') \in \Obs$;
	\item $\supp(f') \subseteq \post_\sigma(\supp(f))$; and
	\item for all $q \in \supp(f')$ either
		\begin{itemize}[nolistsep]
			\item $f'(q) = \min\{f(q') +w(q',\sigma,q) \st q' \in
				\supp(f) \text{ and } (q',\sigma,q) \in \Delta\}$, or
			\item $f'(q) = +\infty$.
		\end{itemize}
\end{itemize}
Moreover, if $f'$ is a $\preceq_0$-minimal $\sigma$-successor of $f$, we say it
is a \emph{proper} $\sigma$-successor of $f$.

\paragraph*{Function-action sequences}
Let us denote by $\calF(G)$ the set of all sequences $f_0 \sigma_0 f_1 \dots
f_n \in (\calB \cdot \Sigma)^* \calB$ such that for all $0 \le i <
n$, $f_{i+1}$ is a $\sigma_i$-successor of $f_i$.  Observe that for each
$\phi = f_0 \sigma_0 \dots f_n \in \calF(G)$ there is
a unique abstract path $\supp(\phi) := o_0 \sigma_0 \dots o_n$ such that $o_i =
\supp(f_i)$ for all $i$. Conversely for each finite abstract path $\psi = o_0
\sigma_0 \dots o_n$ there may be many corresponding function-action sequences
$\supp^{-1}(\psi) := \{\phi \in \calF(G) \st \supp(\phi) = \psi \}$.  Of
particular interest are function-action sequences that are minimal with respect
to $\preceq_0$. Given a finite abstract path $\psi = o_0 \sigma_0 \dots o_n$ and
a function $f_0 \in \calB$ such that $\supp(f_0) \subseteq o_0$, let
$\prop(\psi,f_0)$ denote the unique (pointwise) $\preceq_0$-minimal
function-action sequence $f_0 \sigma_0 \dots f_n \in \supp^{-1}(\psi)$. That is
to say, if $\prop(\psi,f_0) = f_0 \sigma_0 \dots f_n$, then for all $g_0
\sigma_0 \dots g_n \in \supp^{-1}(\psi)$ it holds that $f_i \preceq_0 g_i$ for
all $0 \le i \le n$. Observe the latter holds if and only if $f_{i+1}$ is a
proper $\sigma_i$-successor of $f_i$, for all $0 \le i < n$.

We extend $\supp(\cdot)$, $\supp^{-1}(\cdot)$, and $\prop(\cdot,\cdot)$ to
infinite sequences in the obvious way.

\paragraph*{The weighted unfolding of an MPG}
The \emph{reachability game associated with $G$}, \ie $\Gamma = ( \Pi,
\Sigma, f_I, \delta, \teve, \tadam )$, is formally defined as follows.
The initial state $f_I \in \calB$ is the function for which $q_I \mapsto 0$, and
$q \mapsto \bot$ for all $q \neq q_I$. The state-set $\Pi$ is the subset of
$\calF(G)$ where for all $f_0 \sigma_0 f_1 \dots f_n \in \Pi$ we have:
\begin{itemize}[nolistsep]
	\item $f_0 = f_I$; and
	\item for all $0 \leq i < j <n$,
		\begin{itemize}[nolistsep]
			\item $f_i \not \preceq_0 f_j$	and
			\item $f_j \not \preceq_1 f_i$.
		\end{itemize}
\end{itemize}
The transition function $\delta$ is such that if $x$ and $x \cdot \sigma \cdot
f$ are elements of $\Pi$ then $(x,\sigma,x \cdot \sigma \cdot f) \in \delta$.
For the terminating states we have
\begin{align*}
	\teve = \{ f_0 \sigma_0 \dots f_n \in \Pi \st
	& \text{ for some }
	0 \leq i < n : f_i \preceq_0 f_n\}; \text{ and }\\
	\tadam = \{ f_0 \sigma_0 \dots f_n \in \Pi \st
	& \text{ for some } 0 \leq i < n : f_n \preceq_1 f_i, \text{ and } \\
	& \text{ for some } q \in \supp(f_i) : f_i(q) \neq+\infty \}.
\end{align*}
Note that the directed graph defined by $\Pi$ and $\delta$ is a tree, but not
necessarily finite.

\paragraph*{Good and bad cycles}
To gain an intuition about $\Gamma$, let us say an
abstract cycle $\chi$ is \emph{good} if:
\begin{itemize}[nolistsep]
	\item there exists $f_0 \sigma_0 \dots f_n \in \supp^{-1}(\chi)$ such
		that $f_i(q) \neq +\infty$ for all $q \in Q$ and all $0 \le i
		\le n$, and
	\item $f_0 \preceq_0 f_n$.
\end{itemize}
Let us say $\chi$ is \emph{bad} if:
\begin{itemize}[nolistsep]
	\item there exists $f_0 \sigma_0 \dots f_n \in \supp^{-1}(\chi)$ such
		that $f_0(q) \neq +\infty$ for some $q \in \supp(f_0)$, and
	\item $f_n \preceq_1 f_0$.
\end{itemize}
Then it is not difficult to see that $\Gamma$ is essentially an abstract
cycle-forming game played on $G$ which is winning for Eve if a good abstract
cycle is formed and winning for Adam if a bad abstract cycle is formed.

Our main result for this section is the following:
\begin{theorem}
\label{thm:stratTransfer}
	Let $G$ be an MPG with limited observation and let $\Gamma$ be the
	associated reachability game. If Adam (Eve) has a winning strategy in
	$\Gamma$ then (s)he has a finite-memory observation-based winning
	strategy in $G$.
\end{theorem}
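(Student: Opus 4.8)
The plan is to treat the two players symmetrically, building in each case a finite-memory observation-based strategy in $G$ by ``following and then looping'' the winning strategy inside a finite prefix of the tree $\Pi_G$. First I would show that the subtree of plays consistent with the given winning strategy, pruned at the relevant target, is finite. Every node of $\Pi_G$ has only finitely many successors: the support of any $\sigma$-successor must be an element of $Obs$ contained in $\post_\sigma$ of the current support, and for each such support there are only $2^{|o|}$ ways to assign the value $+\infty$, so the branching is finite. Since the strategy is winning in the reachability game, every consistent play reaches the player's target after finitely many steps, so the pruned subtree has no infinite branch. By König's lemma it is finite, of some depth $N$, and its finitely many nodes will serve as the memory.

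For Eve, the key first point is that her winning strategy in $\Gamma_G$ wins in particular against the Adam who always plays the \emph{canonical} successor, namely the one that never assigns $+\infty$ but takes the true minimum. Because $G$ has limited-observation, $\gamma$ of every abstract prefix is nonempty, so these canonical values are always finite and equal to the minimum weight of a concrete path reaching each state; hence every play reaching $\teve$ yields a genuinely good cycle $f_i \preceq_0 f_n$ with finite values. The finite-memory strategy keeps the current tree node as memory and, on reaching a $\teve$-leaf with witness $i$, resets the memory to node $i$ so that Eve replays the actions of the good cycle; if Adam deviates to a different observation, the resulting node still lies in the finite tree and the strategy continues to a later $\teve$-leaf. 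For correctness I would use monotonicity of the min-plus successor operation with respect to $\preceq_0$ to show that along any resulting infinite play the true minimum-weight function at the current observation is pointwise bounded below: each good cycle guarantees the values do not decrease across a loop, and between loop-backs the functions range over the finitely many labels of the tree. A uniform lower bound on these $\min$-weights gives a uniform lower bound on $w(\pi[..n])$ for every consistent concrete path $\pi$, whence $\underline{MP}(\pi)\ge 0$.

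For Adam the construction mirrors this, with his observation move given by the support of the successor function he plays in $\Gamma_G$ and the $+\infty$-assignments retained as internal memory; loop-backs occur at $\tadam$-leaves. The correctness argument is dual: since a $+\infty$-assignment can only increase a function value, the true minimum-weight function is pointwise at most the tracked function, and a bad cycle $f_n \preceq_1 f_i$ with some $f_i(q)\neq +\infty$ forces the tracked value at $q$ to drop by at least $1$ each loop, so by monotonicity the true minimum weight at $q$ tends to $-\infty$ linearly in the number of loops. At the cycle boundaries this produces concrete prefixes ending at $q$ whose weight tends to $-\infty$; a final König's-lemma argument on the finitely-branching tree of concrete prefixes in $\gamma$ of the forced abstract play extracts a single infinite concrete path $\pi$ along which the boundary weights tend to $-\infty$. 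As the cycle length is bounded by $N$, this yields $\liminf_{n} \frac{1}{n} w(\pi[..n]) < 0$, so the play is winning for Adam.

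I expect the main obstacle to be these correctness arguments rather than the finiteness and memory bookkeeping: specifically, reconciling the reachability-game functions (which may carry $+\infty$ and are reset at each loop-back) with the actual weights of concrete paths in $G$. The crux is the monotonicity lemma, stating that the min-plus successor preserves $\preceq_0$ and the corresponding domination involving $+\infty$, which is what lets the good or bad trend of a single cycle persist across arbitrarily many loops and across the opponent's deviations. For Adam there is the additional step of threading the per-cycle decreasing minima into one infinite witness concrete path, and of verifying that a linear drift to $-\infty$ over cycles of bounded length indeed produces a strictly negative $\liminf$ of the averages.
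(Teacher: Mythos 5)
Your overall architecture matches the paper's: K\"onig's lemma gives the finiteness bound (the paper's Lemma~\ref{lem:boundedPlays}), the memory states are the nodes of the strategy-pruned tree with a loop-back to the witnessing prefix at $\teve$/$\tadam$ leaves, and the engine of correctness is exactly the monotonicity of proper $\sigma$-successors with respect to $\preceq_k$ (Lemma~\ref{lem:wqo-trans}). Your treatment of Eve's side is essentially the paper's proof: the memory function stays $\preceq_0$-below the fully proper (canonical) sequence, ranges over a finite set, and hence uniformly lower-bounds $w(\pi[..n])$ for every concrete $\pi$.

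On Adam's side, however, there is a genuine gap in the extraction step, and it is precisely the step you flag but do not resolve. From ``the true minimum weight at $q$ tends to $-\infty$ at the recurrence boundaries'' you cannot conclude, by a K\"onig argument on the tree of all concrete prefixes (or of all weight-minimizing prefixes), that a \emph{single} infinite concrete path has boundary weights tending to $-\infty$. The minimizer realizing the very negative value at the $m$-th recurrence may, at earlier recurrences, pass through states $q'$ where the recurring memory function is $+\infty$; at such states the true minimum is not controlled by your drop argument at all, so the prefixes of these minimizers can have bounded (even zero) weight at every earlier boundary. An infinite branch extracted by K\"onig's lemma can then consist entirely of such uncontrolled prefixes, with the ``plunge'' always postponed, and its $\liminf$ need not be negative. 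The paper's fix is the \emph{piecewise proper} sequence $\zeta$: at every recurrence index it resets the values of all states outside $o=\{q \st \lastf{\mu}(q)\neq+\infty\}$ to $+\infty$, which forces every minimal segment between consecutive recurrences to both start and end inside $o$ (Lemma~\ref{lem:minPath} then makes the segment weights telescope as $z_{\suc{i}}(q)-z_i(q_0)$). K\"onig's lemma is applied to the tree of these chained segments, not to arbitrary concrete prefixes, and only then does every boundary weight of the extracted path decrease linearly. Without this restriction your Adam argument does not go through as written.
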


The idea behind the observation-based strategies
for the MPG is straightforward. If Eve
wins the reachability game then she can transform her strategy into one that
plays indefinitely by returning, whenever the play reaches $\teve$, to the
natural previous position---namely the position that witnesses the membership
of $\teve$.  By continually playing her winning strategy in this way Eve
perpetually completes good abstract cycles and this ensures that all concrete
paths consistent with the play have non-negative mean-payoff value. Likewise if
Adam has a winning strategy in the reachability game, he can continually play
his strategy by returning to the natural position whenever the play reaches
$\tadam$. By doing this he perpetually completes bad abstract cycles and this
ensures that there is a concrete path consistent with the play that has strictly
negative mean-payoff value.

We will repeatedly use the next result which follows by induction immediately
from the definition of a $\sigma$-successor.

\begin{lemma}
	\label{lem:minPath}
	Let $\phi = f_0 \sigma_0 \dots f_n \in \calF(G)$ be a sequence
	such that $f_{i+1}$ is a proper $\sigma_i$-successor of $f_i$, for all
	$i$. Then for all $q \in \supp(f_n)$,
	\[
		f_n(q) = \min \{ f_0(\pi[0]) + w(\pi) \st \pi \in
			\gamma(\supp(\phi))\text{ and } \pi[n] = q\}.
	\]
\end{lemma}
The following simple facts about $\preceq_n$ will also be useful:
\begin{lemma}
\label{lem:wqo-trans}
For any two functions $f_1,f_2 \in \calB$ with $f_1 \preceq_k f_2$:
\begin{enumerate}[label=(\roman*),nolistsep]
	\item For all $k' \leq k$, $f_1 \preceq_{k'} f_2$,
	\item For all $k' \geq 0$, if $f_2 \preceq_{k'} f_3$ for some $f_3 \in
		\calF$ then $f_1 \preceq_{k+k'} f_3$, and
	\item If $f_1'$ is a proper $\sigma$-successor of $f_1$ and $f_2'$ is a
		$\sigma$-successor of $f_2$ with $\supp(f_2') = \supp(f_1')$, then
		$f_1' \preceq_k f_2'$.
\end{enumerate}
\end{lemma}
\begin{proof}
	Items (i) and (ii) are trivial. For (iii), let $d_{i,j} = w(q_i,\sigma,q_j)$
	for $q_i \in \supp(f_1)$ and $q_j \in \supp(f_1')$ where such a
	transition exists and $+\infty$ otherwise. We now observe that as $f_1'$
	is $\preceq_0$-minimal, $f_1'(q_j)$ can be defined as $\min\{f_1(q_i) +
	d_{i,j} \st q_i \in \supp(f_1)\}$ for all $q_j \in \supp(f_1')$.
	As $f_1(q_i) \le f_2(q_i) - k$ for any $q_i \in \supp(f_1)$, it
	follows that \[ f_1'(q_j) \le \min\{f_2(q_i) + d_{i,j} \st q_i
	\in \supp(f_1)\} - k \le f_2'(q_j) - k,\] where the second
	inequality follows from the definition of a $\sigma$-successor.
	Thus $f_1' \preceq_k f_2'$.
\end{proof}

Although the following results are not used until
Section~\ref{sec:adeqpure-inc}, they already give an intuition towards the
correctness of the strategies described above. In words, we will show that
repeating good cycles is itself, in some sense, good, while repeating bad ones
is bad.
\begin{lemma}\label{lem:caVa}
	Let $\chi$ be an abstract cycle.
	\begin{enumerate}[label=(\roman*),nolistsep]
		\item If $\chi$ is good (bad) then an interleaving of $\chi$
			with another good (bad) cycle is also good (bad).
		\item If $\chi$ is good then for all $k$ and all concrete cycles
			$\pi \in \gamma(\chi^k)$, $w(\pi) \geq 0$.
		\item If $\chi$ is bad then $\exists k \ge 0, \pi \in
			\gamma(\chi^k)$ such that $w(\pi)<0$.
	\end{enumerate}
\end{lemma}
\begin{proof}
	Item (i) follows from Lemma~\ref{lem:wqo-trans}. For (ii), let $f_0 \sigma_0
	\dots f_n \in \supp^{-1}(\chi)$ be such that $f_i(q) \neq +\infty$ for
	all $i$ and $q$ and $f_0 \preceq_0 f_n$. In particular this means that
	$f_{i+1}$ is a proper $\sigma_i$-successor of $f_i$.  Now fix $k$ and
	let $\pi \in \gamma(\chi^k)$ be a concrete cycle. From
	Lemma~\ref{lem:minPath} we have, for all $0 \leq i < k$,
	\[
		w(\pi[n\cdot i..n(i+1)]) \geq f_n(\pi[n(i+1)]) - f_0(\pi[n\cdot i])
	\]
	and
	\[
		f_n(\pi[n(i+1)]) - f_0(\pi[n\cdot i]) \geq f_0(\pi[n(i+1)]) -
		f_0(\pi[n\cdot i]).
	\]
	Hence
	\[
		w(\pi) = \sum_{i=1}^k w(\pi[n\cdot i..n(i+1)]) \geq f_0(\pi[n
		\cdot k]) - f_0(\pi[0]) = 0.
	\]

	We now prove item (iii) holds.
	Let $f_0 \sigma_0 \dots f_n \in \supp^{-1}(\chi)$ and $q_0 \in
	\supp(f_0)$ be such that $f_0(q_0) \neq +\infty$ and $f_n \preceq_1
	f_0$. It follows that $f_n(q_0) < +\infty$. From the definition of a
	$\sigma$-successor, it follows that there exists $r \in \supp(f_{n-1})$
	such that $f_{n-1}(r)<+\infty$, and there is an edge from $r$ to $q_0$
	with weight $f_n(q_0) - f_{n-1}(r)$. Proceeding this way inductively we
	find there is a $q_1 \in \supp(f_0)$ with $f_0(q_1) < +\infty$ and a
	concrete path $\pi_0\in \gamma(\chi)$ from $q_1$ to $q_0$ with $w(\pi_0)
	= f_n(q_0) - f_0(q_1)$. As $f_0(q_1) < +\infty$ and $f_n \preceq_1 f_0$
	we have $f_n(q_1) \leq f_0(q_1) - 1 < +\infty$. Repeating the argument
	yields a sequence of states $q_0, q_1, \dots$ such that there is a
	concrete path $\pi_i \in \gamma(\chi)$ from $q_{i+1}$ to $q_i$ with
	\[
		w(\pi_i) = f_n(q_i) - f_0(q_{i+1}) \leq f_0(q_i) - f_0(q_{i+1}) - 1.
	\]
	As $Q$ is finite it follows that there exists $i<j$ such that $q_i =
	q_j$. Then the concrete path $\pi = \pi_j \cdot \pi_{j-1} \dots
	\pi_{i+1} \in \gamma(\chi^{j-i})$ is a concrete cycle with
	\[
		w(\pi) = \sum_{k=i+1}^j w(\pi_k) \leq f_0(q_i) - f_0(q_j) -
		(j-i) < 0.
	\]
\end{proof}

\begin{corollary} \label{cor:quelleSauce}
	No cyclic permutation of a good abstract cycle is bad.
\end{corollary}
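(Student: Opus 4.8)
The plan is to argue by contradiction, using the two weight estimates in Lemma~\ref{lem:caVa} as the bridge between the combinatorial ``good/bad'' conditions and the actual weights of concrete cycles. Suppose $\rho = o_0 \sigma_0 \ldots o_n$ is a good abstract cycle and that some cyclic permutation $\rho'$ of it is bad. Since $\rho'$ is bad, Lemma~\ref{lem:caVa}(iii) gives a $k$ and a concrete cycle $\pi' \in \gamma((\rho')^k)$ with $w(\pi') < 0$. If I can produce from $\pi'$ a concrete cycle $\pi \in \gamma(\rho^k)$ with $w(\pi) = w(\pi')$, then $w(\pi) < 0$ contradicts Lemma~\ref{lem:caVa}(ii), which, because $\rho$ is good, forces $w(\pi) \ge 0$ for every concrete cycle in $\gamma(\rho^k)$. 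Thus the whole argument reduces to transporting a single negative-weight concrete cycle from $\gamma((\rho')^k)$ back into $\gamma(\rho^k)$.

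The transport map is cyclic permutation at the concrete level. First I would record the purely index-theoretic observation that if $\rho'$ is the cyclic permutation of $\rho$ by a shift $s$ (so $o'_j = o_{j+s \bmod n}$ and $\sigma'_j = \sigma_{j+s \bmod n}$), then $(\rho')^k$ is exactly the cyclic permutation of $\rho^k$ by $s$: the action read at position $j$ of either sequence is $\sigma_{(j+s) \bmod n}$, and likewise for observations, since $\rho^k$ is periodic with period $n$. This identifies the two repeated cycles as shifts of one another.

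Next I would transport the concrete cycle itself. Write $\pi' = p_0 \sigma'_0 p_1 \ldots p_{kn}$ with $p_0 = p_{kn}$ and $p_j \in o'_j$. Cutting $\pi'$ at the position where $(\rho')^k$ first reads the observation $o_0$ and rotating there yields a sequence $\pi = q_0 \sigma_0 q_1 \ldots q_{kn}$ with $q_j \in o_j$ and, crucially, $q_0 = q_{kn}$; here the closure $p_0 = p_{kn}$ of $\pi'$ is exactly what guarantees that the rotated sequence is again a genuine concrete cycle. Each transition of $\pi$ is a transition of $\pi'$, so $\pi \in \gamma(\rho^k)$, and since $w$ is the sum of the edge weights around the full cycle, rotation leaves it unchanged: $w(\pi) = w(\pi')$. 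This is the negative-weight cycle in $\gamma(\rho^k)$ we sought, completing the contradiction.

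The only real care needed is the index bookkeeping in the last two steps --- checking that the rotation sends observations to the correct observations (using the $n$-periodicity of $\rho^k$) and that the closure of the concrete cycle is preserved --- so I expect that to be the main, if modest, obstacle; everything else is an immediate appeal to Lemma~\ref{lem:caVa}.
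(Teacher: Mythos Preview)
Your argument is correct and is exactly the derivation the paper has in mind: the corollary is stated immediately after Lemma~\ref{lem:caVa} with no proof, and the intended reasoning is precisely the contradiction between parts~(ii) and~(iii) that you spell out, transported across the cyclic shift at the concrete level. One small remark: as literally stated, part~(iii) only promises some $\pi \in \gamma((\rho')^k)$ with $w(\pi)<0$, not a concrete \emph{cycle}; your rotation step needs $p_0 = p_{kn}$, so you are implicitly using the stronger conclusion that the proof of~(iii) actually delivers (it explicitly constructs a concrete cycle), which is fine but worth noting.
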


\paragraph*{Restricting $\Gamma$ w.r.t. a strategy}
We note that, as a play prefix in $\Gamma$ is completely described by the last
state in the sequence, it suffices to consider positional strategies for both
players. Thus, when speaking of winning strategies for either player in
$\Gamma$, we will assume they are positional.

Let $\restriction{\Pi}{\lambda}$ denote the set of states from $\Pi$ that can be
reached via plays consistent with $\lambda$.  At this point, we can already show
that a winning strategy for either player in $\Gamma$ will reach a terminating
state in a bounded number of steps. This will later allow us to argue that the
strategies we construct for Eve or Adam in $G$ based on their strategies in
$\Gamma$ use finite memory.
\begin{lemma}
\label{lem:boundedPlays}
	If $\lambda$ is a winning strategy for Adam or Eve in $\Gamma$, then there
	exists $N \in \mathbb{N}$ such that for all plays $\pi$ consistent with
	$\lambda$ we have $|\pi| \leq N$.
\end{lemma}
\begin{proof}
	Suppose there is no bound on the size of $\restriction{\Pi}{\lambda}$.
	As $\Gamma$, is acyclic, it follows that $\restriction{\Pi}{\lambda}$
	contains infinitely many states. However, as $\Gamma$ is
	finitely-branching, it follows from K\"onig's lemma that there exists an
	infinite path in $\Gamma$.  As this play is not winning for either
	player and it is consistent with $\lambda$, this contradicts the fact
	that $\lambda$ is a winning strategy.
\end{proof}

\subsection{Strategy transfer for Eve}
Suppose Eve has a winning positional strategy $\lambda$ in $\Gamma$.  Let $M =
\restriction{\Pi}{\lambda}$ be the corresponding restriction of $\Pi$. From
Lemma~\ref{lem:boundedPlays}, $M$ is finite. We will define an
observation-based strategy $\lambda^*$ with memory $|M|$ for Eve in $G$. Given a
memory state $\mu = f_0 \sigma_0 \dots f_n \in M$ let
\[
	\mu' = \begin{cases}
		\text{the proper prefix } f_0 \sigma_0 \dots f_\ell
			\text{ of } \mu \text{ such that }
			f_\ell \preceq_0 f_n
			& \text{if } \mu \in \teve\\
		\mu & \text{otherwise.}
		\end{cases}
\]
The initial memory state is $\mu_0 := f_I$. Let us write $\mu' = f'_0 \sigma'_0
\dots f'_{m}$.  We define the output function $\alpha_o:M \times \Obs \to
\Sigma$ as $\alpha_o(\mu,o) = \lambda(\mu')$. Finally we define the update
function $\alpha_u:M \times \Obs \to M$ as $\alpha_u(\mu, o) = \mu' \cdot
\lambda(\mu') \cdot g$ where $g$ is the proper $\lambda(\mu')$-successor of
$f'_{m}$ with $\supp(g) = o$. Observe that we maintain the invariant that the
current observation is $\supp(f'_{m})$, consequently the $\Obs$ input to
$\alpha_o$ is not used.

We will show shortly that $\lambda^*$ is a winning observation-based
strategy for Eve in $G$. First, we require a result about plays consistent with
$\lambda^*$. We will argue that, by following $\lambda^*$ in $G$, Eve ensures
the belief functions from proper function-action sequences
induced by play prefixes consistent with it are
$\preceq_0$-smaller than her current memory state.
\begin{lemma}\label{lem:alwaysGood}
	Let $\psi = o_0 \sigma_0 \dots \in \Plays(G)$ and
	$\mu_0 \mu_1 \dots \in M^\omega$
	be such that $\sigma_i = \alpha_o(\mu_i,o_i)$ and
	$\mu_{i+1} = \alpha_u(\mu_i,o_i)$ for all $i \ge 0$. If we write
	$\mu_i = f_i^{(0)} \sigma_i^{(0)} \dots f_i^{(n_i)}$ and
	$\prop(\psi,f_I) = g_0 \sigma_0 \dots$, then $f^{(n_i)}_i \preceq_0 g_i$
	for all $i \ge 0$.
\end{lemma}
\begin{proof}
	We prove this by induction. For
	$i=0$ we have
	\(
		\mu_0 = f_I = g_0.
	\)
	Now suppose $f^{(n_i)}_i \preceq_0 g_i$. By definition of
	$\prop(\cdot,\cdot)$ we have that $g_{i+1}$ is the proper
	$\sigma_i$-successor of $g_i$ and $\supp(g_{i+1}) = o_{i+1}$. Assume
	first that $\mu_i \notin \teve$.
	Then
	\[
		\mu_{i+1} = \alpha_u( \mu_i,o_i)
		= \mu_i \cdot \sigma_i \cdot h,
	\]
	where $h$ is the proper $\sigma_i$-successor of
	$f^{(n_i)}_i$ with $\supp(h) = o_{i+1}$.  Then, by
	Lemma~\ref{lem:wqo-trans}~(iii)
	we have $f^{(n_{i+1})}_{i+1} = h \preceq_0 g_{i+1}$.

	Now assume $\mu_i \in \teve$, and let
	$\ell < n_i$ be the index
	such that $f^{(\ell)}_i \preceq_0 f^{(n_i)}_i$.
	Then
	\[
		\mu_{i+1} = \alpha_u(\mu_i,o_i) =
		\left(f_i^{(0)} \sigma_i^{(0)} \dots \sigma_{(i)}^{(\ell-1)}
		f_i^{(\ell)}\right) \cdot \sigma_i \cdot h
	\]
	where $h$
	is the proper $\sigma_i$-successor
	of $f^{(\ell)}_i$ with $\supp(h) = o_{i+1}$.  From
	Lemma~\ref{lem:wqo-trans} item (ii) we have $f^{(\ell)}_i \preceq_0
	g_i$, so by Lemma~\ref{lem:wqo-trans}~(iii) we have
	$f^{(n_{i+1})}_{i+1} = h \preceq_0 g_{i+1}$ as required.
\end{proof}

We now proceed with the proof of strategy transfer for Eve.
\begin{lemma}
\label{lem:EveIf}
	Let $G$ be a mean-payoff game with limited observation and let
	$\Gamma$ be the associated reachability game. If Eve has a winning
	strategy in $\Gamma$ then she has a finite-memory observation-based
	winning strategy in $G$.
\end{lemma}

\begin{proof}
	We will show that $\lambda^*$ described above is a winning strategy for
	Eve. Let $\psi = o_0 \sigma_0 \dots \in \Plays(G)$ be any play
	consistent with $\lambda^*$. That is, there is a sequence $\mu_0 \mu_1
	\dots M^\omega$ such that $\sigma_i = \alpha_o(\mu_i,o_i)$ and
	$\mu_{i+1} = \alpha_u(\mu_i,o_i)$ for all $i \ge 0$.  We will show that
	there exists a constant $\beta \geq 0$ such that for all concrete paths
	$\pi \in \gamma(\psi)$ and all $j\geq 0$, $w(\pi[..j]) \geq \beta$. It
	follows that $\uMP(\pi) \geq 0$, and so $\psi$ is winning for Eve.

	Let $W = \{f_\ell(q) \st f_0 \sigma_0 \dots f_\ell \in M, q \in Q,\text{
	and } f_\ell(q) \neq \bot\}.$ Note that $W$ is finite because $M$ and $Q$
	are finite, and non-empty because $\mu_0 = f_I$ and $f_I(q_I) = 0 \in
	W$. Let $\beta = \min W$. As $0 \in W$, we have that $\beta < +\infty$.

	As with Lemma~\ref{lem:alwaysGood}, let $\prop(\psi,f_I) = g_0 \sigma_0
	\dots$ and $\mu_i = f_i^{(0)} \sigma_i^{(0)} \dots f_i^{(n_i)}$.
	Consider an arbitrary $j \in \mathbb{N}$.
	As $\supp(f_I) = \{q_I\}$ and $f_I(q_I)=0$,
	Lemma~\ref{lem:minPath} implies for all $q \in
	\supp(g_j)$, we have $g_j(q) \neq +\infty$. Hence,
	for all concrete paths $\pi \in \gamma(\psi)$ we have:
	\[
		\begin{array}{rcll}
			w(\pi[..j]) & \geq & g_j(\pi[j]) -
			f_I(\pi[0]) & \text{from Lemma~\ref{lem:minPath}}\\
			& = & g_j(\pi[j]) \\
			& \geq & f_j^{(n_j)}(\pi[n]) &
				\text{from Lemma~\ref{lem:alwaysGood}}\\
			& \geq & \beta & \text{as required.}
		\end{array}
	\]
\end{proof}

\subsection{Strategy transfer for Adam}
To complete the proof of Theorem~\ref{thm:stratTransfer}, we now show how to
transfer a winning strategy for Adam from $\Gamma$ to a winning finite-memory
observation-based strategy in $G$.  So let us assume $\lambda:\Pi \times \Sigma
\to \Pi$ is a (positional) winning strategy for Adam in $\Gamma$. The
finite-memory observation-based strategy for Adam is similar to that for Eve in
that it perpetually plays $\lambda$, returning to a previous position whenever
the play reaches $\tadam$.  However, the proof of correctness is more intricate
because we need to handle the $+\infty$ function values.

Formally, the finite-memory observation-based
strategy $\lambda^*$ is given as follows.  As
before, let $M = \restriction{\Pi}{\lambda}$
and $\mu_0 = f_I$. Given $\mu \in M$, let
\[
	\mu' = \begin{cases}
		\text{the proper prefix of }
			f_0 \sigma_0 \dots f_\ell
			\text{ such that } f_\ell \preceq_1 f_n
			& \text{if } \mu \in \tadam\\
		\mu & \text{otherwise.}
	\end{cases}
\]
Let us write $\mu' = f'_0 \sigma'_0 \dots f'_m$.
The output function $\alpha_o:M \times \Obs \times \Sigma \to \Obs$ is defined as:
$\alpha_o(\mu,o,\sigma) = \supp(g)$ where $\lambda(\mu',\sigma)) =
\mu' \cdot \sigma \cdot g$. The update function $\alpha_u:M \times \Obs \times
\Sigma \to M$ is defined as: $\alpha_u(\mu, o, \sigma) = \lambda(\mu',\sigma)$.
Note that as the current observation is stored in the memory state, the $\Obs$
input to $\alpha_o$ and $\alpha_u$ is redundant.

To show that $\lambda^*$ is winning for Adam in $G$ we require an analogue to
Lemma~\ref{lem:alwaysGood}. To be precise, we show that, by following
$\lambda^*$ in $G$, Adam ensures the belief functions from \emph{ultimately
proper function-action sequences} induced by play prefixes consistent with it
are $\preceq_r$-larger than his current memory state (for $r$ a function of how
many times his memory has been \emph{reset}).  Given a finite sequence $\mu_0
\dots \mu_n \in M^*$ of memory states we denote by $\reset(\mu_0 \dots \mu_n)$
the number of times the memory is reset along the sequence. That is,
$\reset(\mu_0) = 0$, and if $\mu_i \in \tadam$ then
$\reset(\mu_0 \dots \mu_{i+1}) = \reset(\mu_0 \dots \mu_i) + 1$, otherwise
$\reset(\mu_0 \dots \mu_{i+1}) = \reset(\mu_0 \dots \mu_i)$.

\begin{lemma}\label{lem:alwaysBad}
	Let $\psi = o_0 \sigma_0 \dots \in \Plays(G)$, $\mu_0 \mu_1 \dots \in
	M^\omega$, $k \in \mathbb{N}$, and $f_0 \sigma_0 \dots \in
	\supp^{-1}(\psi)$ be such that:
	\begin{itemize}[nolistsep]
		\item $f_k \sigma_k \dots = \prop(\psi[k..],f_k)$; and for all
			$i \ge 0$,
		\item $o_{i+1} = \alpha_o(\mu_i,o_i,\sigma_i)$ and
		\item $\mu_{i+1} = \alpha_u(\mu_i,o_i,\sigma_i)$.
	\end{itemize}
	If we write $\mu_i = g_i^{(0)} \sigma_i^{(0)} \dots g_i^{(n_i)}$ and $f_k
	\preceq_r g_k^{(n_k)}$ for some $r \in \mathbb{N}$, then for all $i \ge k$
	it holds that
	\(
		f_i \preceq_{r'_i} g_i^{(n_i)}
	\)
	where $r'_i = r + \reset(\mu_k \dots \mu_i)$.
\end{lemma}
\begin{proof}
	We prove this by induction on $i$. For $i=k$ the result clearly holds.
	Now suppose $i \geq k$ and $f_i \preceq_{r'_i} g_i^{(n_i)}$ where $r'_i =
	r + \reset(\mu_k \dots \mu_i)$. We consider two cases depending on
	whether $\mu_i \in \tadam$. If $\mu_{i} \notin \tadam$ then
	\[
		\mu_{i+1} = \mu_{i} \cdot \sigma_i \cdot h
	\]
	where $h$ is a $\sigma_i$-successor of
	$g_i^{(n_i)}$ with $\supp(h) = o_{i+1}$. Furthermore, since no reset
	occurred, we have that $\reset(\mu_0 \dots
	\mu_{i+1}) = \reset(\mu_0 \dots \mu_i)$. Then, by
	Lemma~\ref{lem:wqo-trans}~(iii) we have $f_{i+1} \preceq_{r'_i} h =
	g_{i+1}^{(n_{i+1})}$ and since
	\[
		r'_i = r + \reset(\mu_0 \dots \mu_i) = r + \reset(\mu_0 \dots
		\mu_{i+1}) = r'_{i+1}
	\]
	the result holds for $i+1$.

	Otherwise if $\mu_i \in \tadam$, let $\ell < n_i$ be the index such that
	$g_i^{(n_i)} \preceq_1 g_i^\ell$. We have that
	\[
		\mu_{i+1} = \left(g_i^{(0)} \sigma_i^{(0)} \dots
		\sigma_i^{(\ell-1)}g_i^{(\ell)} \right) \cdot \sigma_{i} \cdot h
	\]
	where $h$ is a $\sigma_{i}$-successor of $g_i^{(\ell)}$ with $\supp(h) =
	o_{i+1}$.  From Lemma~\ref{lem:wqo-trans}~(ii) we have $f_i
	\preceq_{r'_i+1} h$. So by Lemma~\ref{lem:wqo-trans}~(iii) we have
	$f_{i+1} \preceq_{r'_i + 1} h = g_{i+1}^{(n_{i+1})}$, and as
	\[
		r'_i + 1 = r + \reset(\mu_k \dots \mu_i) + 1 =
		r + \reset(\mu_k \dots \mu_{i+1}) =
		r'_{i+1}
	\]
	the result holds for $i+1$.
\end{proof}

We now show how to transfer strategies for Adam.
\begin{lemma}
	\label{lem:Adam}
	Let $G$ be a mean-payoff game with limited observation and let
	$\Gamma$ be the associated reachability game. If Adam has a winning
	strategy in $\Gamma$ then he has a finite-memory observation-based
	winning strategy in $G$.
\end{lemma}

\begin{proof}
	We will show that the finite-memory observation-based
	strategy $\lambda^*$ constructed above is winning
	for Adam. Let $\psi = o_0 \sigma_0 \dots$ be any play consistent with
	$\lambda^*$. That is, there is a sequence $\mu_0 \mu_1 \dots M^\omega$
	such that $o_{i+1} = \alpha_o(\mu_i,o_i,\sigma_i)$ and $\mu_{i+1} =
	\alpha_u(\mu_i,o_i,\sigma_i)$ for all $i \ge 0$. Let us write $\mu_i =
	g^{(0)}_i \sigma_i^{(0)} \dots g_i^{(n_i)}$.
	As $M$ is finite, there exists $\phi = \dots f^* \in M$ and an infinite
	set $\calI\subseteq \mathbb{N}$ of indices such that for all $i \in
	\calI$ we have $\mu_{i} = \phi$.  We will show that this implies there
	exists $\pi \in \gamma(\psi)$ such that $\oMP(\pi)<0$. As $\oMP(\pi)
	\geq \uMP(\pi)$ the result follows.  For convenience, given $n \in
	\mathbb{N}$, let $\suc{n} = \min \{ i \in \mathcal{I} \st i > n\}$.
	Denote by $o^*$ the set $\{q \in \supp(f^*) \st f^*(q) \neq +\infty\}$.
	Note that from the definition of $\tadam$ it follows that
	$o^*$ is non-empty.

	We will use a function-action sequence to find a concrete path that is
	winning for Adam. That is, a concrete path where the
	weights of the prefixes can be identified and seen to be strictly
	decreasing. Unlike in Lemma~\ref{lem:EveIf}, the unique proper
	function-action sequence $\prop(\psi,f_I)$ fulfilled does not hold
	enough information for us to prove this claim. Indeed, to handle
	$+\infty$ values, which correspond to irrelevant paths, we require a
	more complex sequence.  Recall that Adam can place $+\infty$ values in a
	function to tell Eve that the token is \textbf{not} in a particular
	state, \eg $f(q) = +\infty$ signifies $q$ does not hold the token.

	The sequence we construct will be \emph{piecewise proper} in the sense
	that for all $i \in \calI$ the sequence will consist of proper
	successors in the interval $[i,\suc{i})$.  When the sequence reaches an
	element of $\calI$ we ``reset'' the values of the states not in
	$o^*$ to $+\infty$.  More formally, the required sequence,
	$\pprop(\psi,f_I) = h_0 \sigma_0 \dots \in \supp^{-1}(\psi)$, is constructed
	inductively as follows.  Initially, let $h_0 = h_0' = f_I$. For $i \geq
	0$, let $h_{i+1}'$ be the proper $\sigma_i$-successor of $h_i$ with
	$\supp(h'_{i+1}) = o_{i+1}$. If $i \notin \calI$ then $h_i = h_i'$.
	Otherwise, for any $q \in \supp(h'_{i+1})$ we let
	\[
		h'_{i}(q) = \begin{cases}
				+\infty & \text{if } q \notin o\\
				h'_{i}(q) & \text{otherwise.}
			\end{cases}
	\]
	Observe that, by construction, for all $i \in \calI$ and all $q \in
	o^*$ we have that $h_i(q) = h'_i(q)$.

	We now claim that
	\[
		\forall i \in \mathbb{N} :
		h_i \preceq_{r_i} g_i^{(n_i)},
	\]
	where $r_i = \reset(\mu_0 \dots \mu_i)$. From
	Lemma~\ref{lem:wqo-trans}~(i) and Lemma~\ref{lem:alwaysBad} it follows
	that we only need to show that for all $i \in \mathcal{I}$ it holds that
	\(
		h_{i} \preceq_{r_i} f^*.
	\)
	Induction and Lemma~\ref{lem:alwaysBad} imply that for all $i \in \calI$
	we have $h_{i}' \preceq_{r_i} f^*$. Recall that $h_{i}$ differs
	from $h'_{i}$ only on states where $f^*$ is equal to $+\infty$,
	we therefore have $h_{i} \preceq_{r_i} f^*$ as required.

	We will now show that there is an infinite concrete path $q_0 \sigma_0
	\dots \in \gamma(\psi)$ such that $q_{i} \in o^*$ for all $i \in
	\calI$. To do this we will show for any $i \in \calI$ and any $q \in
	o^*$ there is a concrete path in $\gamma(\psi[i..\suc{i}])$, that
	ends in $q$ and starts at some state in $o^*$. The result then follows by
	induction. Let us fix $i\in \calI$, $q \in o^*$, and let $j = \suc{i}$.
	As $h_{j}' \preceq_{r_{j}} f^*$, we have that $h'_{j}(q)
	\neq +\infty$.  From Lemma~\ref{lem:minPath}, there is a concrete path
	$\pi=q_0\sigma_0\dots q_n$ from $q_0 \in o_{i}$ ending at $q_n=q$ such
	that $h'_{j}(q) = h_{i}(q_0) + w(\pi)$. As $h'_{j}(q) \neq +\infty$ it
	follows that $h_{i}(q_0) \neq +\infty$, and as $h_{i}(q_0) = +\infty$ if
	and only if $f^*(q_0) = +\infty$, it follows that $q_0 \in o^*$.
	Note that Lemma~\ref{lem:minPath} implies for all $k\leq |\pi|$ we have
	\[
		w(\pi[..k]) = h'_{i+k}(q_k) - h_{i}(q_0) = h_{i+k}(q_k) - h_{i}(q_0).
	\]
	In particular $w(\pi) = h_{j}(q) - h_{i}(q_0)$.

	Now let $\pi = q_0 \sigma_0 \dots$ be the infinite path implied by the
	above construction and, for convenience, for $i \in \calI$ let $\pi_{i}
	= q_{i} \sigma_{i} \dots q_{j}$ where $j = \suc{i}$.  To show
	$\oMP(\pi)<0$ we need to show $\limsup_{\ell \rightarrow \infty}
	\frac{1}{\ell} w(\pi[..\ell])<0$.  To prove this, we will show there exists a
	constant $\beta<0$ such that for all sufficiently large $\ell$ we have
	$w(\pi[..\ell]) \leq \beta \cdot \ell$.

	For convenience, let $i_0 = \min\calI$ and let $i_\ell = \max\{ i \in
	\mathcal{I} \st i \leq \ell\}$.  From Lemma~\ref{lem:minPath} and
	the construction of $\pi_i$ we have for all $\ell \ge i_0$:
	\begin{align*}
		w(\pi[..\ell]) & = w(\pi[..i_0]) + w(\pi[i_\ell..\ell]) +
			\sum_{\substack{i \in \mathcal{I}\\i \leq \ell}} w(\pi_i)\\
		& = w(\pi[..i_0]) + (h_\ell(q_\ell) - h_{i_\ell}(q_{i_\ell}))
			+ \sum_{\substack{i \in \mathcal{I}\\i \leq \ell}}
			h_{\suc{i}}(q_{\suc{i}}) - h_{i}(q_{i})\\
		& = w(\pi[..i_0]) + h_{\ell}(q_{\ell}) - h_{i_0}(q_{i_0})\\
		& \leq w(\pi[..i_0]) + g_\ell^{(n_\ell)}(q_\ell) - r_\ell -
			h_{i_0}(q_{i_0})\\
	\end{align*}
	There are only finitely many values for $g_\ell^{(n_\ell)}(q_\ell)$ and
	from Lemma~\ref{lem:boundedPlays} we get $r_\ell \geq \lfloor \frac{\ell}{N}
	\rfloor$. Hence \[
		w(\pi[..\ell]) \leq \alpha - \beta'\cdot \ell
	\]
	for constants $\alpha$ and $\beta'>0$. Thus there exists $\beta < 0$
	such that for sufficiently large $\ell$ we have $w(\pi[..\ell]) \leq
	\beta \cdot \ell$.  Hence $\uMP(\pi) \leq \oMP(\pi) <0$.
\end{proof}

\section{Forcibly Terminating Games}\label{sec:ATgames}
The reachability game defined in the previous section gives a sufficient
condition for determining the winner in an MPG with limited observation.
However, as there may be plays where no player wins, such games are not
necessarily determined. The first subclass of MPGs with limited observation we
investigate is the class of games where the associated reachability game is
determined.
\begin{definition}
	An MPG with limited observation is \emph{forcibly terminating} if in
	the corresponding reachability game $\Gamma$ either Adam has a winning
	strategy to reach states in $\tadam$ or Eve has a winning strategy to
	reach states in $\teve$.
\end{definition}
It follows immediately from Theorem~\ref{thm:stratTransfer} that finite-memory
strategies suffice for both players in forcibly terminating games. Note that
an upper bound on the memory required is the number of states in the reachability
game restricted to a winning strategy, and this is exponential in $N$, the bound
obtained in Lemma~\ref{lem:boundedPlays}.

\begin{theorem}[Finite-memory determinacy]\label{thm:fin-mem-det}
	One player always has a winning finite-memory observation-based strategy
	in a forcibly terminating MPG.
\end{theorem}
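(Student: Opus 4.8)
The plan is to read this statement off as an immediate consequence of Theorem~\ref{thm:stratTransfer} together with the definition of a forcibly terminating game. First I would unpack the hypothesis: by definition, $G$ being forcibly terminating means that in the associated reachability game $\Gamma_G$ at least one of the two reachability objectives is achievable---either Adam has a strategy forcing a play into $\tadam$, or Eve has a strategy forcing a play into $\teve$. In either case we have exhibited a player who possesses a winning strategy in $\Gamma_G$.

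Next I would invoke Theorem~\ref{thm:stratTransfer} directly: whichever player wins $\Gamma_G$ is guaranteed a finite-memory observation-based winning strategy in the original MPG $G$. Since the definition of forcibly terminating already supplies such a winning player in $\Gamma_G$, the desired conclusion---that one player has a finite-memory observation-based winning strategy in $G$---follows with no further argument. Note in particular that the ``observation-based'' part of the claim is delivered verbatim by Theorem~\ref{thm:stratTransfer} and requires no separate treatment.

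The only point at which I would slow down is confirming that the transferred strategy is genuinely finite-memory, as this is the sole quantitative content of the statement. Here I would lean on Lemma~\ref{lem:boundedPlays}, which uniformly bounds the length of every play consistent with a winning strategy in $\Gamma_G$ by some $N \in \mathbb{N}$, together with the fact that $\Gamma_G$ branches finitely (a reachability game has a finitary transition function). A winning strategy all of whose consistent plays have length at most $N$ therefore determines only a finite subtree of $\Pi_G$, whose vertices can serve as the memory structure---yielding the exponential-in-$N$ bound already remarked upon just before the statement. I do not anticipate any real obstacle: the substantive work has been discharged in establishing Theorem~\ref{thm:stratTransfer} and Lemma~\ref{lem:boundedPlays}, and this theorem is essentially their packaging into a determinacy statement once one observes that ``forcibly terminating'' is precisely the hypothesis that makes $\Gamma_G$ determined.
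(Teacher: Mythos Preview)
Your proposal is correct and mirrors the paper's own treatment: the paper states that the theorem ``follows immediately from Theorem~\ref{thm:stratTransfer}'' together with the definition of forcibly terminating, and likewise appeals to Lemma~\ref{lem:boundedPlays} for the finiteness (and exponential-in-$N$ bound) of the memory. There is nothing to add.
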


We now consider the complexity of two natural decision problems associated with
forcibly terminating games: the problem of recognizing if an MPG is forcibly
terminating and the problem of determining the winner of a forcibly terminating
game. Both results follow directly from the fact that we can accurately simulate
a Turing Machine with an MPG with limited observation.

\begin{proposition}\label{pro:superThm}
	Let $M$ be a Deterministic Turing Machine. Then there exists an MPG with
	limited observation $G$, constructible in polynomial time, such that Eve
	wins the associated reachability game $\Gamma$ if and only if $M$ halts
	in the accept state and Adam wins $\Gamma$ if and only if $M$ halts in
	the reject state.
\end{proposition}
We will in fact show how to simulate a (deterministic) four-counter
machine ($4$CM). The standard reduction from Turing Machines to $4$CMs, via
finite state machines with two stacks (see \eg~\cite{minsky67}), is
readily seen to be constructible in polynomial time.

\paragraph*{Counter machines}
A counter machine (CM) $M$ consists of a finite set of control states
$S$, an initial state $s_I \in S$, a final accepting state $s_A \in S$,
a final rejecting state $s_R$, a set $C$ of integer-valued counters and
a finite set $\delta_M$ of instructions manipulating the counters.
$\delta_M$ contains tuples $(s,instr,c,s')$ where $s,s' \in S$ are
source and target states respectively, the action $instr \in \{\mathrm{INC},
\mathrm{DEC}\}$ applies to counter $c \in C$.  It also contains tuples of the
form $(s,0\mathrm{CHK},c,s',s^0)$ where $s',s^0$ are two target states, one of
which will be chosen depending on the value of counter $c$ at the moment
the instruction is ``executed''. Without loss of generality we may
assume $M$ is deterministic in the sense that for every state $s \in S$
there is exactly one instruction of the form
$(s,0\mathrm{CHK},\cdot,\cdot,\cdot)$ in $\delta_M$ or one of the form
$(s,\cdot,\cdot,\cdot)$. We also assume that $\mathrm{DEC}$ instructions are
always preceded by $0\mathrm{CHK}$ instructions so that counter values never go
below $0$.

A \emph{configuration} of $M$ is a pair $(s,v)$ where $s \in S$ and $v : C
\to \mathbb{N}$ is a valuation of the counters. A \emph{valid run} of
$M$ is a finite sequence $(s_0,v_0) \delta_0 (s_1,v_1) \delta_1 \dots$
$\delta_{n-1} (s_n,v_n)$ where $\delta_i \in \delta_M$ is either
$(s_i,instr_i,c_i,s'_i)$ or $(s_i,instr_i,c_i,s'_i,s^0_i)$ and
$(s_i,v_i)$ are configurations of $M$ such that $s_0 = s_I$, $v_0(c) = 0$ for
all $c \in C$, and for all $0 \le i < n$ we have that:
\begin{itemize}[nolistsep]
	\item $v_{i+1}(c) = v_i(c)$ for $c \in C \setminus \{c_i\}$;
	\item if $instr_i = \mathrm{INC}$ then $v_{i+1}(c_i) = v_i(c_i) + 1$ and
		$s_{i+1} = s'_i$;
	\item if $instr_i = \mathrm{DEC}$ then $v_{i+1}(c_i) = v_i(c_i) - 1$ and
		$s_{i+1} = s'_i$;
	\item if $instr_i = 0\mathrm{CHK}$ then $v_{i+1}(c_i) = v_i(c_i)$ and if
		$v_i(c_i) = 0$ we have $s_{i+1} = s^0_i$, otherwise
		$s_{i+1} = s'_i$.
\end{itemize}
The run is \emph{accepting} if $s_n = s_A$ and it is \emph{rejecting} if
$s_n = s_R$.

\begin{proof}[Proof of Proposition~\ref{pro:superThm}]
	Given a $4$CM $M = ( S,s_I,s_A,s_R,C,\delta_M )$, we now
	show how to construct an MPG with limited observation $G$ in which Eve
	wins the associated reachability game $\Gamma$ if and only if $M$ has an
	accepting run, and Adam wins $\Gamma$ if and only if $M$ has a rejecting
	run.  Plays in $G$ correspond to executions of $M$. As we will see, the
	tricky part is to make sure that zero-check instructions are faithfully
	simulated by one of the players.  Initially, both players will be
	allowed to declare how many instructions the machine needs to execute in
	order to reach an accepting or rejecting state. Either player can bail
	out of this initial ``pumping phase'' and become the \emph{Simulator}.
	The Simulator is then responsible for the faithful simulation of $M$ and
	the opponent will be monitoring the simulation and punish him if the
	simulation is not executed correctly.  Let us now go into the details.

	\item \paragraph*{Control structure}
	The control structure of the machine $M$ is encoded in the
	observations of our game. To be precise, to each state of $M$, there
	will correspond at most three observations in the game. We require two
	copies of each such observation since, in order to punish Adam or Eve
	(whoever plays the role of Simulator), existential and universal gadgets
	have to be set up in a different manner. For technical reasons that will
	be made clear below, we also need two additional observations. Formally,
	the observation set in our game contain observations $\{b^+,b^0,b^-\}$,
	$\{a^+,a^-\}$, and $\{q_I\}$, which do not correspond to instructions
	from the $4$CM but they are used in gadgets that will make sure that zero
	tests are faithfully executed.

	\item \paragraph*{Counter values}
	The values of counters will be encoded using the weights of
	transitions that reach designated states. We will associate to each
	observation (so to each state in the $4$CM) two states for each
	counter: $c_i^+$ and $c_i^-$, for $i \in \{1,2,3,4\}$. Intuitively, an
	abstract path, corresponding to the simulation of a run of the machine,
	will encode the value of counter $i$, at each step, as the weight of the
	shortest suffix from the initial pumping gadget to $c_i^+$.

	\item \paragraph*{Start of the construction}
	The mean-payoff game with limited observation $G = ( Q,q_I,\Sigma,\Delta,w,\Obs )$
	starts in observation $\{q_I\}$. For now, we will describe the
	transitions of $G$ on symbols $\Sigma' \subset \Sigma$ which allow Eve
	to ``declare'' the value of a counter as being zero or non-zero, as well as
	``bailing'' from certain gadgets in the game. More formally, let
	$\Sigma' := \{z,\overline{z},\texttt{bail}\}$. The transition relation
	$\Delta$ contains $\sigma$-transitions (for all $\sigma \in \Sigma'$)
	from $q_I$ to $b^+,b^0,b^-$. This observation represents the pumping
	phase of the simulation. From here each player will be allowed to
	declare how many steps they require to reach a halting state that will
	accept or reject. If Adam bails, we go to the initial instruction of $M$
	on the universal side of the construction ($s^\forall_I$), if Eve does
	so then we go to the analogue in the existential side ($s^\exists_I$).
	$\Sigma'$ contains a symbol $\texttt{bail}$ which represents Eve
	choosing to leave the gadget and
	try simulating an accepting run of $M$, that is $\Delta \ni
	(b^+,\texttt{bail},(s^\exists_I,\alpha^-))$,
	$(b^-,\texttt{bail},(s^\exists_I,\alpha^+))$, and
	$(b^0,\texttt{bail},(s^\exists_I,c))$ for $c \in \{c_i^+,c_i^- \st i =
	1,\ldots, 4\}$. For all other actions in $\Sigma'$, self-loops are added on
	the states $b^+,b^0,b^-$ with weights $+1,0,-1$ respectively. Meanwhile,
	Adam is able to exit the gadget at any moment---via non-deterministic
	transitions $(b^+,\sigma,(s^\forall_I,\alpha^-))$,
	$(b^-,\sigma,(s^\forall_I,\alpha^+))$, $(b^0,\sigma,(s^\forall_I,c))$
	where $c \in \{c_i^+,c_i^-\}$ for all $i$ and $\sigma \in \Sigma'
	\setminus \{\texttt{bail}\}$---to the universal side of the
	construction, \ie he will try to simulate a rejecting run of the
	machine. Bailing transitions (transitions going to states
	$(s^\exists_I,\cdot)$ or $(s^\forall_I,\cdot)$) have weight $0$.

\begin{figure}
\begin{center}
\begin{tikzpicture}
	\node[state](A){$c_i^+$};
	\node[state](B)[right=of A]{$c_i^-$};
	\node[state](C)[right=of B]{$\alpha^+$};
	\node[state](D)[right=of C]{$\alpha^-$};
	\node[fit=(A) (B) (C) (D), label=right:$s^\exists_I$]{};

	\node[state,below= 1.3cm of A](A2){$b^0$};
	\node[state,right= 2.5cm of A2](B2){$b^-$};
	\node[state,below= 1.3cm of D](C2){$b^+$};
	\node[fit=(A2) (B2) (C2)]{};

	\node[state,below= 1.3cm of A2](A3){$c_i^+$};
	\node[state,right= of A3](B3){$c_i^-$};
	\node[state,right=of B3](C3){$\alpha^+$};
	\node[state,right=of C3](D3){$\alpha^-$};
	\node[fit=(A3) (B3) (C3) (D3),label=right:$s^\forall_I$]{};

	\path
	(A2) edge[loopleft] node[el]{$\Sigma\setminus\{\texttt{bail}\},0$} (A2)
	(B2) edge[loopleft] node[el]{$\Sigma\setminus\{\texttt{bail}\},-1$} (B2)
	(C2) edge[loopleft] node[el]{$\Sigma\setminus\{\texttt{bail}\},1$} (C2)

	(A2) edge node[el,swap]{$\Sigma \setminus \{\texttt{bail}\},0$} (A3)
	(A2) edge node[el,pos=0.7]{$\Sigma \setminus \{\texttt{bail}\},0$} (B3)
	(B2) edge node[el,pos=0.7]{$\Sigma \setminus \{\texttt{bail}\},0$} (C3)
	(C2) edge node[el]{$\Sigma \setminus \{\texttt{bail}\},0$} (D3)

	(A2) edge node[el]{$\texttt{bail},0$} (A)
	(A2) edge node[el,swap,pos=0.7]{$\texttt{bail},0$} (B)
	(B2) edge node[el,swap,pos=0.7]{$\texttt{bail},0$} (C)
	(C2) edge node[el,swap]{$\texttt{bail},0$} (D);
\end{tikzpicture}
\end{center}
\caption{Initial pumping gadget for the $4$CM simulation}
\label{fig:pump-gadget}
\end{figure}
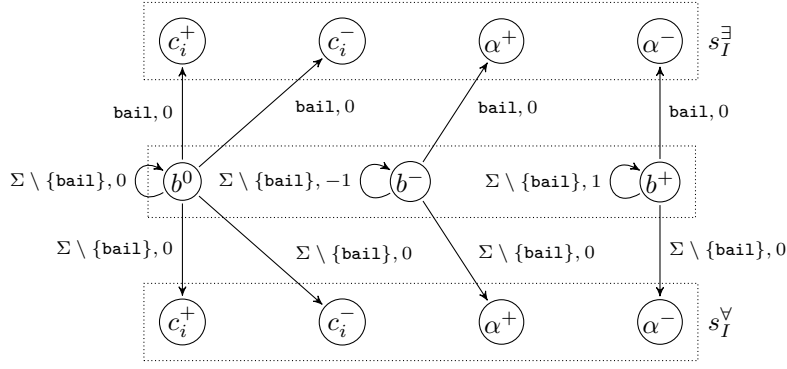

	Note that after these initial transitions the (simulated) value of all
	the counters is $0$.  Indeed, this corresponds to the beginning of a
	simulation of $M$ starting from configuration $(s_I, v)$ where $v(c) =
	0$ for all $c \in C$.

	\item \paragraph*{Counter increments \& decrements}
	Let us now explain how Eve simulates increments of counter values using
	this encoding (decrements are treated similarly). The gadget we explain
	below actually works the same in both sides of the construction, \ie
	the universal and existential gadgets for increments and decrements are
	identical. For that, consider Figure~\ref{fig:inc-gadget2}, the upper
	part of the figure is related to the state $s$ of $M$,
	while the bottom part is related to the state $s'$ of $M$, and assume
	that $(s,\mathrm{INC},c_i,s') \in \delta_M$.

	As can be seen in the figure, the observation related to the instruction
	$s$ contains the states $c_i^+,c_i^-$. These states are used for the
	encoding of the value of counter $c_i$.
	The additional states $\alpha^+,\alpha^-$ are used
	to encode the number of steps in the simulation (again one positive
	ending in $\alpha^+$ and one negative encoding in $\alpha^-$). Now, let
	us consider the transitions of the gadget. The increment of the counter
	$c_i$ from state $s$ to state $s'$ is encoded using the weights on
	the transitions that go from the observation $s$ to the observation
	$s'$. As you can see, the weight on the edge between the copy of state
	$c_i^+$ of observation $s$ to the copy of this state in observation $s'$
	is equal to $+1$, while the weight on the edge between the copy of state
	$c_i^-$ of observation $s$ to the copy of this state in observation $s'$
	is equal to $-1$.
	As you can see from the figure, when going from state $s$ to state
	$s'$, we also increment the additional counter that keeps track of the
	number of steps in the simulation of $M$. As the machine is
	deterministic there is no choice for Eve in observation $s$, since only
	an increment can be executed, this is why, regardless of the action
	chosen from $\Sigma'$, the same transition is taken.

\begin{figure}
\begin{minipage}[b]{0.49\linewidth}
\begin{center}
\begin{tikzpicture}[node distance=0.5cm]
	\node[state](A){$c_i^+$};
	\node[state](B)[right=of A]{$c_i^-$};
	\node[state](C)[right=of B]{$\alpha^+$};
	\node[state](D)[right=of C]{$\alpha^-$};
	\node[fit=(A) (B) (C) (D), label=right:$s$]{};

	\node[state,below= 1.3cm of A](A2){$c_i^+$};
	\node[state,right= of A2](B2){$c_i^-$};
	\node[state,right=of B2](C2){$\alpha^+$};
	\node[state,right=of C2](D2){$\alpha^-$};
	\node[fit=(A2) (B2) (C2) (D2),label=right:$s'$]{};

	\path
	(A) edge node[el]{$\Sigma',1$} (A2)
	(B) edge node[el]{$\Sigma',-1$} (B2)
	(C) edge node[el]{$\Sigma',1$} (C2)
	(D) edge node[el]{$\Sigma',-1$} (D2);
\end{tikzpicture}
\end{center}
\caption{Observation gadget for $(s,\mathrm{INC},c_i,s')$ instruction. For $(s,q) \in Q$
	only the $q$ component is shown.}
\label{fig:inc-gadget2}
\end{minipage}
\hfill
\begin{minipage}[b]{0.49\linewidth}
\begin{center}
\begin{tikzpicture}[node distance=0.5cm]
	\node[state](A){$c_i^+$};
	\node[state](B)[right=of A]{$c_i^-$};
	\node[state](C)[right=of B]{$\alpha^+$};
	\node[state](D)[right=of C]{$\alpha^-$};
	\node[fit=(A) (B) (C) (D),label=right:$s$]{};

	\node[state](I)[left=of A,initial above]{$q_I$};
	\node[fit=(I)]{};

	\node[state,below=1.3cm of A](A2){$c_i^+$};
	\node[state,right=of A2](B2){$c_i^-$};
	\node[state,right=of B2](C2){$\alpha^+$};
	\node[state,right=of C2](D2){$\alpha^-$};
	\node[fit=(A2) (B2) (C2) (D2),label=right:$s^0$]{};

	\path[]
	(A) edge node[el]{$z,0$} (A2)
	(B) edge node[el]{$z,0$} (B2)
	(C) edge node[el]{$z,1$} (C2)
	(D) edge node[el]{$z,-1$} (D2);

	\path[]
	(A) edge[out=-120,in=-60,looseness=1] node[el,pos=0.45]{$\overline{z},-1$} (I)
	(B) edge[out=150,in=30,looseness=1] node[el,swap,pos=0.6]{$z,0$} (I);
\end{tikzpicture}
\end{center}
\caption{Existential observation gadget for $(s,0\mathrm{CHK},c_i,s',s^0)$ instruction.
	 Transitions to $s'$ observation not shown.}
\label{fig:0chk-gadget2}
\end{minipage}
\end{figure}

	\item \paragraph*{Existential zero checks}
	Now, let us turn to the gadget of Figure~\ref{fig:0chk-gadget2}, that is
	used to simulate zero-check instructions. We first focus on the case
	in which it is the duty of Eve to reveal if the counter has value zero
	or not, by forcibly choosing the next letter to play in $\{z,
	\overline{z}\} \subset \Sigma'$. In the observation that
	corresponds to the state $s$ of $M$, Eve decides to declare
	that the counter $c_i$ is equal to zero (by issuing $z$) or not
	(by issuing $\overline{z}$), then Adam resolves non-determinism
	as follows. If Eve does not cheat then Adam should let the
	simulation continue to either $s^0$ or $s'$ depending on
	Eve's choice (the figure only depicts the branching to $s^0$,
	the branching to $s'$ is similar). Now if Eve has cheated, then
	Adam should have a way to retaliate: we allow him to do so by
	branching to observation $\{q_I\}$ from state $(s,c^-_i)$ with
	weight $0$ in case $z$ has been issued and the counter $c_i$ is
	not equal to zero and with weight $-1$ in case $\overline{z}$
	has been issued and the counter $c_i$ is equal to zero. It
	should be clear that in both cases Adam closes a bad abstract
	cycle.

	\item \paragraph*{Universal zero checks}
	A similar trick is used for the gadget from
	Figure~\ref{fig:0chk-gadget-adam}, where Adam is forced to simulate a
	truthful zero check or lose $\Gamma$. Since Adam can control
	non-determinism and not the action chosen, we have transitions going
	from $(s,\cdot)$ to states in both $(s_z,\cdot)$ and
	$(s_{\overline{z}})$ with weight $0$ and all actions in $\Sigma'$. Eve
	is then allowed to branch back to $q_I$ as follows. If Adam does not
	cheat, then Eve will play any action in $\Sigma' \setminus
	\{\texttt{bail}\}$ and transitions, with weights similar to those used
	in the the existential check gadget, will take the play from
	$(s_z,\cdot)$ to $(s^0,\cdot)$ and from $(s_{\overline{z}},\cdot)$ to
	$(s',\cdot)$. Now if Adam has cheated by taking the play to
	$(s_z,\cdot)$ when $c_i$ was not zero, then Eve---by playing
	$\texttt{bail}$---can go from $(s_z,c^+_i)$ to the initial observation
	with weight $-1$ and close a good abstract cycle. (Recall that a
	good abstract cycle is \textbf{good for Eve}: by repeatedly closing
	good cycles, Eve can win the MPG. Thus, closing a good cycle is our
	way of punishing Adam.) If Adam cheated by taking the play to
	$(s_{\overline{z}},\cdot)$ when $c_i$ was indeed zero, Eve can go (with
	the same action) from $(s_{\overline{z}},c^-_i)$ to the initial
	observation with weight $0$ again and close a good abstract cycle.
	Indeed, Adam can escape the zero-check gadget by choosing a non-proper
	successor. We will shortly explain why this is not a viable option for
	him.

\begin{figure}
\begin{center}
\begin{tikzpicture}[node distance=0.5cm]	
	\node[state](A){$c_i^+$};
	\node[state](B)[right=of A]{$c_i^-$};
	\node[state](C)[right=of B]{$\alpha^+$};
	\node[state](D)[right=of C]{$\alpha^-$};
	\node[fit=(A) (B) (C) (D),label=right:$s$](o){};

	\node[state,below= 1.2cm of A, xshift=-.3cm](A2){$c_i^+$};
	\node[right=of A2](B2){\dots};
	\node[fit=(A2) (B2),label=right:$s_z$](oz){};

	\node[state,below= 2.2cm of C, xshift=.3cm](B3){$c_i^-$};
	\node[right=of B3](C3){\dots};
	\node[fit=(B3) (C3),label=right:$s_{\overline{z}}$](onz){};
	\node[state](I)[left= 2cm of A,initial]{$q_I$};
	\node[fit=(I)]{};

	\node[below=1.8cm of oz, draw, densely dotted,
	rectangle,label=right:$s^0$,inner sep=.5cm](X){};
	\node[below=0.8cm of onz, draw, densely dotted,
	rectangle,label=right:$s'$,inner sep=.5cm](XX){};

	\path[]
	(o) edge node[el,swap,pos=0.9]{$\Sigma',0$} (oz)
	(o) edge node[el]{$\Sigma',0$} (onz)
	(A2) edge node[el,swap,pos=0.7]{$\texttt{bail},-1$} (I)
	(B3) edge[bend left,looseness=1.5] node[el]{$\texttt{bail},0$} (I)
	(oz) edge node[el,swap,pos=0.8]{$\Sigma'\setminus \{\texttt{bail}\},0$} (X)
	(onz) edge node[el,swap]{$\Sigma'\setminus \{\texttt{bail}\},0$} (XX)
;
\end{tikzpicture}
\end{center}
\caption{Universal observation gadget for $(s,0\mathrm{CHK},c_i,s',s^0)$ instruction.
	Transitions to $s',s^0$ observations are weighted as with the existential
	observation gadget.}
\label{fig:0chk-gadget-adam}
\end{figure}
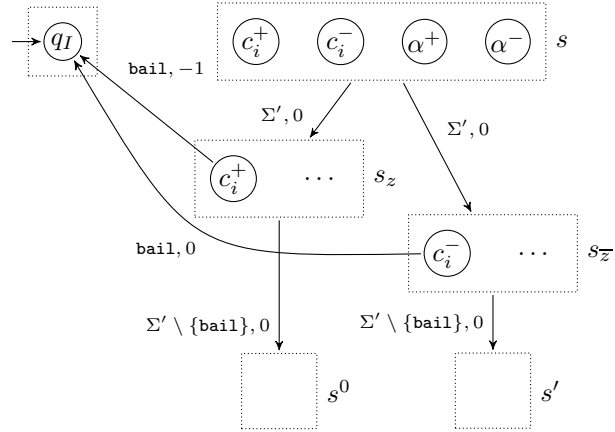

	\item \paragraph*{Stopping Simulator}
	It should be clear also from the gadgets, that the opponent of Simulator
	has no incentive to interrupt the simulation if there is no cheat. Doing
	so is actually beneficial to Simulator, who can get a function-action
	sequence which makes him win $\Gamma$.

	Finally, $\Delta$ also contains self-loops at all $(s_A, \cdot)$ with
	all $\Sigma'$ and with $0$ weights and at all $(s_R, \cdot)$ with all
	$\Sigma'$ and with $-1$ weights. Thus, if the play reaches the
	observation representing state $s_F$ or $s_R$ from $M$ then Simulator
	will be able to force function-action sequences which allow him to win
	$\Gamma$.

	\item \paragraph*{Making Adam play properly} We will now explain the
	idea behind observation gadget $\{a^+,a^-\}$.
	Note that Adam could break Eve's simulation of an accepting run by
	declaring the value of functions from $\Gamma$, which are actually
	our means of encoding the values of the counters, to be $+\infty$ (or at
	least some subset of the values of the functions). We describe how we
	obtain the final set of actions for the constructed game and mention the
	required transitions from every observation in the game so that it is
	not in the interest of Adam to do the latter.

	Denote by $(o,q_i)$ the $i$-th state in observation $o$. Observe that in
	our construction we need at most $10$ states per observation: two copies
	of every counter state and two additional step counters. The full action
	set in the game is defined as follows $\Sigma := \Sigma' \cup \{q_i \st
	0 \le i < 10\} \cup \{\texttt{ex}\}$. For every observation $o$ in $G$
	we add the transitions $((o,q_i),q_i,a^-),((o,q_j),q_i,a^+)$ for all
	$q_i,q_j$ in $o$ where $q_i \neq q_j$.  To finish, we add the self-loops
	$(a^+,\sigma,a^+)$ and $(a^-,\sigma,a^-)$, with weights $+1$ and $-1$
	respectively, as part of $\Delta$ for all $\sigma \in \Sigma$.  Clearly,
	Adam cannot choose anything other than proper $\sigma$-successors in
	$\Gamma$ or he gives Eve enough information for her to win the game.
	Indeed, if the game is currently at observation $o$ and Adam has chosen
	a non-proper successor, then Eve knows some state $q_i \in o$ does not
	currently hold the token.  Hence, she can play action $q_i$ and be sure
	to reach the state $a^+$ where she will win the game.

	\item \paragraph*{Bound on the length of the simulation}
	All that remains is to show how we allow the opponent of Simulator to stop
	the simulated run of $M$ in case Simulator exhausts the number of
	instructions he initially declared would be used to accept or reject.
	The $\texttt{ex}$ action is used in transitions
	$((o,\alpha^+),\texttt{ex},q_I) \in \Delta$ for all observation
	gadgets $o$ in the universal side of the construction. This allows Eve
	to stop Adam (who is playing Simulator) in case he tries to simulate
	more steps than he said were required for $M$ to reject. (Indeed, Adam
	may instead choose to move to another observation besides $\{q_I\}$ on
	$\texttt{ex}$, but then he reveals to Eve that some state $q_i$ in the
	following observation cannot hold the token, and she will then play $q_i
	\in \Sigma$ to win from there.) Similarly, in
	the existential part of the construction, we add a transition
	$((o,\alpha^-),\sigma,q_I)$ for all observation gadgets $o$ and all
	$\sigma \in \Sigma$, which lets Adam stop Eve's simulation if she tries
	to cheat in the same way.

	Finally, to have the game be limited observation we let all missing
	$\sigma$-transitions on the existential (resp. universal) side of the
	simulation go to a sink state in which Adam (Eve) wins.

	\item \paragraph*{Correctness}
	Now, let us prove the correctness of the overall construction. Assume
	that $M$ has an accepting or rejecting run. Then, Simulator, by
	simulating faithfully the run of $M$ has an observation-based strategy that allows him to
	force abstract paths which induce good or bad abstract cycles depending
	on who is simulating. Clearly, in this case even if the opponent decides
	to interrupt the simulation $M$ at a zero-check gadget, he will only
	be helping Simulator.

	If $M$ has no accepting or rejecting run, then by simulating the machine
	faithfully, Simulator will be generating cycles in the control state of
	the machine and such abstract paths are ``mixed'' because of concrete
	paths between corresponding $\alpha^-,\alpha^+$ states. Cheating does
	not help him either since after the opponent catches him cheating and
	restarts the simulation of the machine (by returning to the initial
	observation), the corresponding path is losing for him.
\end{proof}

It follows from Proposition~\ref{pro:superThm} that determining if a given
limited-observation MPG is forcibly terminating is as hard as determining if a
given $4$CM halts (by accepting or rejecting). As the latter problem is known to
be undecidable, we obtain the following result.
\begin{theorem}[Class membership]\label{thm:undec-ft}
	Let $G$ be an MPG with limited observation. Determining if $G$ is
	forcibly terminating is undecidable.
\end{theorem}

Although determining if Eve wins a forcibly terminating MPG is decidable,
Proposition~\ref{pro:superThm} implies the problem is extremely hard. Let
\R~denote the class of all decision problems solvable by a Turing machine. We
say a problem is \R-complete under polynomial reductions if it is decidable and
if any decidable problem reduces to it in polynomial time.
\begin{theorem}[Winner determination]\label{thm:r-complete}
	Let $G$ be a forcibly terminating MPG.
	Determining if Eve wins $G$ is $\R$-complete under polynomial reductions.
\end{theorem}
\begin{proof}
	For decidability,
	Lemma~\ref{lem:boundedPlays} implies that an alternating Turing Machine
	simulating a play on $\Gamma$ will terminate. Regarding hardness, we
	will argue that any decidable problem reduces to winner determination of
	forcibly terminating MPGs via our $4$CM simulation. Indeed, it is known
	that any given Turing machine can be simulated by a $4$CM of polynomial
	size with respect to the size of the original Turing machine. Also,
	given a decidable problem, we know there exists a Turing machine which,
	given any instance of the problem, always halts and outputs a positive
	or negative answer. We construct, from the Turing machine and a given
	instance of the problem, the corresponding $4$CM and, in turn, the
	corresponding limited-observation MPG as in the proof of
	Proposition~\ref{pro:superThm}. Since the original Turing machine always
	halts, the game is guaranteed to be forcibly terminating. Now, it should
	be clear that in the constructed game Eve wins if and only if the $4$CM
	accepts if and only if the Turing machine accepts the instance. As both
	the construction of the CM and the game are feasible in polynomial time,
	the result follows.
\end{proof}

\subsection{Modifications for Theorem~\ref{thm:liminf}}
\label{sec:proof-liminf}
To prove Theorem~\ref{thm:liminf} we reduce from the non-termination problem for
$2$CMs using a construction similar to the one used in the proof of
Proposition~\ref{pro:superThm}.  Given a $2$CM $M$, we construct a
game $G$ as in the proof of Proposition~\ref{pro:superThm}, with the following
adjustments:
\begin{itemize}[nolistsep]
	\item We only consider the universal side of the simulation, but allow
		both players to exit the initial pumping phase into it;
	\item The observation corresponding to the accept state of $M$ is a sink
		state winning for Adam;
	\item The $\alpha^-$ states are replaced with $\beta$ states which have
		transitions to other $\beta$ states of weight $0$ except in one
		case specified below;
	\item The pumping gadget has self loops of weights $0,0,-1$ and the
		transition from $b^+$ to $\beta$ has weight $-1$ if Eve exits
		and weight $0$ if Adam exits;
	\item The \texttt{ex} transition also goes from $\beta$ states to $q_I$.
\end{itemize}

Suppose the counter machine halts in $N$ steps. The observation-based winning
strategy for Adam is as
follows. Exit the pumping gadget after $N$ steps and faithfully simulate the
counter machine. Suppose Eve can beat this strategy. If she allows a faithful
simulation for $N$ steps then Adam reaches a sink state and wins, so Eve must
play \texttt{ex} within $N$ steps of the simulation. Let us consider each
cycle of at most $2N$ steps. If she waits for Adam to exit the pumping gadget
then the number of steps in the simulation is less than the number of steps in
the pumping gadget, so a negative cycle is closed. On the other hand if she
exits the pumping gadget before $N$ steps then the cycle through the $\beta$
states has negative weight. In both cases, a negative cycle is closed in at
most $2N$ steps, so the limit average is bounded above by $-\frac{1}{2N}$.
Thus this strategy is winning for Adam.

Now suppose the counter machine does not halt. The (infinite memory)
observation-based winning strategy for Eve is as follows. For increasing $n$,
exit the pumping gadget after $n$ steps and faithfully simulate (\ie call any,
and only, cheats of Adam) the counter machine for $n$ steps. Then play
\texttt{ex} and increase $n$.  Cheating in the simulation does not benefit Adam,
so we can assume Adam faithfully simulates the counter machine. Likewise, if Eve
always waits until the number of steps in the simulation exceeds the number of
steps in the pumping gadget, then there is no benefit for Adam to exit the
pumping gadget. However if the play proceeds as Eve intends then the weight of
the path through the $\alpha^+$ states is non-negative and although the weight
through the $\beta$ states is negative, the limit average is $0$. Thus the
strategy is winning for Eve.\qed

\section{Forcibly First Abstract Cycle Games}
\label{sec:adeqpure-inc}
In this section and the next we consider restrictions of forcibly terminating
games in order to find sub-classes with more efficient algorithmic bounds.
The negative algorithmic results from the previous section largely
arise from the fact that the abstract cycles required to determine the winner
are not necessarily simple cycles. Our first restriction of forcibly terminating
games is the restriction of the abstract cycle-forming game to simple cycles.

More precisely, let $G = ( Q,q_I,\Sigma,\Delta,w,\Obs )$
be an MPG with limited observation and $\Gamma = (
\Pi,\Sigma,f_I,\delta,\teve,\tadam )$ be the
associated reachability game. Define $\Pi' \subseteq \Pi$ as the set of all
sequences $f_0 \sigma_0 f_1 \sigma_1 \dots f_n \in \Pi$ such that
$\supp(f_i) \neq \supp(f_j)$ for all $0 \le i < j < n$ and denote by $\Gamma'$
the reachability game $( \Pi', \Sigma, f_I, \delta', \teve', \tadam'
)$ where $\delta'$ is $\delta$ restricted to $\Pi'$, $\teve' = \teve
\cap \Pi'$, and $\tadam' = \tadam \cap \Pi'$.

\begin{definition}
	An MPG with limited observation is \emph{forcibly first abstract
	cycle} (or forcibly FAC) if in the associated reachability game
	$\Gamma'$ either Adam has a winning strategy to reach states in
	$\tadam'$ or Eve has a winning strategy to reach states in $\teve'$.
\end{definition}

One immediate consequence of the restriction to simple abstract cycles is that
the bound in Lemma~\ref{lem:boundedPlays} is at most $|\Obs|$.  In particular an
alternating Turing Machine can, in linear time, simulate a play of the
reachability game and decide which player, if any, has a winning strategy. Hence
the problems of deciding if a given MPG with partial observation is forcibly FAC
and deciding the winner of a forcibly FAC game are both solvable in $\PSPACE$.
The next results show that there is a matching lower bound for both these
problems.

\begin{theorem}[Class membership]
\label{thm:isadeqpure}
	Let $G$ be an MPG with limited observation. Determining if $G$ is
	forcibly FAC is \PSPACE-complete.
\end{theorem}

\begin{proof}
	For \PSPACE~membership we observe that a linear bounded alternating
	Turing Machine can decide whether one of the players can force to reach
	$\teve'$ or $\tadam'$ in $\Gamma'$. To show hardness we use a reduction
	from the True Quantified Boolean Formula (TQBF) problem. Given a
	\emph{fully quantified} Boolean formula $\Psi = \exists x_0 \forall x_1
	\dots \mathcal{Q} x_{n-1} (\Phi)$, where $\mathcal{Q} \in
	\{\exists,\forall\}$ and $\Phi$ is a Boolean formula expressed in
	\emph{conjunctive normal form} (CNF), the TQBF problem asks whether
	$\Psi$ is true or false. The TQBF problem is known to be
	\PSPACE-complete~\cite{sm73}.

	This problem is often rephrased as a game between Adam and Eve. In this
	game the two players alternate choosing values for each $x_i$ from
	$\Phi$. Eve wins if the resulting evaluation of $\Phi$ is true while
	Adam wins if it is false. We simulate such a game with the use of
	``diamond'' gadgets that allow Eve to choose a value for existentially
	quantified variables by letting her choose the next observation.
	Similarly, the same gadget---except for the labels on the transitions,
	which are completely non-deterministic in the following case---allow
	Adam to choose values for variables that are universally quantified.

	We construct a game $G_\Psi = ( Q, q_I, \Sigma, \Delta, w, \Obs
	)$ in which there are no concrete negative cycles, hence it
	follows from Lemma~\ref{lem:caVa} that there are no bad cycles. The game
	will thus be forcibly FAC if and only if Eve is able to force good
	cycles. If Eve is unable to prove the QBF is true, Adam will be able to
	avoid such plays.  For this purpose, the ``diamond'' gadgets employed
	have two states per observation. This will allow two disjoint concrete
	paths to go from the initial state $q_I$ through the whole arena and
	form a simple abstract cycle that is either good or not good depending
	on where the cycle started from.

	Concretely, let $x_1$ be a universally quantified variable from $\Psi$.
	We add a gadget to $G_\Psi$ consisting of eight states grouped into four
	observations: $\{b_0^-,b_0^0\}$, $\{\overline{x_1},\overline{z_1}\}$,
	$\{x_1,z_1\}$, $\{b_1^-,b_1^0\}$.  We also add the following
	transitions:
	\begin{itemize}[nolistsep]
		\item from $b_0^-$ to $\overline{x_1}$ and $x_1$, $b_0^0$ to
			$\overline{z_1}$ and $z_1$, with all $\Sigma$ and weight
			$0$;
		\item from $\overline{x_1}$ and $x_1$ to $b_1^-$,
			$\overline{z_1}$ and $z_1$ to $b_1^0$, with all $\Sigma$
			and the first two with weight $-1$ while the last two
			have weight $0$.
	\end{itemize}
	Figure~\ref{fig:formula-game} shows the universal ``diamond'' gadget
	just described. The observation $\{\overline{x_1},\overline{z_1}\}$
	corresponds to the variable being given a false valuation, whereas the
	$\{x_1,z_1\}$ observation models a true valuation having been picked.
	Observe that the choice of the next observation from $\{b_0^-,b_0^0\}$
	is completely non-deterministic, \ie Adam chooses the valuation for
	this variable.

	For existentially quantified variables, the first set of transitions
	from the gadget is slightly different. Let $x_i$ be an existentially
	quantified variable in $\Psi$, then the upper part of the gadget
	includes transitions from $b_i^-$ to $\overline{x_i}$ and from $b_i^0$
	to $\overline{z_i}$ with action symbol $\lnot x_i$ and weight $0$; as
	well as transitions from $b_i^-$ to $x_i$ and from $b_i^0$ to $z_i$ with
	action symbol $x_i$ and weight $0$.

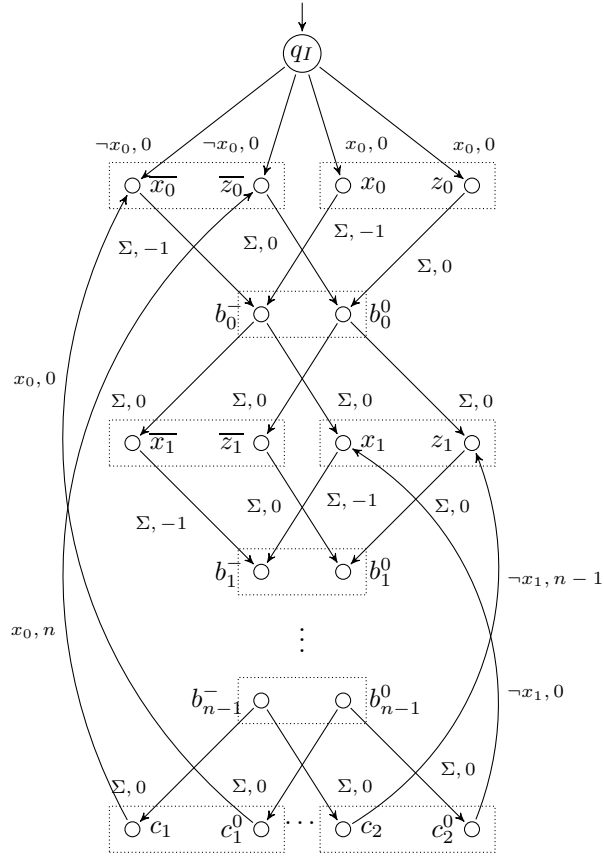
\begin{figure}
\begin{center}
\begin{tikzpicture}[bend angle=40, node
			distance=1.5cm,v/.style={draw,circle,minimum size=2pt,inner sep=2pt}]
	\node[state,initial above](A){$q_I$};
	\node[v,below left = 1.5cm and 2cm of A,label=right:$\overline{x_0}$](B){};
	\node[v,right = of B,label=left:$\overline{z_0}$](B0){};
	\node[v,below right = 1.5cm and 2cm of A,label=left:$z_0$](C0){};
	\node[v,left = of C0,label=right:$x_0$](C){};
	\node[v,below = of B0,label=left:$b_0^-$](D){};
	\node[v,below = of C,label=right:~$b_0^0$](D0){};

	\node[fit=(B) (B0)]{};
	\node[fit=(C) (C0)]{};
	\node[fit=(D) (D0)]{};

	\path
	(A) edge node[el,swap,pos=0.85]{$\lnot x_0,0$} (B)
	(A) edge node[el,swap,pos=0.85]{$\lnot x_0,0$} (B0)
	(A) edge node[el,pos=0.85]{$x_0,0$} (C)
	(A) edge node[el,pos=0.85]{$x_0,0$} (C0)
	(B) edge node[el,swap,pos=0.35]{$\Sigma,-1$} (D)
	(B0) edge node[el,swap,pos=0.3]{$\Sigma,0$} (D0)
	(C) edge node[el,pos=0.2]{$\Sigma,-1$} (D)
	(C0) edge node[el]{$\Sigma,0$} (D0);

	\node[v,below = of D,label=left:$\overline{z_1}$](F0){};
	\node[v,left = of F0,label=right:$\overline{x_1}$](F){};
	\node[v,below = of D0,label=right:$x_1$](G){};
	\node[v,right = of G,label=left:$z_1$](G0){};
	\node[v,below = of F0,label=left:$b_1^-$](H){};
	\node[v,below = of G,label=right:~$b_1^0$](H0){};

	\node[fit=(F) (F0)]{};
	\node[fit=(G) (G0)]{};
	\node[fit=(H) (H0)](ho){};

	\path
	(D) edge node[el,swap,pos=0.85]{$\Sigma,0$} (F)
	(D0) edge node[el,swap,pos=0.85]{$\Sigma,0$} (F0)
	(D) edge node[el,pos=0.85]{$\Sigma,0$} (G)
	(D0) edge node[el,pos=0.85]{$\Sigma,0$} (G0)
	(F) edge node[el,swap]{$\Sigma,-1$} (H)
	(F0) edge node[el,swap,pos=0.35]{$\Sigma,0$} (H0)
	(G) edge node[el,pos=0.3]{$\Sigma,-1$} (H)
	(G0) edge node[el,pos=0.35]{$\Sigma,0$} (H0);

	\node[below = 0.1cm of ho](dots){\vdots};
	\node[v,below = of H,label=left:$b_{n-1}^-$](CL){};
	\node[v,below = of H0,label=right:~$b_{n-1}^0$](CL0){};
	\node[v,below = of CL,label=left:$c_1^0$](c10){};
	\node[v,left = of c10,label=right:$c_1$](c1){};
	\node[v,below = of CL0,label=right:$c_2$](c2){};
	\node[v,right = of c2,label=left:$c_2^0$](c20){};
	\node[below = 2cm of dots]{\ldots};

	\node[fit=(CL) (CL0)]{};
	\node[fit=(c1) (c10)]{};
	\node[fit=(c2) (c20)]{};

	\path
	(CL0) edge node[el,swap,pos=0.85]{$\Sigma,0$} (c10)
	(CL) edge node[el,swap,pos=0.85]{$\Sigma,0$} (c1)
	(CL0) edge node[el,pos=0.7]{$\Sigma,0$} (c20)
	(CL) edge node[el,pos=0.85]{$\Sigma,0$} (c2)

	(c1) edge[bend left] node[el,pos=0.3]{$x_0,n$} (B0)
	(c10) edge[bend left] node[el,pos=0.7]{$x_0,0$} (B)
	(c2) edge[bend right] node[el,swap,pos=0.7]{$\lnot x_1, n-1$} (G0)
	(c20) edge[bend right] node[el,swap,pos=0.3]{$\lnot x_1, 0$} (G);
\end{tikzpicture}
\end{center}
\caption{Corresponding game for QBF $\exists x_0 \forall x_1 \dots ( \lnot x_0)
	 \wedge (x_1) \dots$}
\label{fig:formula-game}
\end{figure}

	A play in $G_\Psi$ traverses gadgets for all the variables from the QBF
	and eventually gets to the observation $\{b_{n-1}^-,b_{n-1}^0\}$ where
	the assignment of values for every variable has been simulated. At this
	point we want to check whether the valuation of the variables makes
	$\Phi$ true. We do so by allowing Adam to choose the next observation
	(corresponding to one of the clauses from the CNF formula $\Phi$) and
	letting Eve choose a variable from the clause (which might be negated).
	Let $x_i$ (resp. $\overline{x_i}$) be the variable chosen by Eve, in
	$G_\Psi$ the next observation will correspond to closing a good abstract
	cycle if and only if the chosen valuation of the variables for $\Psi$
	assigns to $x_i$ a true (false) value. For this part of the construction
	we have $2m$ states grouped in $m$ observations, where $m$ is the
	number of clauses in the formula. The lower part of
	figure~\ref{fig:formula-game} shows the clause observations we just
	described.

	Denote by $\{ c_i, c_i^0 \}$ the observation associated to clause $c_i$.
	The game has transitions from $c_i$ to $x_i$ (or $\overline{x_i}$) and
	from $c_i^0$ to $z_i$ ($\overline{z_i}$) with action symbol $x_i$
	($\lnot x_i$) and weight $n-i$ for the first, $0$ for the latter, if and
	only if the clause $c_i$ includes the (negated) variable
	$x_i$.\footnote{All missing transitions for $G_\Psi$ to be complete go
		to a dummy state with a negative and $0$-valued
	non-deterministic transitions.}

	After Eve and Adam have chosen values for all variables (and the game
	reaches observation $\{b_{n-1}^-,b_{n-1}^0\}$) there are two concrete
	paths corresponding with the current play: one with payoff $0$ and one
	with payoff $-n$.  When Adam has chosen a clause and Eve chooses a
	variable $x_i$ from the clause, the next observation is reached with
	both concrete paths having payoffs $0$ and $-i$. Observe, however, that
	if we consider the suffix of said concrete paths starting from
	$\{x_i,z_i\}$ or $\{\overline{x_i},\overline{z_i}\}$---depending on
	which valuation the players chose---both payoffs are $0$. Indeed, if
	the observation was previously visited, \ie Eve has proven the clause
	to be true, then a good cycle is closed. On the other hand, if the
	observation has not been visited previously, then Eve has no choice but
	to keep playing and the play thus reaches observation $\{b_i^-,b_i^0\}$.
	We note that traversing the lower part of our ``diamond'' gadgets
	results in a \emph{mixed} payoff of $-1$ and $0$ and since
	$\{b_i^-,b_i^0\}$ has already been visited, a cycle is closed that is
	not good. To summarize, either a good cycle is closed when moving from
	$\{b_{n-1}^-,b_{n-1}^0\}$ to $\{x_i,z_i\}$ (or, respectively,
	$\{\overline{x_i},\overline{z_i}\}$) if the latter observation had been
	visited before; or a bad cycle is closed on the next step when moving to
	$\{b_i^-,b_i^0\}$.

	Therefore, if $\Psi$ is true then Eve has a strategy to make the first
	cycle closed be a good one, so $G_{\Psi}$ is forcibly FAC. Conversely,
	if $\Psi$ is false then Adam has a strategy to make the first cycle
	formed be not good (mixed, in fact).  Hence $G_\Psi$ is not forcibly
	FAC.
\end{proof}

We can slightly modify the above construction in such a way that if the game
does not finish when the play returns to a variable then Adam can close a bad
cycle (instead of just being able to force a mixed cycle).  This results in a
forcibly FAC game that Eve wins if and only if the formula is satisfied. Hence,

\begin{theorem}[Winner determination]\label{thm:apwd}
	Let $G$ be a forcibly FAC MPG.
	Determining if Eve wins $G$ is \PSPACE-complete.
\end{theorem}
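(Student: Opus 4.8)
The plan is to treat the two directions separately: containment in $\PSPACE$ refines the alternating-simulation argument already given before the statement, while $\PSPACE$-hardness adapts the reduction underlying Theorem~\ref{thm:isadeqpure}.

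For membership I would first argue that, for a forcibly FAC game, the winner of $G$ coincides with the winner of $\Gamma_G'$, and then that the latter is decidable in $\PSPACE$. Since $G$ is forcibly FAC, in $\Gamma_G'$ one player has a winning strategy. If it is Eve, her strategy reaches $\teve'$ and hence closes a good simple abstract cycle; replaying this finite strategy by returning to the witnessing position --- exactly the strategy-transfer construction of Theorem~\ref{thm:stratTransfer} --- produces an infinite play that decomposes into interleaved good cycles (good by Lemma~\ref{lem:caVa}(i)), so by Lemma~\ref{lem:caVa}(ii) every consistent concrete path has $\underline{MP}\ge 0$ and Eve wins $G$. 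Symmetrically, if Adam wins $\Gamma_G'$ he closes a bad simple cycle, and Lemma~\ref{lem:caVa}(iii) gives a consistent concrete path with $\underline{MP}<0$, so Adam wins $G$. Thus deciding the winner of $G$ reduces to deciding whether Eve wins $\Gamma_G'$. By the refined bound (Lemma~\ref{lem:boundedPlays} restricted to simple cycles) every play of $\Gamma_G'$ has length at most $|Obs|$, and along such a play each $f_i$ takes values in $\{\bot,+\infty\}$ together with integers of absolute value at most $|Obs|$ times the largest edge weight, hence is representable in polynomial space. An alternating Turing machine can therefore simulate a play of $\Gamma_G'$ in polynomial time --- Eve's existential action choices alternating with Adam's universal successor-function choices --- accepting exactly when $\teve'$ is reached. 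As alternating polynomial time equals $\PSPACE$, winner determination is in $\PSPACE$.

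For $\PSPACE$-hardness I would reduce from the truth of quantified boolean formulas. I start from the gadget construction sketched for Theorem~\ref{thm:isadeqpure}: an abstract play runs through a sequence of diamond gadgets that fix the truth values of the quantified variables --- the owner of each diamond (Eve for an existential quantifier, Adam for a universal one) choosing the side --- then through a gadget evaluating the matrix (Adam resolving each $\wedge$, Eve resolving each $\vee$), and returns to a diamond gadget on reaching a variable. The concrete structure inside the observations is arranged so that closing the cycle at the variable yields a good abstract cycle, whereas closing it at the bottom of an already-visited diamond yields a cycle that is merely not good. In that construction Corollary~\ref{cor:quelleSauce} guarantees no bad cycle is ever formed, so the game is either forcibly FAC with Eve winning (the formula is true) or has no winner (the formula is false) --- which is what makes it a membership reduction rather than a winner-determination reduction.

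To obtain a genuine winner-determination instance I would modify the gadgets so that whenever the play would reach the bottom of a visited diamond without Eve having closed a good cycle, Adam is instead able to close a bad abstract cycle. Every simple-cycle play then terminates with a declared winner, so the resulting game is \emph{always} forcibly FAC, and Eve can force a good cycle exactly when the matrix evaluates to true under the selected assignment, i.e.\ iff the formula is true. The delicate step --- and the one I expect to be the main obstacle --- is engineering the internal concrete paths and edge weights so that the comparisons $\preceq_0$ and $\preceq_1$ fire at precisely the intended positions: the ``good at the variable, bad at the bottom'' dichotomy must hold simultaneously for every simple cycle the play can close, no unintended good or bad cycle may be closable en route (which would let a player win prematurely), and the quantifier alternation of the formula must be faithfully mirrored by the ownership of the diamond gadgets. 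Verifying this invariant across all plays is where the bulk of the technical work lies.
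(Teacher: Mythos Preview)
Your proposal is correct and follows essentially the same route as the paper: membership via an alternating polynomial-time simulation of $\Gamma_G'$ (with the equivalence to $G$ coming from Theorem~\ref{thm:stratTransfer}), and hardness by modifying the TQBF construction of Theorem~\ref{thm:isadeqpure} so that the ``not good'' branch becomes a bad cycle Adam can close, making the game forcibly FAC for every input formula. The paper's concrete realization of this modification adds $2n$ fresh states per observation carrying auxiliary concrete paths whose weights are arranged so that Adam can return to $q_I$ and close a bad cycle precisely when Eve's chosen literal was not previously visited --- exactly the engineering you flag as the main obstacle.
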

\begin{proof}
	We describe the modifications required to the construction used in the
	proof of Theorem~\ref{thm:isadeqpure}.
	
	First, we augment every observation with $2n$ states corresponding to
	variables from $\Phi$ and their negation (say, $y_i$ and $\overline{y_i}$
	for $0 \le i < n$).
	
	We then add transitions from every new state $y_i$ ($\overline{y_i}$) to
	its counterpart in the next observation so as to form $2n$ new
	disjoint cycles going from $q_I$ through the whole construction. (Note
	that, up to this point $\Plays(G_\Psi)$ remains unchanged. That is, the
	set of abstract paths in the game constructed for the proof of
	Theorem~\ref{thm:isadeqpure} is the same as the set of abstract paths in
	the present game.) These new transitions all have weight zero except for
	a few exceptions:
	\begin{itemize}[nolistsep]
		\item the transition corresponding to the lower part of the
			gadget which represents the variable itself, \ie the
			transition from augmented observation
			$\{x_i,z_i,\dots\}$ to $\{b_i^-,b_i^0,\dots\}$ (resp.
			$\{\overline{x_i},\overline{z_i},\dots\}$ to
			$\{b_i^-,b_i^0,\dots\}$) has weight of $+1$ for the
			$y_i$-transition ($\overline{y_i}$-transition);
		\item outgoing transitions from clause observations have weight
			$-1$ on the $y_i$-transition going to the
			$x_i$-gadget---\ie if we let $y_j$ be one of the new
			states in the clause observation and $y'_j$ the
			corresponding state in the $x_i$-gadget, then
			$w(y_j,\sigma,y'_j) = -1$; and
		\item at every $\{x_i,z_i,\dots\}$ and
			$\{\overline{x_i},\overline{z_i},\dots\}$ augmented
			observation, Adam is allowed to resolve non-determinism
			by going back to $q_I$---\ie in these observations we
			add a transition from $y_i$ and $\overline{y_i}$,
			respectively, back to the initial state.
	\end{itemize}
	
	Let us argue that the game is forcibly FAC and that the QBF instance is
	true if and only if Eve wins the reachability game associated with the
	constructed MPG.
	When the play reaches $\{x_i, z_i, \dots\}$ (or
	$\{\overline{x_i}, \overline{z_i}, \dots\}$) after Eve and Adam choose
	values for all the variables and after she has chosen a variable from a
	clause given by Adam, then the concrete path ending at $y_i$ (resp.
	$\overline{y}_i$) has weight $0$ if the observation was previously
	visited, and weight $-1$ if it was not. The concrete paths ending at all
	the other new states have weight $0$ or $+1$ depending on the choices
	made by the players. Concrete paths ending at $x_i$ and $z_i$ states are
	as before (mixed if the observation has not been witnessed, and good
	otherwise).  Thus if the observation was previously visited, then the
	cycle closed is good as before. If the observation was not previously
	visited, then Adam can now choose to play to $q_I$ from $y_i$
	($\overline{y_i}$) and close a bad cycle (of weight $-1$). Note that if
	Adam chooses to play to $q_I$ before the clause gadgets are reached then
	he will only be closing good cycles. Following the same argument as
	before, if $\Psi$ is true then Eve has a winning strategy and if $\Psi$
	is false then Adam has a winning strategy. So $G_{\Psi}$ is forcibly FAC
	and Eve wins if and only if $\Psi$ is true.
\end{proof}

It also follows from the $|\Obs|$ upper bound on plays in $\Gamma'$ that there
is an exponential upper bound on the memory required for a winning strategy for
either player.
Furthermore, we can show this bound is tight---the games constructed in the
proof of Theorem~\ref{thm:apwd} can be used to show that there are forcibly FAC
games that require exponential memory for winning strategies.

\begin{theorem}[Exponential memory determinacy]
	\label{thm:expMemory}
	One player always has a winning observation-based strategy with
	exponential memory in a forcibly FAC MPG.  Further, for any $n \in
	\mathbb{N}$ there exists a forcibly FAC MPG, of size polynomial in $n$,
	such that any winning observation-based strategy has memory at least $2^n$.
\end{theorem}
\begin{proof}
	For the upper bound we observe that plays in $\Gamma'$ are bounded in
	length by $|\Obs|$. It follows that the strategy constructed in
	Theorem~\ref{thm:stratTransfer} has memory at most $|\Sigma|^{|\Obs|}$.

	For the lower bound, consider the forcibly FAC game $G_n$ constructed in
	the proof of Theorem~\ref{thm:apwd} for the formula
	\[
		\varphi_n = \forall x_1 \forall x_2 \dots \forall x_n \exists
		y_1 \dots \exists y_n.\bigwedge_{i=1}^n (x_i \vee \neg y_i)
		\wedge
		(\neg x_i \vee y_i).
	\]
	As $\varphi_n$ is satisfied, Eve wins $G_n$. Now consider any
	observation-based strategy
	for Eve with memory $<2^n$. As there are $2^n$ possible assignments for
	the values of $x_1, \dots x_n$ it follows there are at least two
	different assignments of values such that Eve makes the same choices in
	the game. Suppose these two assignments differ at $x_i$ and assume w.l.o.g.
	that Eve's choice is at $(n+i)$-th gadget to play to $y_i$.  Then Adam
	can win the game by choosing values for the universal variables that
	correspond to the assignment which sets $x_i$ to \textsf{false}, and
	then playing to the clause $(x_i, \vee \neg y_i)$.  Thus any winning
	observation-based strategy for Eve must have size at least $2^n$.

	In a similar way the game defined by the formula $\neg \phi_n$ is won by
	Adam, but any winning observation-based strategy must have size at least $2^n$.
\end{proof}

\section{First Abstract Cycle Games}\label{sec:pure-inc}
We now consider a structural restriction that guarantees $\Gamma'$ is
determined. Recall that to any limited-observation MPG $G = (
Q,q_I,\Sigma,\Delta,w,\Obs )$ we associate a reachability game $\Gamma =
( \Pi,\Sigma,f_I,\delta,\teve,\tadam )$ and that $\Gamma'$ is the
restriction of $\Gamma$ to simple function-action sequences (with respect to the
supports). That is, $\Pi'$ is the set of all sequences $f_0 \sigma_0 f_1
\sigma_1 \dots f_n \in \Pi$ such that $\supp(f_i) \neq \supp(f_j)$ for all $0
\le i < j < n$ and the other components of $\Gamma'$ are the corresponding
restrictions of $\Gamma$ to $\Pi'$.

\begin{definition}
	An MPG with limited observation is a \emph{first abstract cycle game}
	(FAC) if in the associated reachability game $\Gamma'$ all leaves are
	in $\tadam' \cup \teve'$.
\end{definition}

Intuitively, in an FAC game $G$ all simple abstract cycles (that can be formed) are
either good or bad. Since $\Gamma'$ is a full-observation finite reachability
game, $G$ is determined. Thus, by Theorem~\ref{thm:stratTransfer}, we get that in
every FAC one of the two players has a winning finite-memory observation-based
strategy. However, we can show an even stronger result holds:
one of them has a winning positional
observation-based strategy.

\begin{theorem}[Positional determinacy]
\label{thm:posDet}
	One player always has a positional winning observation-based strategy in
	an FAC MPG.
\end{theorem}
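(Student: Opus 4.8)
The plan is to obtain positionality at the level of the finite abstract cycle-forming game $\Gamma_G'$ and then push a positional strategy back into the MPG. First I would note that in an FAC game $\Gamma_G'$ is a finite game tree (plays have length at most $|Obs|$) every leaf of which lies in $\teve' \cup \tadam'$; having no non-terminal maximal play, it is determined, so exactly one player $P$ has a strategy forcing the play into his/her target set.

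Next I would view the cycle-forming game as a first-cycle game in the sense of~\cite{ar14} on the finite observation arena, with Eve picking actions, Adam picking successor observations, and winning predicate ``the first simple abstract cycle formed is good.'' As already observed before the statement, the three hypotheses of~\cite{ar14} hold in the FAC case: the play halts precisely when the first abstract cycle closes (by the definition of $\Pi_G'$); both the predicate and its complement are preserved under cyclic permutation (Corollary~\ref{cor:quelleSauce}, together with the fact that every simple abstract cycle is good or bad); and both are preserved under interleaving (Lemma~\ref{lem:caVa}(i)). Applying~\cite{ar14} then yields a positional observation-based strategy $\tau$ that is winning for $P$, uniformly on $P$'s winning region.

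It then remains to transfer $\tau$. I would have $P$ play, at each step of $G$, the action (if $P$ is Eve) or successor observation (if $P$ is Adam) that $\tau$ prescribes for the current observation; since $\tau$ reads only the current observation, this strategy is positional. Correctness is exactly the content of Theorem~\ref{thm:stratTransfer}: decomposing any consistent infinite play in the finite observation arena by the standard stack procedure writes it as a concatenation of simple abstract cycles together with a residual of length at most $|Obs|$, and the weight estimates for good (respectively bad) cycles in Lemma~\ref{lem:caVa}(ii)--(iii) then yield $\underline{MP}\ge 0$ along every concrete path (respectively $\underline{MP}<0$ along some concrete path), just as in the proof of Theorem~\ref{thm:stratTransfer}.

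The crux, and the only point genuinely beyond the finite-memory determinacy already established, is uniformity: I must guarantee that every cycle popped along the infinite $\tau$-play, not merely the first one from $o_I$, is winning for $P$. Here I would use that a positional winning strategy for a first-cycle game wins from every vertex of the winning region and that this region is closed under all $\tau$-consistent plays, so each popped simple cycle starts inside the winning region; invariance under cyclic permutation then makes its starting vertex irrelevant, forcing the cycle to be good (respectively bad). This is precisely what lets the $f$-function bookkeeping of Section~\ref{sec:stratTransfer} control the concrete payoffs and complete the argument.
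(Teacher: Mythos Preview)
Your proposal is correct and follows the paper's approach: check that the abstract first-cycle game on the observation arena satisfies the three hypotheses of~\cite{ar14} (via Lemma~\ref{lem:caVa} and Corollary~\ref{cor:quelleSauce}) and invoke that result to obtain positional strategies. The paper's argument stops at that point, leaving the transfer from the observation-level game to the MPG implicit; your explicit transfer via the machinery of Section~\ref{sec:stratTransfer} is the right way to fill that gap, though note that Lemma~\ref{lem:caVa}(ii)--(iii) speak only of concrete \emph{cycles} in $\gamma(\rho^k)$ and so do not directly bound arbitrary concrete path segments---you are right that the $f$-function bookkeeping (specifically, that $\alpha_o(\mu,o)=\tau(\supp(\lastf{\mu'}))=\tau(o)$ is positional while the analysis of Lemma~\ref{lem:EveIf} still carries the memory internally) is what actually completes the argument.
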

\begin{proof}
	It follows then from Corollary~\ref{cor:quelleSauce} that any cyclic
	permutation of a good cycle is also good and any cyclic permutation of a
	bad cycle is also bad. Together with Lemma~\ref{lem:caVa}, this implies
	the abstract cycle-forming games associated with FAC games can be seen
	to satisfy the following three assumptions:
	\begin{inparaenum}[(1)]
		\item A play stops as soon as an abstract cycle is formed;
		\item The winning condition and its complement are preserved under cyclic
			permutations; and
		\item The winning condition and its complement are preserved under
			interleavings.
	\end{inparaenum}
	These assumptions were shown in~\cite{ar14} to be sufficient for winning
	positional strategies to exist in any game.\footnote{These conditions
		supersede those of~\cite{bsv04} which were shown in~\cite{ar14}
		to be insufficient for positional strategies.}
\end{proof}

As we can check in polynomial time if a positional observation-based strategy is
winning in an FAC MPG, we immediately have:
\begin{theorem}[Winner determination]\label{thm:windet-fac}
	Let $G$ be an FAC MPG. Determining if Eve wins
 	$G$ is in $\NP \cap \coNP$.
\end{theorem}

A path in $\Gamma'$ to a leaf not in $\tadam' \cup \teve'$ provides a short
certificate to show that an MPG with limited observation is not FAC. Thus
deciding if an MPG is FAC is in $\coNP$. A matching lower bound can be obtained
using a reduction from the complement of the Hamiltonian cycle problem.

\begin{theorem}[Class membership]
\label{thm:pureCM}
	Let $G$ be an MPG with limited observation. Determining if $G$ is FAC
	is \coNP-complete.
\end{theorem}
\begin{proof}
	For \coNP~membership, one can guess a large enough simple abstract cycle
	$\psi$ and (in polynomial time with respect to $Q$) check that it is
	neither good nor bad. To show \coNP-hardness we use a reduction from the
	complement of the Hamiltonian Cycle problem.

	Given graph $\mathcal{G} = ( V, E )$ where $V$ is the set of
	vertices and $E \subseteq V \times V$ the set of edges. We construct a
	directed weighted graph with limited observation $G = ( Q, q_I,
	\Sigma, \Delta, w, \Obs )$ where:
	\begin{itemize}[nolistsep]
		\item $Q = V \cup \{q_I, q_+, q_-\}$;
		\item $\Obs = \{ \{v\} \st v \in V \} \cup \{ \{q_-,q_+\},\{ q_I
			\} \}$;
		\item $\Sigma = V \cup \{\tau\}$;
		\item $\Delta$ contains transitions $(u,v,v)$ such that $(u,v)
			\in E$ and self-loops $(u,v',u)$ for all $(u,v') \notin
			E$, transitions (with all $\sigma$) from $q_I$ to both
			$q_+$ and $q_-$ and from these last two to all states $v
			\in V$, as well as $\tau$-transitions from every state
			$v \in V$ to $q_+$ and $q_-$;
		\item $w$ is such that all outgoing transitions from $q_+$ and
			$q_-$ have weight $1-|V|$, $(u,v,v)$ transitions where
			$(u,v) \in E$ have weight $+1$, $\tau$-transitions to
			$q_-$ from states $v \in V$ have weight $-1$ and all
			other transitions have weight $0$.
	\end{itemize}
	Notice that the only non-deterministic transitions in $G$ are those
	incident on and outgoing from the states $q_+$, $q_-$. Clearly, the only
	way for a simple abstract cycle to be not good and not bad (thus making
	$G$ not FAC) is if there is a path from $\{q_-,q_+\} \in \Obs$ that
	traverses $|V|$ unique observations and ends with a $\tau$-transition
	back at $\{q_-,q_+\}$.  Such a path corresponds to a Hamiltonian cycle
	in $\mathcal{G}$. If there is no Hamiltonian cycle in $\mathcal{G}$ then
	for any play $\pi$ in $G$, a bad cycle will be formed (hence, $G$ is
	FAC).
\end{proof}

\section{MPGs with Partial Observation}
\label{sec:pure-imp}
In the introduction it was mentioned that an MPG with partial observation can
be transformed into an MPG with limited observation. This translation allows
us to extend the notions of FAC and forcibly FAC games to the larger class
of MPGs with partial observation. In this section we will investigate the
resulting algorithmic effect of this translation on the decision problems
we have been considering.

The idea behind the translation is to take subsets of the observations and
restrict transitions to those that satisfy the limited-observation requirements.
More formally, given an MPG with partial observation $G = (
Q,\Sigma,\Delta,q_I,w,\Obs)$ we construct an MPG with limited observation
$G' = ( Q',\Sigma, \Delta', q_I', w',\Obs')$ where:
\begin{itemize}[nolistsep]
	\item $Q' = \{(q,K) \in Q \times 2^Q \st q \in K \text{ and } K
	\subseteq o \in \Obs\}$,
	\item $q_I' = (q_I,\{q_I\})$,
	\item $\Obs' = \big\{\{ (q,K) \st q \in K\} \st K \subseteq o\text{ for
		some }o \in \Obs\big\}$,
	\item $\Delta'$ contains the transitions $((q,K),\sigma,(q',K'))$ such that
		$(q,\sigma,q') \in \Delta$ and $K' = post_\sigma(K) \cap o$
		for some $o \in \Obs$, and
	\item $w'((q,K), \sigma, (q',K')) = w(q,\sigma,q')$ for all
		$((q,K),\sigma,(q',K')) \in \Delta'$.
\end{itemize}
It is folklore to show that this \emph{knowledge-based} subset construction
(also known as a belief construction) preserves winning strategies for Eve.  The
terms belief and knowledge are used to denote a state from any variation of the
classic ``Reif construction''~\cite{reif84} to turn a game with
partial observation into a game with full observation.  Other names for
similar constructions include ``knowledge-based subset construction'' (see
\eg~\cite{ddgrt10}). In this case the resulting game is not one with full
observation but one with limited observation.
\begin{theorem}[Equivalence]
	Let $G$ be an MPG with partial observation and $G'$ be the corresponding
	MPG with limited observation as constructed above. Eve has a winning
	observation-based strategy in $G$ if and only if she has a winning
	observation-based strategy in $G'$.
\end{theorem}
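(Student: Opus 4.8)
The plan is to establish the equivalence by showing that observation-based strategies transfer back and forth between $G$ and $G'$ in a payoff-preserving manner. The key observation is that the limited-observation game $G'$ is essentially the belief-set (subset) construction applied to $G$: each state $(q,K)$ records both the true location $q$ and the belief set $K$ of states consistent with the observable history. Crucially, the weight function is inherited verbatim ($w'((q,K),\sigma,(q',K')) = w(q,\sigma,q')$), so the mean-payoff value of any concrete path in $G'$ equals that of its projection to $G$. The whole argument reduces to exhibiting a bijection between plays (and their consistent concrete paths) in the two games that commutes with the players' observation-based choices.

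First I would set up the projection map. Given a concrete path $\pi' = (q_0,K_0)\sigma_0 (q_1,K_1)\sigma_1\ldots$ in $G'$, its first-component projection $\pi = q_0\sigma_0 q_1\sigma_1\ldots$ is a concrete path in $G$, and since the weights agree we have $\underline{MP}(\pi') = \underline{MP}(\pi)$. I would then verify two structural facts that make the belief component well-behaved: (1) the $K$-component is \emph{determined} by the observable history — along any play, $K_i = \post_{\sigma_{i-1}}(K_{i-1}) \cap o_i$ where $o_i$ is the observation reached, so $K_i$ is exactly the belief set of Eve after seeing $o_0\sigma_0\ldots o_i$; and (2) because the observations $Obs'$ are grouped by the $K$-value, two concrete paths in $G'$ lie in the same $\gamma(\psi')$ (for an abstract play $\psi'$) precisely when their projections lie in the same $\gamma(\psi)$ in $G$ and share the same belief component. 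This means abstract plays in $G'$ correspond bijectively to abstract plays in $G$ together with the (uniquely determined) belief annotation, so $\Prefs(G')$ and $\Prefs(G)$ are in natural correspondence via the observation sequence.

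Next I would use this correspondence to transfer strategies. For the forward direction, given a winning observation-based strategy $\lambda_\exists$ for Eve in $G$, I define $\lambda_\exists'$ on $\Prefs(G')$ by reading off the underlying $G$-observation history and applying $\lambda_\exists$; since the belief component is a deterministic function of that history, $\lambda_\exists'$ is genuinely observation-based. Every play $\psi'$ consistent with $\lambda_\exists'$ projects to a play $\psi$ consistent with $\lambda_\exists$, and by the winning hypothesis every $\pi \in \gamma(\psi)$ has $\underline{MP}(\pi) \ge 0$; since every $\pi' \in \gamma(\psi')$ projects into $\gamma(\psi)$ with equal payoff, $\lambda_\exists'$ is winning. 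The converse direction is symmetric: from a winning $\lambda_\exists'$ in $G'$, I read off the action for a $G$-history by computing the belief annotation (which Eve can do, as it depends only on what she observes) and consulting $\lambda_\exists'$.

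\textbf{The main obstacle} I expect is verifying that $\gamma$ behaves correctly in both directions — specifically, that no concrete path in $G$ consistent with a play is ``lost'' when passing to $G'$, and that the belief set $K$ never becomes empty so that $Q'$ is well-defined and $\Delta'$ is total on the reachable part. I would need to check that for each transition $(q,\sigma,q')\in\Delta$ the successor belief $K' = \post_\sigma(K)\cap o$ contains $q'$ whenever $q'\in o$, so that $(q',K')$ is a legitimate state; this is immediate from the definitions but must be stated to confirm that every concrete path in $\gamma(\psi)$ lifts to one in $\gamma(\psi')$. The subtlety is that $G'$ may restrict $K'$ to a single observation $o$, potentially splitting one $G$-successor belief across several $G'$-observations — I would argue this splitting is exactly the limited-observation refinement and does not change which concrete paths are consistent with any given \emph{play}, because a play already fixes the observation sequence and hence the choice of $o$ at each step.
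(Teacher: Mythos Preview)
Your proposal is correct and follows essentially the same approach as the paper: the paper explicitly constructs the bijection $\kappa:\Prefs(G)\cup\Plays(G)\to\Prefs(G')\cup\Plays(G')$ by annotating each observation $o_{j+1}$ with the belief set $o_{j+1}\cap\post_{\sigma_j}(o_j)$, then uses it to transfer strategies and observes that concrete paths correspond via first-component projection with equal weights. Your write-up is in fact more careful than the paper's about the obstacles you flag (non-emptiness of $K'$, that every $\pi\in\gamma(\psi)$ lifts to $\gamma(\kappa(\psi))$), which the paper dismisses with ``it is easy to show''.
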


The result above is shown by proving that winning observation-based strategies
for Eve transfer between $G$ and $G'$. It is worth noting that an
observation-based strategy for Eve in $G$ can directly be used by her in $G'$.
Conversely, for her to use a strategy from $G'$ in $G$ she must keep in memory
the knowledge-based subset construction herself. Hence,
\begin{theorem}[Memory requirements]\label{thm:exp-mem-po}
	Let $G$ be a partial-observation MPG and $G'$ be the corresponding
	limited-observation MPG.  If a player has a finite-memory
	observation-based winning strategy in $G'$, then (s)he has a
	finite-memory observation-based winning strategy in $G'$ which requires
	exponentially more memory (on the size of $G$).
\end{theorem}

We say an MPG with partial observation is \emph{(forcibly) first belief cycle},
or FBC, if the corresponding MPG with limited observation is (forcibly) FAC.

\section{FBC and Forcibly FBC MPGs}
Our first observation is that FBC MPGs generalize the class of \emph{visible
weight games} studied in~\cite{ddgrt10}. An MPG with partial observation is
considered a visible weights game if its weight function satisfies the condition
that all $\sigma$-transitions between any pair of observations have the same
weight.  We base some of our results for FBC and forcibly FBC games on lower
bounds established for problems on visible weights games.

\begin{lemma}
	\label{lem:pure-visible}
	Let $G$ be a visible weights MPG with partial observation. Then $G$ is
	FBC.
\end{lemma}

We now turn to the decision problems we have been investigating throughout the
paper. Given the exponential blow-up in the construction of the game of
limited observation, it is not surprising that there is a corresponding
exponential increase in the complexity of the class membership problem.
\begin{theorem}[Class membership]
\label{thm:isimppure}
	Let $G$ be an MPG with partial observation.  Determining if $G$ is FBC
	is \coNEXP-complete and determining if $G$ is forcibly FBC is in
	\EXPSPACE\ and \NEXP-hard.
\end{theorem}

Membership of the relevant classes is straightforward, they follow directly
from the upper bounds for MPGs with limited observation and the (at worst)
exponential blow-up in the translation from games of partial observation to
games of limited observation. For hardness, we prove first the result for FBC
games and comment on the changes necessary for the construction to yield the
result for forcibly FBC games.

\begin{lemma}\label{lem:hardness-fbc}
	Let $G$ be an MPG with partial observation. Determining if $G$ is FBC is
	\coNEXP-hard.
\end{lemma}
\begin{proof}
	We reduce from the complement of the succinct Hamilton cycle
	problem: Given a Boolean circuit $C$ with $2N$ inputs, does the graph on
	$2^N$ nodes with edge relation encoded by $C$ have a Hamiltonian cycle?
	This problem is known to be \NEXP-complete~\cite{py86}.

	The idea is to simulate a traversal of the succinct graph in our MPG: if
	we make $2^N$ valid steps without revisiting a vertex of the succinct
	graph then that guarantees a Hamiltonian cycle. To do this, we start
	with a transition of weight $-2^N$ and add $1$ to all paths every time
	we make a valid transition.  We include a pair of transitions back to
	the initial state with weights $0$ and $-1$ and ensure this is the only
	transition that can be taken that results in paths of different weight.
	The resulting game then has a mixed lasso if and only if we can make
	$2^N$ valid transitions. If we encode the succinct graph vertex in the
	knowledge set then the definition of an FAC game will give us an
	automatic check if we revisit a vertex. In fact, we store several pieces
	of information in the knowledge sets of the observations: the current
	(succinct) graph vertex, the potential successor, and the evaluation of
	the edge-transition circuit up to a point.

	\begin{figure}
	\begin{center}
	\begin{tikzpicture}[node distance=0.5cm]

	\node [state] (v1) {$v_1$};
	\node [state,above right= of v1] (v2) {$v_2$};
	\node [state,below right= of v2] (v3) {$v_3$};

	\node [state,below=1cm of v1] (v1') {$\overline{v}_1$};
	\node [state,below right= of v1'] (v2') {$\overline{v}_2$};
	\node [state,above right= of v2'] (v3') {$\overline{v}_3$};

	\path
	(v1) edge node[el,swap]{$\tau^-$} (v3)
	(v1) edge node[el]{$\tau^+$} (v2)
	(v2) edge node[el]{$\chi,+1$} (v3)
	(v1') edge node[el]{$\tau^+$} (v3')
	(v1') edge node[el,swap]{$\tau^-$} (v2')
	(v2') edge node[el,swap]{$\chi,+1$} (v3');
		
	\node [state,below=1cm of v1'] (v0) {$v_0$};
	\node [state,below= of v0] (v0'){$\overline{v}_0$};
	\node [state,below=1cm of v3'] (x){$f$};
	\node [state,below= of x] (x'){$\overline{f}$};	

	\path
	(v0) edge node[el]{$\tau^+,\tau^-$} (x)
	(v0') edge node[el]{$\tau^+$} (x)
	(v0') edge node[swap,el]{$\tau^-$} (x')
    ;

	\node[fit = (v3) (v3') (v1) (v1') (v2) (v2') (v0) (v0'), dashed, label=above:$x \lor
	\lnot y$](gj){};

	\node [state,left=1.2cm of v1] (x2){$x$};
	\node [state,left=1.2cm of v1'] (x2'){$\overline{x}$};
	\node [state,left=1.2cm of v0] (y2){$y$};
	\node [state,left=1.2cm of v0'] (y2'){$\overline{y}$};

	\node[fit = (x2) (x2') (y2) (y2'), dashed, label=above:{$x,y$}]{};

	\path
	(x2) edge node[el]{$\sigma,-1$} (v1)
	(x2') edge node[el]{$\sigma,-1$} (v1')
	(y2) edge node[el,swap,pos=0.85]{$\sigma,-1$} (v0')
	(y2') edge node[el,pos=0.85]{$\sigma,-1$} (v0)
	;

	\node[state,right=1.2cm of x] (z){$z$};
	\node[state,right=1.2cm of x'] (z'){$\overline{z}$};

	\path
	(x) edge node[el]{$\sigma$} (z)
	(x') edge node[el]{$\sigma$} (z');

	\node[fit = (z) (z'), dashed, label=above:$z$]{ };

	\end{tikzpicture}
	\end{center}
    \caption{This is the partial-observation gadget to simulate $x \lor \lnot y
        = z$. Inside the gate gadget we also have on all states:
        self-loops with weight $+1$ on $\chi$, and $0$ on $\tau^{\pm}$;
        $0$-weight transitions to a sink on external actions.
        (Zero-weights have been omitted for clarity.)}
	\label{fig:example-or}
	\end{figure}
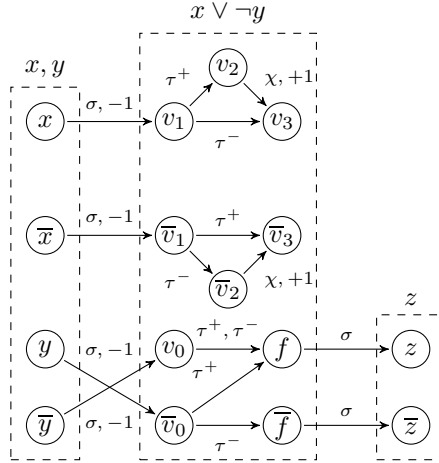

	\item \paragraph*{Simulating gates}
	The crucial technical trick used in our reduction is the construction of
	an observation gadget which simulates a logical gate.
	Figure~\ref{fig:example-or} depicts the gadget corresponding to $z = x
	\lor \lnot y$. We assume we have a knowledge set $K$ that is a subset of
	the states from the leftmost observation. Further, we assume $K$ induces
    a valid valuation of $x$ and $y$, \ie~$x \in K$ if and only if $\overline{x}
    \not\in K$
    and similarly
	for $y$. Denote by $K \models x \lor \lnot y$ the fact that the
	valuation of $x$ and $y$ makes the formula true. We also assume all
	concrete paths arriving at states in $K$ have the same weight. Now, by
	playing $\sigma$, Eve reaches the $x \lor \lnot y$ observation where she
	can play internal actions $\tau^-,\tau^+,\chi$. We claim
	observation $x \lor \lnot y$ allows concrete plays to reach the $z$
	observation with weight $0$ without creating a non-mixed belief lasso
    if and only if $K \models x \lor \lnot y$.
    The main idea is that Eve declares
	the truth value of $x$ using $\tau^+$ if it is true and $\tau^-$
	otherwise, she then plays $\chi$ to cancel the $-1$ weight seen upon
	entering the observation. For example, if $K = \{x,y\}$, Eve plays
	$\sigma$ and enters the gate observation with knowledge set
	$\{v_1,\overline{v}_0\}$. Then, Eve plays $\tau^+$ and
	one concrete path moves from $v_1$ to $v_2$, the other from
	$\overline{v}_0$ to $f$; Eve then plays $\chi$ and concrete paths reach
	$v_3$ and $f$, both with weight $0$; finally, she plays $\sigma$, and a
	concrete play reaches a sink or a concrete play reaches $z$ (as
	expected since $K \models x \lor \lnot y$). Crucially, the sequence of
    internal transitions on
    $\tau^\pm\chi$ induces a sequence of three distinct knowledge sets if and
    only if she declared the correct value of $x$. Otherwise, a lasso is formed.
	
	We now describe the
	construction in detail.

	\begin{figure}
	\begin{center}
	\begin{tikzpicture}[semithick, node distance=1cm, v/.style={circle,minimum
		size=1pt, draw}, o/.style={rectangle,minimum size=1pt,
		draw,dashed}]

	\node[v,initial](s) {$S$};
	\node[o,below =of s](a) {$O_1$};
	\node[o,below =of a](b) {$G_0$ ($+N$)};
	\node[o,below right=of b](c) {$G_1$ ($+1$)};
	\node[right =of c](d) {$\cdots$};
	\node[o,right =of d](e) {$G_k$ ($+1$)};
	\node[o,above =of e](ch) {$Chk$ ($+1$)};
	\node[o,above left =of ch](g) {$O_2$};

	\path
	(s) edge node[el]{$-2^N$} (a)
	(a) edge node[el]{$-N$} (b)
	(b) edge[bend right] node[el]{$-1$} (c)
	(c) edge node[el]{$-1$} (d)
	(d) edge node[el]{$-1$} (e)
	(e) edge node[el]{$-1$} (ch)
	(ch) edge[bend right] node[el]{$+1$} (g)
	(g) edge node[el]{$0$} (a)
	(g) edge[bend right,densely dotted] node[el]{$-1$} (s)
	(g) edge[densely dotted] node[el]{$0$} (s)
	;
	\end{tikzpicture}
	\end{center}
	\caption{Overall structure of the game for succinct Hamiltonian cycle}
	\label{fig:overallGad}
	\end{figure}
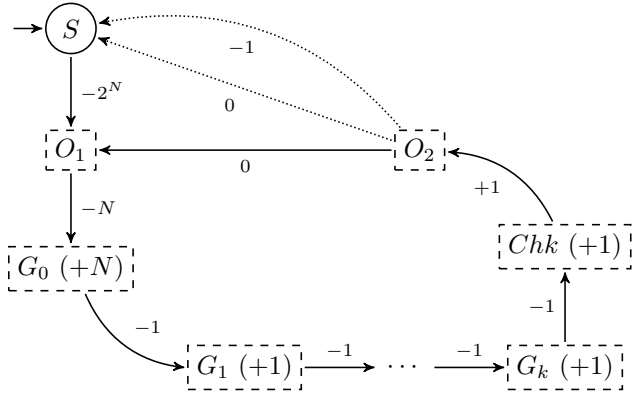

	\item \paragraph*{Simplifying assumptions}
	Let us assume inputs of the circuit $C$ are labelled $x_1, \dots x_{2N}$
	and that it has $k$ gates $G_1, \dots, G_k$ numbered in an order that
	respects the circuit graph, so $G_j$ has inputs from $\{x_i, \neg x_i :
	1 \leq i < 2N+j\}$ where, for convenience, $x_{2N+i}$ indicates the
	output of gate $G_i$. We may assume each gate has two inputs and (as we
	are allowing negated inputs) we may assume we only have AND and OR
	gates.

	\item \paragraph*{Construction description}
	The game consists of two \emph{external} actions primarily for transitions
	between observations: $\sigma$ (solid lines) and $\sigma'$ (dotted
	lines); and a number of \emph{internal} actions denoted with $\tau$ and $\chi$ for
	transitions primarily within observations (not shown). The numbers in
	parentheses indicate the maximum weight that can be added to the total with
	internal transitions, and the edge weights indicate the weight of \emph{all}
	transitions between observations.

	Our game proceeds in several stages:
	\begin{enumerate}[nolistsep]
		\item The transition from $S$ to $O_1$ sets the initial (succinct)
			vertex (stored in a subset of the states of $O_1$) and
			initializes the vertex counter to $-2^N$.
		\item Internal transitions in $G_0$ select the next vertex, the
			transition from $O_1$ to $G_0$ initializes this procedure.
		\item For $i>0$, internal transitions in $G_i$ evaluate gate $i$, incoming
			transitions initialize this by passing on the previous evaluations
			(including the current and next vertices).
		\item Internal transitions in $Chk$ test if the circuit evaluates to $1$.
		\item The next succinct vertex (chosen in $G_0$) is passed to $O_2$,
			where there is an implicit check that this vertex has not been
			visited before, and the counter is incremented.
		\item The play can return to $S$, generating a mixed lasso if and only
			if the vertex counter is $0$, \ie $2^N$ vertices have been
			correctly visited, or return to $O_1$ with a new current
			succinct vertex.
	\end{enumerate}
	The weights on the incoming transitions to an observation are designed to impose
	a penalty that can only be nullified if the correct sequence of internal
	transitions is taken. We observe that if there is a penalty that is not
	nullified then the game can never enter a mixed lasso (as the vertex counter
	will still be negative when a vertex is necessarily revisited).
	The overall (\ie observation-level) structure of the game is
	shown in Figure~\ref{fig:overallGad}.
	
	We now describe the structure of the observations.

	\item \paragraph*{Observation $O_1$}
	It contains $2N$ states: $\{x_i, \overline{x}_i \st 1 \leq i \leq N \}$.
	For convenience we will use the same labels across different observations, using
	observation membership to distinguish them. There are $\sigma$-transitions from
	$S$ to $\{x_i \st 1 \leq i \leq N\}$ with weight $-2^N$.

	\item \paragraph*{Observation $O_2$} It contains $2N+1$ states:
	$\{x_i, \overline{x}_i \st 1 \leq i \leq N \} \cup \{\bot\}$.
	There are $\sigma$-transitions from each state in $O_2$ other
	than $\bot$ to its corresponding state in $O_1$ with weight $0$.
	There is a $\sigma'$-transition from each state in $O_2$ other
	than $\bot$ to $S$ with weight $0$, and a $\sigma'$-transition
	from $\bot$ to $S$ with weight $-1$.

	\item \paragraph*{Observation $G_0$}
	It contains $5N$ states: $\{x_i, \overline{x}_i \st 1 \leq i \leq 2N \}
	\cup \{y_i \st N < i \leq 2N \} $. There is a $\sigma$-transition from each
	state in $O_1$ to its corresponding state in $G_0$ of weight $-N$ and in
	addition, $\sigma$-transitions from every state in $O_1$ to $\{y_i \st N < i
	\leq 2N \} $ also of weight $-N$. For $N < j \leq 2N$ there is a $\tau_j^+$
	transition of weight $1$ from $y_j$ to $x_j$ and a $\tau_j^-$ transition of
	weight $1$ from $y_j$ to $\overline{x}_j$. For all states in $G_0$ other than
	$y_j$ there is a $\tau_j^+$ and $\tau_j^-$ loop of weight $1$. Figure~\ref{fig:g0gad}
	shows the construction.

	\begin{figure}
	\begin{center}
	\begin{tikzpicture}[v/.style={draw,circle,minimum size=2pt,inner
		sep=2pt},semithick]
	\coordinate (top);
	\coordinate (left) at ($(top)+(-60pt,-50pt)$);
	\coordinate (right) at ($(top)+(60pt,-50pt)$);

	\foreach \s/\t in {-30pt/1, -15pt/2, 30pt/N}{
		\node at ($(top)+(\s,-15pt)$) [v,label=above:$x_{\t}$] {};
		\node at ($(top)+(\s,-27pt)$) [v,label=below:$\overline{x}_{\t}$] {};
	}
	\node at ($(top)+(5pt,-21pt)$) {$\cdots$};
	\node[fit = (top) (left) (right), draw=none](id){};
	\draw[-,dotted] (left) -- (right);

	\node[below left =of id, xshift=-2cm, draw, dashed](o1){$O_1$};

	\foreach \s/\t in {-30pt/N+1, -120pt/2N}{ 
		\node (y) at ($(left)+(40pt,\s)$) [v,label=below left:$y_{\t}$] {};
		\node (x) at ($(right)-(40pt,-10pt-\s)$) [v,label=right:$x_{\t}$] {};
		\node (x') at ($(right)-(40pt,10pt-\s)$)
		[v,label=right:$\overline{x}_{\t}$] {};
		\path (y) edge[bend left] node[el,pos=0.5]{$\tau^+_{\t}$} (x)
			(y) edge[bend right] node[el,pos=0.7,swap](bottom){$\tau^-_{\t}$}(x')
			(o1) edge[densely dotted] node[el]{$\sigma$} node[el,swap]{$-N$} (y)
		;
	}

	\node at ($0.5*(left)+0.5*(right)+(0pt,-75pt)$) {$\vdots$};

	\node[fit= (top) (left) (right) (bottom),dashed,label=right:$G_0$](g0){};
	\path (o1) edge[loosely dotted,bend left] node[el]{$\sigma$} node[el,swap]{$-N$}
	(id);
	\end{tikzpicture}
	\end{center}
	\caption{Gadget for $G_0$}
	\label{fig:g0gad}
	\end{figure}
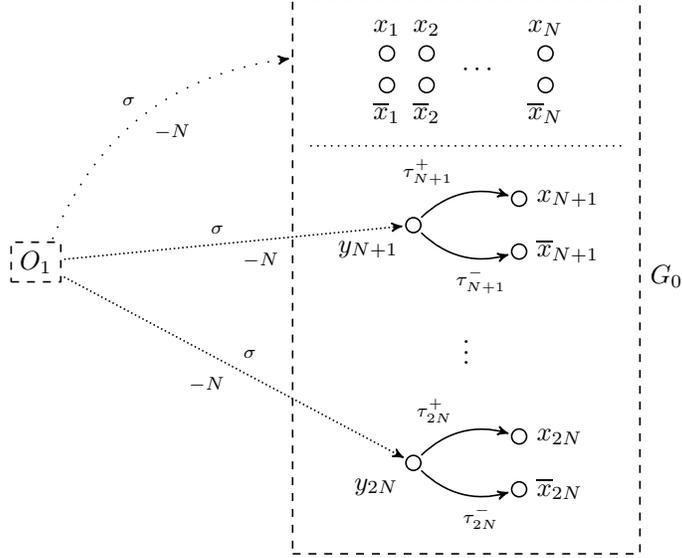

	\item \paragraph*{Observations $G_j$ (for $j>0$)} Observation gadgets
	for the logical gates follow the idea laid out earlier.
	The observation corresponding to gate $j$
	contains $4N+2j+8$ states: $\{x_i, \overline{x}_i \st 1 \leq i \leq 2N+j \} \cup
	\{v_m, \overline{v}_m\st0 \leq m \leq 3\}$. Recall gate $j$ has inputs from
	$\{x_i, \overline{x}_i\st1 \leq i < 2N+j\}$.  Suppose these inputs are $y_l \in
	\{x_l,\overline{x}_l\}$ and $y_r \in \{x_r,\overline{x}_r\}$, and for
	convenience let $\overline{y}_l$ and $\overline{y}_r$ denote the other member of
	the pair (\ie the complement of the input). We have a $\sigma$-transition of
	weight $-1$ from $\{x_i, \overline{x}_i \st 1 \leq i < 2N+j \} \subseteq
	G_{j-1}$ to the corresponding vertex in $G_j$. In addition we have
	$\sigma$-transitions of weight $-1$ from $y_l, \overline{y}_l, y_r,
	\overline{y}_r \in G_{j-1}$ to $v_0, \overline{v}_0, v_1, \overline{v}_1 \in
	G_j$ respectively. We have the following internal transitions:
	\begin{itemize}[nolistsep]
		\item $\tau^+$ (weight $0$): $v_1$ to $v_2$, $\overline{v}_1$ to
			$\overline{v}_3$, $v_0$ to $x_{2N+j}$, $\overline{v}_0$ to
			$\overline{x}_{2N+j}$ if gate $j$ is an AND gate,
			$\overline{v}_0$ to $x_{2N+j}$ if it is an OR gate,
		\item $\tau^-$ (weight $0$): $v_1$ to $v_3$, $\overline{v}_1$ to
			$\overline{v}_2$, $\overline{v}_0$ to $\overline{x}_{2N+j}$,
			$v_0$ to $\overline{x}_{2N+j}$ if gate $j$ is an AND gate, $v_0$
			to $x_{2N+j}$ if it is an OR gate,
		\item $\chi$ (weight $1$): $v_2$ to $v_3$, $\overline{v}_2$ to
			$\overline{v}_3$.
	\end{itemize}
	For all states in $G_j$ we have $\tau^\pm,\chi$-loops with the same weights
    as above (\ie
	$\chi$ loops have weight $1$, $\tau^{\pm}$ loops have weight $0$).
    Bit states (\ie~$x_{N+1}$, $\overline{x}_{2N}$)
    transition to the next observation on external actions.

	Figure~\ref{fig:gjgad} shows an example of the construction of $G_j$ for the
	gate $x_l \wedge \neg x_r$.

	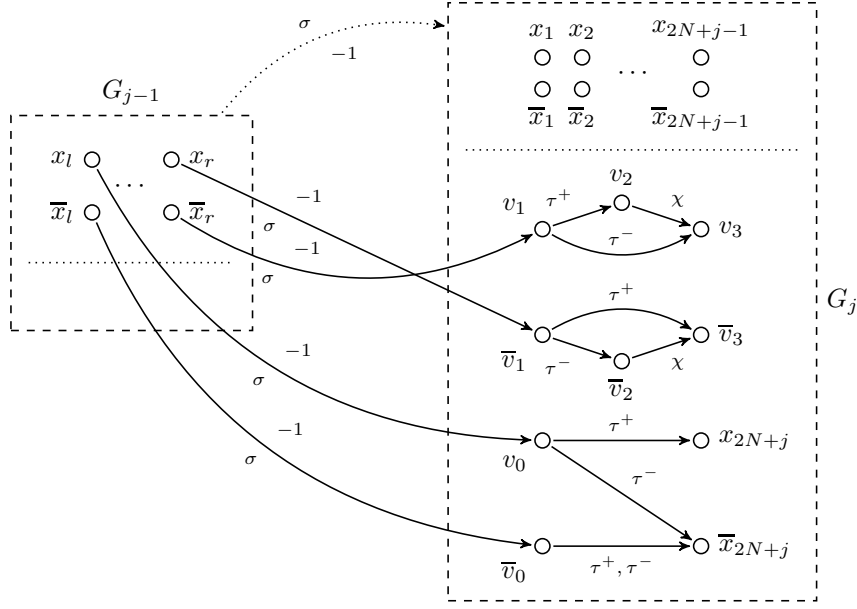
\begin{figure}
	\begin{center}
	\begin{tikzpicture}[v/.style={draw,circle,minimum size=2pt,inner
		sep=2pt},semithick]
	\coordinate (top);
	\coordinate (left) at ($(top)+(-60pt,-50pt)$);
	\coordinate (right) at ($(top)+(68pt,-50pt)$);

	\foreach \s/\t in {-30pt/1, -15pt/2, 30pt/2N+j-1}{
		\node at ($(top)+(\s,-15pt)$) [v,label=above:$x_{\t}$] {};
		\node at ($(top)+(\s,-27pt)$) [v,label=below:$\overline{x}_{\t}$] {};
	}
	\node at ($(top)+(5pt,-21pt)$) {$\cdots$};
	\node[fit = (top) (left) (right), draw=none](id){};
	\draw[-,dotted] (left) -- (right);

	\node at ($(left)+(30pt,-30pt)$) [v,label =above left:$v_1$] (v1) {};
	\node at ($(v1)+(30pt,10pt)$) [v,label=above:$v_2$] (v2) {};
	\node at ($(v2)+(30pt,-10pt)$) [v,label=right:$v_3$] (v3) {};

	\node at ($(v1)+(0,-40pt)$) [v,label =below left:$\overline{v}_1$] (v1') {};
	\node at ($(v1')+(30pt,-10pt)$) [v,label=below:$\overline{v}_2$] (v2') {};
	\node at ($(v2')+(30pt,10pt)$) [v,label=right:$\overline{v}_3$] (v3') {};

	\path
		(v1) edge[bend right] node[el]{$\tau^-$} (v3)
		(v1) edge node[el]{$\tau^+$} (v2)
		(v2) edge node[el]{$\chi$} (v3)
		(v1') edge[bend left] node[el]{$\tau^+$} (v3')
		(v1') edge node[el,swap]{$\tau^-$} (v2')
		(v2') edge node[el,swap]{$\chi$} (v3');
		
	\node at ($(v1')+(0,-40pt)$) [v,label =below left:$v_0$] (v0) {};
	\node at ($(v0)+(0,-40pt)$) [v,label =below left:$\overline{v}_0$](v0'){};
	\node at ($(v0)+(60pt,0)$) [v,label =right:$x_{2N+j}$] (x){};
	\node at ($(v0')+(60pt,0)$) [v,label =right:$\overline{x}_{2N+j}$] (x'){};	

	\path
	(v0) edge node[el]{$\tau^+$} (x)
	(v0) edge node[el]{$\tau^-$} (x')
	(v0') edge node[swap,el](bottom){$\tau^+, \tau^-$} (x');

	\node[fit = (top) (left) (right) (bottom), dashed, label=right:$G_j$](gj){};

	\coordinate (gr) at ($(left)-(3cm,1.5cm)$);
	\coordinate (gl) at ($(gr)-(80pt,0)$);
	\coordinate (gt) at ($(gr)-(40pt,-50pt)$);
	\coordinate (gb) at ($(gr)-(40pt,20pt)$);
	\draw[-,dotted] (gl) -- (gr);

	\node[fit = (gt) (gl) (gr) (gb), dashed, label=above:$G_{j-1}$]{};
	\node[fit = (gt) (gl) (gr), draw=none](id'){};

	\path
		(gt) ++(-30pt, -21pt) node {}
		++(15pt,10pt) node[v, label=left:$x_l$](xl) {}
		++(0,-20pt) node[v, label=left:$\overline{x}_l$](xl') {}
		++(15pt,10pt) node {$\cdots$}
		++(15pt,10pt) node[v, label=right:$x_r$](xr) {}
		++(0,-20pt) node[v,label=right:$\overline{x}_r$](xr') {}
		++(15pt,10pt) node {};

	\path
		(xl) edge[bend right] node[el,swap]{$\sigma$} node[el]{$-1$} (v0)
		(xl') edge[bend right] node[el,swap]{$\sigma$} node[el,]{$-1$} (v0')
		(xr) edge node[el,pos=0.3,swap]{$\sigma$} node[el,pos=0.3]{$-1$} (v1')
		(xr') edge[bend right] node[el,pos=0.3,swap]{$\sigma$}
		node[el,pos=0.3]{$-1$} (v1)
	;
	\path (id') edge[dotted,bend left] node[el]{$\sigma$} node[el,swap]{$-1$} (id);

	\end{tikzpicture}
	\end{center}
	\caption{Gadget for gate $x_l \wedge \neg x_r$ (self-loops not shown)}
	\label{fig:gjgad}
	\end{figure}

	\item \paragraph*{Observation $Chk$} This last observation gadget
	contains $4N+2$ states: $\{x_i, \overline{x}_i \st 1 \leq i \leq
	2N \} \cup \{y,z\}$. There is a $\sigma$-transition of weight $-1$ from
	$\{x_i, \overline{x}_i \st 1 \leq i \leq 2N \}\subseteq G_k$ to their
	corresponding states in $Chk$, and a $\sigma$-transition of weight $-1$
	from $x_{2N+k} \in G_k$ to $y$. There is a $\chi$-transition of weight
	$1$ from $y$ to $z$ and for all other states in $Chk$ there is a
	$\chi$-loop of weight $1$.  There is a $\sigma$-transition of weight $1$
	from all states in $Chk$ to $\bot \in O_2$ and for $N < i \leq 2N$ there
	is a $\sigma$-transition of weight $1$ from $x_i \in Chk$ to $x_{i-N}
	\in O_2$ and from $\overline{x}_i \in Chk$ to $\overline{x}_{i-N} \in
	O_2$.

	\item \paragraph*{Correctness of the construction}
	We present a similar argument to that given for the proof of
	Theorem~\ref{thm:isadeqpure}. Recall the game's initial transition
	is weighted $-2^N$. Further, note that internal transitions in all
	observations can only lead to reaching a good or bad sink or reaching
	the next observation gadget (while nullifying the incoming $-1$ weight).
	Hence, completing $2^N$ full simulations of the circuit is the only way
	of not forming a bad cycle and reaching observation $O_2$ with all
	concrete paths having weight $0$. From there, a mixed cycle can be
	formed by going back to $S$. The latter thus holds if and only if the
	graph encoded succinctly by the given circuit has a Hamiltonian cycle.
\end{proof}

Based on the construction used to prove the above result, we will now show
hardness for forcibly FBC class membership.

\begin{lemma}
	Let $G$ be an MPG with partial observation. Determining if $G$ is
	forcibly FBC is
	\NEXP-hard.
\end{lemma}
\begin{proof}
	Suppose we make the following adjustments to the construction given in the proof
	of Lemma~\ref{lem:hardness-fbc}:
	\begin{itemize}[nolistsep]
		\item Change the weights of incoming transitions to $G_i$
			($i>0$) to $-5$ and the weights of all internal
			$\tau$-transitions to $1$,
		\item Change the weight of the $\sigma'$-transition from $\bot
			\in O_2$ to $S$ to $0$,
		\item Add a new state $\bot$ to all observations other than $S$
			(and $O_2$),
		\item Add a $\sigma$-transition of weight $2^N$ from $S$ to
			$\bot \in O_1$, and
		\item Whenever there is a transition of weight $w$ from $x_i \in
			o$ to $x_j \in o'$ ($o,o'$ and $i,j$ possibly the same)
			add a transition of weight $-w$ from $\bot \in o$ to
			$\bot \in o'$.
	\end{itemize}
	Then the only possible non-mixed lasso in the resulting graph\footnote{We assume
	dead-ends go to a dummy state with a single mixed self-loop.} is one that would
	correspond to a successful traversal of a Hamiltonian cycle. Eve can force the
	play to this cycle if and only if the succinct graph has a Hamiltonian cycle.
\end{proof}

Somewhat surprisingly, for the winner determination problem we have an \EXP
algorithm matching the \EXP-hardness lower bound from games with visible weights.
This is in contrast to the class membership problem in which an exponential
increase in complexity occurs when moving from limited to partial observation.
\begin{theorem}[Winner determination]
	\label{thm:pureWin}
	Let $G$ be a forcibly FBC MPG.
	Determining if Eve wins $G$ is \EXP-complete.
\end{theorem}
\begin{proof}
The lower bound follows from the fact that forcibly FBC games are a
generalization of visible weights games (see Lemma~\ref{lem:pure-visible}),
shown to be \EXP-complete in~\cite{ddgrt10}. For the upper bound, rather than
working on the reachability game $\Gamma'$ associated to $G'$, which is doubly-exponential
in the size of $G$, we instead reduce the problem of determining the winner to
that of solving a safety game which is only exponential in the size of $G$.

Given an MPG with partial observation $G = ( Q, q_I,\Sigma,\Delta,w,\Obs
)$, let $G' = ( Q',
q'_I,\Sigma,\Delta',w',\Obs' )$ be its limited-observation version and
$\Gamma'$ be the finite reachability game, as defined in
Section~\ref{sec:adeqpure-inc}, constructed for $G'$ (not for $G$!). Let
$\mathcal{E} = [-1, 2 W |\Obs'|] \cup \{ \bot \}$ where $W =
\max\{|w(e)| : e \in \Delta\}$, and let $\mathcal{B}' \subseteq \mathcal{B}$
be the set of functions $f : Q \to \mathcal{E}$.

The safety game will be played on $\mathcal{B}'$ with the transitions defined by
$\sigma$-successors. The idea is that a given position $f \in \mathcal{B}'$ of
the safety game corresponds to being in an observation of $G'$, namely
$\supp(f)$. The functions additionally keep track of the minimal weight of all
concrete paths ending in states from $\supp(f)$. However, they do so only up to
the point where a belief cycle is formed. Since $W$ is the biggest absolute
weight in $G$ and in $G'$, and the length of any simple belief path is bounded
by $|\Obs'|$, it suffices to keep track of weights from $\mathcal{E}$.

Formally, the safety game is $\mathcal{S}_G = ( \mathcal{B}',
f'_I, \Sigma, \Delta_{succ}, \mathcal{F}'_{neg} )$ where $f'_I(q_I) = W
|\Obs'|$ and $f'_I(q) = \bot$ for all other $q \in Q$; $(f,\sigma,f') \in
\Delta_{succ}$ if $f'$ is a proper $\sigma$-successor of $f$ where we let
\[
	a + b =
	\begin{cases}
		\bot&\text{if $a=\bot$ or $b=\bot$},\\
		-1 &\text{if $a=-1$, $b=-1$, or $a+b<0$, and}\\
		\min\{a+b, 2 W |Q|\}&\text{otherwise.}
	\end{cases}
\]
$\mathcal{F}'_{neg}$ is the
set of all functions $f \in \mathcal{F}'$ such that $f(q) = -1$ for some $q \in
\supp(f)$. The game is played similar to the reachability game $\Gamma$, \ie
Eve chooses an action $\sigma$ and Adam resolves non-determinism by selecting a
proper $\sigma$-successor. In this case, however, Eve's goal is to avoid
visiting any function in $\mathcal{F}'_{neg}$.

In this safety game (just like in the weighted unfolding) the non-negative
integer values of $f$ give a lower bound for the minimum weights of the concrete
paths ending in the given state (see Lemma~\ref{lem:minPath}). More formally,
since obtaining a $-1$ weight means that henceforth the weight stays $-1$, we
have that if $f(q) \neq \bot$ and $f(q) \geq 0$ then the minimum weight over all
concrete paths starting at $q_I$ and ending at $q$ is at least $f(q) + W
|\Obs'|$. We do not have equality because of the max applied after each sum. If
$f(q) = -1$ then there is a concrete path of weight at most $ - W |\Obs'| - 1$,
because $f_I(q_I) = W|\Obs'|$.  As the winner of a forcibly FAC game can be
resolved in at most $|\Obs'|$ transitions it turns out that this is sufficient
information to determine the winner.

The above observation that non-negative values give lower bounds for concrete
paths ending at the given state implies that if Eve has a strategy to always
avoid $\mathcal{F}'_{neg}$ then $\liminf_{n \to \infty} \frac{\pi[..n]}{n} \geq
0$ for all concrete paths $\pi$ consistent with the play. That is, if Eve has a
winning strategy in $\mathcal{S}_G$ then she has a winning strategy in $G$.

Now suppose Eve has a winning strategy in $G$. It follows from the determinacy
of forcibly FAC games and Theorem~\ref{thm:stratTransfer} that she has a
winning strategy $\lambda$ in $\Gamma'$.  Let $\lambda^*$ be the
translation of $\lambda$ to $G'$ as per Theorem~\ref{thm:stratTransfer}, and let
$M$ denote the set of memory states required for $\lambda^*$.  Clearly
$\lambda^*$ induces a strategy in $\mathcal{S}_G$. We claim this induced
strategy is winning in $\mathcal{S}_G$. Let $\rho = f_0 \sigma_0 \dots$ be any
play in $\mathcal{S}_G$ consistent with $\lambda^*$, and let $\mu_i = g_i^{(0)}
\dots g_i^{(n_i)}$ denote the
$i$-th memory state obtained in the generation of $\rho$ (as in
Lemma~\ref{lem:alwaysGood}).  Then, with a slight adjustment to the proof of
Lemma~\ref{lem:alwaysGood} to account for function values not exceeding $2 W
|\Obs'|$ we have for all $i$ and all $q$:
\begin{align*}
	f_i(q) - W |\Obs'| & \geq g_i^{(n_i)}(q)\\
	& = \min \{w(\pi) \st \pi \in \gamma(\supp(\mu_i)) \text{ and $\pi$ ends at
		$q$}\} \footnotemark\\
	& \geq - W |\Obs'|
\end{align*}
\footnotetext{The step follows from Lemma~\ref{lem:minPath}.}
because $|\mu_i| \leq |\Obs'|$ from the definition of $\Gamma'$.  Thus
$f_i(q) \geq 0$ for all $i$. Hence $\rho$ does not reach $\mathcal{F}'_{neg}$
and is winning for Eve.  Thus $\lambda^*$ is a winning strategy for Eve.

So to determine the winner of $G$, it suffices to determine the winner of
$\mathcal{S}_G$.  This is just the complement of alternating reachability, known
to be decidable in polynomial time (see \eg~\cite{pap03}).  As
\[
	|\mathcal{S}_G| = O(|\mathcal{F}'|^2) = O\left((2 W
		|\Obs'|+1)^{|Q|}\right) = 2^{O(|Q|^2)},
\]
determining the winner of $\mathcal{S}_G$, and hence $G$, is in \EXP.
\end{proof}

\begin{corollary}
	Let $G$ be an FBC MPG. Determining if Eve wins $G$ is \EXP-complete.
\end{corollary}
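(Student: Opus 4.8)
The plan is to derive both bounds from results already in hand; the only genuine work is to place the class of FBC games correctly between two classes whose winner-determination complexity is known.

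For membership in $\EXP$, I would first show that every FBC game is forcibly FBC. By definition $G$ is FBC exactly when the corresponding limited-observation game is FAC, i.e.\ every leaf of the associated reachability game $\Gamma_G'$ lies in $\teve' \cup \tadam'$. The restriction to simple abstract cycles bounds the depth of $\Gamma_G'$ by $|Obs|$, so $\Gamma_G'$ is a \emph{finite} reachability game in which every maximal play terminates in a state winning for one of the two players. Finite reachability games are determined, so one player has a winning strategy to reach his or her target set; hence the limited-observation game is forcibly FAC and $G$ is forcibly FBC. The $\EXP$ algorithm of Theorem~\ref{thm:pureWin} for forcibly FBC games therefore decides the winner of $G$, giving the upper bound.

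For $\EXP$-hardness I would appeal to the visible weights games of~\cite{ddgrt10}, whose winner-determination problem is already known to be $\EXP$-hard. By Lemma~\ref{lem:pure-visible} every visible weights MPG is FBC, so the existing hardness reduction produces only instances lying inside the FBC class. Consequently winner determination \emph{restricted to} FBC MPGs inherits the $\EXP$-hardness, matching the upper bound and yielding $\EXP$-completeness. No new construction is required: the reduction witnessing hardness for the superclass forcibly FBC already lands in the subclass FBC.

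The one point deserving care is the inclusion of FBC within forcibly FBC. The definition of FAC only forbids non-terminal leaves; to conclude forcible termination one must invoke determinacy of the reachability game, and this is legitimate precisely because the simple-cycle bound of $|Obs|$ guarantees $\Gamma_G'$ is a finite arena. I would therefore expect this verification to be the only substantive step, after which the corollary follows immediately by combining Theorem~\ref{thm:pureWin} with Lemma~\ref{lem:pure-visible}.
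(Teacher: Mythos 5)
Your proposal is correct and matches the paper's intended (implicit) argument: the corollary follows from Theorem~\ref{thm:pureWin} because FBC games are forcibly FBC (the FAC condition plus the $|Obs|$ depth bound makes $\Gamma_G'$ a finite reachability game with no draws, hence determined), and the \EXP-hardness already lands inside FBC since the visible weights games of Lemma~\ref{lem:pure-visible} are FBC. The only difference is that you spell out the inclusion FBC $\subseteq$ forcibly FBC explicitly, which the paper leaves implicit.
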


\bibliographystyle{abbrv}
\bibliography{mpay-ii}

\end{document}